\tikzstyle{every picture}=[baseline=-0.25em]
\tikzstyle{none}=[inner sep=0mm]
\tikzstyle{zxnode}=[shape=circle, minimum width=.25cm, inner sep=0.5pt, font=\footnotesize, draw=black]
\tikzstyle{gn}=[zxnode ,fill=green]
\tikzstyle{rn}=[zxnode ,fill=red]
\tikzstyle{H box}=[rectangle,fill=yellow,draw=black,xscale=1,yscale=1,font=\small,inner sep=0.75pt,minimum width=0.15cm,minimum height=0.15cm]
\tikzstyle{ug}=[regular polygon, regular polygon sides=3, fill=red,draw=black,inner sep = 0pt,minimum width=1em]
\tikzstyle{black dot}=[inner sep=0.7mm,minimum width=0pt,minimum height=0pt,fill=black,draw=black,shape=circle]
\tikzstyle{dot}=[black dot]
\tikzstyle{white dot}=[dot,fill=white]
\tikzstyle{zwcross}=[diamond, draw, fill=gray, minimum width=0em, inner sep=1.5pt]
\tikzstyle{st}=[star,star points = 5, fill=white,draw=black,inner sep = 1.2pt,line width=1.2pt]
\tikzstyle{uglabel}=[rounded corners=0.2em,fill=red!20,inner sep=0.1em,font=\scriptsize, anchor=west, xshift=0.1em, yshift=-0.2em,opacity=1]
\tikzstyle{none}=[inner sep=0mm]
\tikzstyle{every loop}=[]
\def\fig{}
\newcommand{\eq}[2][~~]{
#1
\underset{\substack{#2}}{=}
#1
}
\newcommand{\equi}[2][\quad]{
#1
\underset{\substack{#2}}{\iff}
#1
}
\newcommand{\titlerule}[1]{\noindent\begin{minipage}{\columnwidth}\begin{center}
\rule{(\columnwidth-\widthof{#1})/2}{0.5pt}#1\rule{(\columnwidth-\widthof{#1})/2}{0.5pt}
\end{center}
\vspace{7em}
\end{minipage}
\vspace{-7.4em}}
\newcommand{\interp}[1] {\left\llbracket #1 \right\rrbracket}
\renewcommand{\arccos}[1]{\operatorname{arccos}\left(#1\right)}
\newcommand{\fit}[1] {\resizebox{\columnwidth}{!}{#1}}
\renewcommand{\implies}{\quad\Rightarrow\quad}
\renewcommand{\mod}{\bmod}
\newcommand{\frag}[1]{$\frac{\pi}{#1}$-fragment}
\renewcommand{\phi}{\varphi}
\newcommand{\annoted}[3]{\overbrace{#3}^{#2}\left.\vphantom{#3}\right\rbrace{\scriptstyle #1}}
\def \zx {\textnormal{ZX}\xspace}
\def \zxcalc {\text{ZX-Calculus}\xspace}
\def \zw {\textnormal{ZW}\xspace}
\def \rulecolor {red}
\newcommand{\callrule}[2]{\hyperlink{r:#1}{\textnormal{\color{\rulecolor}{(#2)}}}\xspace}
\newcommand{\so}[1]{\callrule{rules#1}{S1}}
\newcommand{\st}[1]{\callrule{rules#1}{S2}}
\newcommand{\sth}[1]{\callrule{rules#1}{S3}}
\newcommand{\e}[1]{\callrule{rules#1}{E}}
\newcommand{\bo}[1]{\callrule{rules#1}{B1}}
\newcommand{\bt}[1]{\callrule{rules#1}{B2}}
\newcommand{\kt}[1]{\callrule{rules#1}{K}}
\newcommand{\h}[1]{\callrule{rules#1}{H}}
\newcommand{\supp}[1]{\callrule{rules#1}{SUP}}
\newcommand{\com}[1]{\callrule{rules#1}{C}}
\newcommand{\nfi}[1]{\callrule{rules#1}{BW}}
\newcommand{\add}[1]{\callrule{rules#1}{A}}
\newcommand{\rz}[1]{~\begin{tikzpicture}
	\begin{pgfonlayer}{nodelayer}
		\node [style=gn] (0) at (0, 0.25) {#1};
		\node [style=none] (1) at (0, -0.25) {};
	\end{pgfonlayer}
	\begin{pgfonlayer}{edgelayer}
		\draw [style=none] (0) to (1.center);
	\end{pgfonlayer}
\end{tikzpicture}~}
\newcommand{\rx}[1]{~\begin{tikzpicture}
	\begin{pgfonlayer}{nodelayer}
		\node [style=rn] (0) at (0, 0.25) {#1};
		\node [style=none] (1) at (0, -0.25) {};
	\end{pgfonlayer}
	\begin{pgfonlayer}{edgelayer}
		\draw [style=none] (0) to (1.center);
	\end{pgfonlayer}
\end{tikzpicture}~}
\renewcommand*{\arraystretch}{0.8}
\title{Diagrammatic Reasoning beyond Clifford+T~Quantum~Mechanics} 
\author{
Emmanuel Jeandel
\and Simon Perdrix
\and Renaud Vilmart
\institute{Universit\'e de Lorraine, CNRS, Inria, LORIA, F 54000 Nancy, France}
\\\email{emmanuel.jeandel@loria.fr}$\quad$
\email{simon.perdrix@loria.fr}$\quad$
\email{renaud.vilmart@loria.fr}
}
\begin{document}   

\maketitle
\begin{abstract}
The ZX-Calculus is  a graphical language for diagrammatic reasoning in quantum mechanics and quantum information theory. An axiomatisation has recently been proven to be complete for an approximatively universal fragment of quantum mechanics, the so-called Clifford+T fragment.
We focus here on the expressive power of this axiomatisation beyond  Clifford+T Quantum mechanics. We consider the full pure qubit quantum mechanics, and mainly prove two results: (i) First, the axiomatisation for Clifford+T quantum mechanics is also complete for all equations involving some kind of \emph{linear} diagrams.
The linearity of the diagrams reflects the phase group structure, an essential feature of the ZX-calculus. In particular all the axioms of the ZX-calculus are involving linear diagrams. 
 (ii) We also show that the axiomatisation for Clifford+T is not complete in general but can be completed by adding a single (non linear) axiom, providing a simpler axiomatisation of the ZX-calculus for pure quantum mechanics than the one recently introduced by Ng\&Wang.
\end{abstract}

\section{Introduction}
The ZX-calculus, introduced by Coecke and Duncan \cite{interacting} is a graphical language for pure state qubit quantum mechanics. The ZX-calculus has multiple applications in quantum information theory \cite{picturing-qp}, including the foundations \cite{toy-model-graph,duncan2016hopf}, measurement-based quantum computation \cite{mbqc,horsman2011quantum,duncan2013mbqc} or quantum error correcting codes \cite{verifying-color-code,duncan2014verifying,chancellor2016coherent,de2017zx}, and can be used through the interactive theorem prover Quantomatic \cite{quanto,kissinger2015quantomatic}.

The ZX-calculus is universal: any quantum evolution can be represented by a ZX-diagram. ZX-diagrams are parametrised by angles, and various fragments of the language have been considered, based on some restrictions on the angles: the $\frac \pi p$-fragment consists in considering  only the diagrams made with angles multiple of $\frac \pi p$. The $\frac \pi 2$-fragment  (resp. $\pi$-) corresponds to stabilizer quantum mechanics (resp. real stabilizer quantum mechanics) and are not universal for quantum mechanics, even approximately. The $\frac \pi 4$-fragment corresponds to the so called Clifford+T quantum mechanics and is approximately universal: any quantum evolution can be approximated in this fragment with arbitrary accuracy. 

The ZX-calculus also comes with a powerful axiomatisation which can be used to transform a diagram into another diagram representing the same quantum evolution. The axioms of the ZX-calculus are given in Figure \ref{fig:ZX_rules}. Some of the axioms are parametrised by variables, meaning that the axioms are true for all possible values of these variables. Notice that all the variables are used in a linear fashion, i.e. all the angles are some linear combinations of variables and constants, like in \so{} or \supp{} for instance.  The use of such linear diagrams in the axiomatisation captures the phase group structure, one of the two fundamental quantum features (with the complementary observables) of the ZX-calculus \cite{interacting}. 

Completeness of the axiomatisation is an essential feature: the axiomatisation is complete if for any pair of diagrams representing the same quantum evolution, one can use the axioms of the language to transform one diagram into the other. The ZX-calculus has been proved to be complete for the $\pi$- and $\frac \pi 2$-fragments of the ZX-calculus \cite{pivoting,pi_2-complete}. Recently the axiomatisation given in Figure \ref{fig:ZX_rules} has been proved to be completed for the \frag 4, providing the first complete axiomatisation for an approximately universal fragment of the ZX-calculus \cite{JPV}. 
This last result relies on the completeness of another graphical language which represents integer matrices, called ZW-Calculus \cite{zw}. The ZW-Calculus has since been extended to represent all matrices over $\mathbb{C}$ \cite{Amar}. This achievement gave hope for a universal completion of the ZX-Calculus, and soon enough, a first result appeared \cite{NgWang}. To make the ZX-calculus complete for the full quantum mechanics, two new generators and a large amount of axioms (32 axioms versus 12 for the axiomatisation for Clifford+T quantum mechanics) have been introduced, some of them being non linear.  

One can wonder whether this result can be improved. We address this question in two steps: (i) First, we prove that the complete axiomatisation for Clifford+T quantum mechanics can also be used to prove a significant amount of equations beyond this fragment: all true equations involving diagrams which are linear with constants multiple of $\frac \pi 4$ can be derived. We point out with several examples that this result can be used to derive some new non-trivial equations. (ii) Then we show that this axiomatisation is not complete in general, and we propose an axiomatisation for the full pure qubit quantum mechanics which consists in adding a single (non-linear) axiom.

The paper is structured as follows. The ZX-calculus is presented in section \ref{sec:zx-pi_4}. Section \ref{sec:thm-param-diag} is dedicated to the proof that any true equation involving diagrams linear in some variables with constants multiple of $\frac \pi 4$ can be derived in the ZX-calculus. In sections \ref{sec:finite-case-based-reasoning} and \ref{sec:diagram-substitution} we show how this result can be used to prove that some non trivial equations can be derived in the ZX-calculus, in a non-necessarily constructive way. Section \ref{sec:completion} is dedicated to the completion of the ZX-calculus for the full pure qubit quantum mechanics: first, we prove that the ZX-calculus is not complete for pure qubit quantum mechanics; then, using an interpretation from the ZX-calculus to the ZW-Calculus we show that a single additional axiom suffices to make the language complete.

\section{ZX-Calculus}

\label{sec:zx-pi_4}
\subsection{Syntax and Semantics}

A ZX-diagram $D:k\to l$ with $k$ inputs and $l$ outputs is generated by:\\
\begin{center}
\bgroup
\def\arraystretch{2.5}
{\begin{tabular}{|cc|cc|}
\hline
$R_Z^{(n,m)}(\alpha):n\to m$ & 
\InputIfFileExists{gn-alpha.tikz}{}{\input{./figures/gn-alpha.tikz}}
 & $R_X^{(n,m)}(\alpha):n\to m$ & 
\InputIfFileExists{rn-alpha.tikz}{}{\input{./figures/rn-alpha.tikz}}
\\[4ex]\hline
$H:1\to 1$ & 
\begin{tikzpicture}
	\begin{pgfonlayer}{nodelayer}
		\node [style={H box}] (0) at (0, 0) {};
		\node [style=none] (1) at (0, 0.5) {};
		\node [style=none] (2) at (0, -0.5) {};
	\end{pgfonlayer}
	\begin{pgfonlayer}{edgelayer}
		\draw (2.center) to (1.center);
	\end{pgfonlayer}
\end{tikzpicture}
}
 & $e:0\to 0$ & 
\InputIfFileExists{empty-diagram.tikz}{}{\input{./figures/empty-diagram.tikz}}
\\\hline
$\mathbb{I}:1\to 1$ & 
\begin{tikzpicture}
	\begin{pgfonlayer}{nodelayer}
		\node [style=none] (0) at (0, 0.2499999) {};
		\node [style=none] (1) at (0, -0.2499999) {};
	\end{pgfonlayer}
	\begin{pgfonlayer}{edgelayer}
		\draw (0.center) to (1.center);
	\end{pgfonlayer}
\end{tikzpicture}}
 & $\sigma:2\to 2$ & 
\InputIfFileExists{crossing.tikz}{}{\input{./figures/crossing.tikz}}
\\\hline
$\epsilon:2\to 0$ & 
\begin{tikzpicture}
	\begin{pgfonlayer}{nodelayer}
		\node [style=none] (0) at (-0.2500001, 0.2500001) {};
		\node [style=none] (1) at (0.2500001, 0.2500001) {};
	\end{pgfonlayer}
	\begin{pgfonlayer}{edgelayer}
		\draw [bend right=90, looseness=1.75] (0.center) to (1.center);
	\end{pgfonlayer}
\end{tikzpicture}}
 & $\eta:0\to 2$ & 
\begin{tikzpicture}
	\begin{pgfonlayer}{nodelayer}
		\node [style=none] (0) at (-0.2500001, -0) {};
		\node [style=none] (1) at (0.2500001, -0) {};
	\end{pgfonlayer}
	\begin{pgfonlayer}{edgelayer}
		\draw [bend left=90, looseness=1.75] (0.center) to (1.center);
	\end{pgfonlayer}
\end{tikzpicture}}
\\\hline
\end{tabular}}
\egroup\\
where $n,m\in \mathbb{N}$ and $\alpha \in \mathbb{R}$. The generator $e$ is the empty diagram.
\end{center}

\vspace{0.2cm}
and the two compositions:
\begin{itemize}
\item Spacial Composition: for any $D_1:a\to b$ and $D_2:c\to d$, $D_1\otimes D_2:a+c\to b+d$ consists in placing $D_1$ and $D_2$ side by side, $D_2$ on the right of $D_1$.
\item Sequential Composition: for any $D_1:a\to b$ and $D_2:b\to c$, $D_2\circ D_1:a\to c$ consists in placing $D_1$ on the top of $D_2$, connecting the outputs of $D_1$ to the inputs of $D_2$.
\end{itemize}

The standard interpretation of the ZX-diagrams associates to any diagram $D:n\to m$ a linear map $\interp{D}:\mathbb{C}^{2^n}\to\mathbb{C}^{2^m}$ inductively defined as follows:\\
\titlerule{$\interp{.}$}
\[ \interp{D_1\otimes D_2}:=\interp{D_1}\otimes\interp{D_2} \qquad 
\interp{D_2\circ D_1}:=\interp{D_2}\circ\interp{D_1}\]
\[\interp{
\InputIfFileExists{empty-diagram.tikz}{}{\input{./figures/empty-diagram.tikz}}
~}:=\begin{pmatrix}
1
\end{pmatrix} \qquad
\interp{~
}
~~}:= \begin{pmatrix}
1 & 0 \\ 0 & 1\end{pmatrix}\qquad
\interp{~
}
~}:= \frac{1}{\sqrt{2}}\begin{pmatrix}1 & 1\\1 & -1\end{pmatrix}\]
$$\interp{
\InputIfFileExists{crossing.tikz}{}{\input{./figures/crossing.tikz}}
}:= \begin{pmatrix}
1&0&0&0\\
0&0&1&0\\
0&1&0&0\\
0&0&0&1
\end{pmatrix} \qquad
\interp{\raisebox{-0.25em}{$
}
$}}:= \begin{pmatrix}
1&0&0&1
\end{pmatrix} \qquad
\interp{\raisebox{-0.35em}{$
}
$}}:= \begin{pmatrix}
1\\0\\0\\1
\end{pmatrix}$$
For any $\alpha\in\mathbb{R}$, $\interp{\begin{tikzpicture}
	\begin{pgfonlayer}{nodelayer}
		\node [style=gn] (0) at (0, -0) {$\alpha$};
	\end{pgfonlayer}
\end{tikzpicture}}:=\begin{pmatrix}1+e^{i\alpha}\end{pmatrix}$, and  for any $n,m\geq 0$ such that $n+m>0$:\vspace{-1em}
$$
\interp{
\InputIfFileExists{gn-alpha.tikz}{}{\input{./figures/gn-alpha.tikz}}
}:=
\annoted{2^m}{2^n}{\begin{pmatrix}
  1 & 0 & \cdots & 0 & 0 \\
  0 & 0 & \cdots & 0 & 0 \\
  \vdots & \vdots & \ddots & \vdots & \vdots \\
  0 & 0 & \cdots & 0 & 0 \\
  0 & 0 & \cdots & 0 & e^{i\alpha}
 \end{pmatrix}}
$$
\begin{minipage}{\columnwidth}
$$\interp{
\InputIfFileExists{rn-alpha.tikz}{}{\input{./figures/rn-alpha.tikz}}
}:=\interp{~
}
~}^{\otimes m}\circ \interp{
\InputIfFileExists{gn-alpha.tikz}{}{\input{./figures/gn-alpha.tikz}}
}\circ \interp{~
}
~}^{\otimes n}$$ \\
$\left(\text{where }M^{\otimes 0}=\begin{pmatrix}1\end{pmatrix}\text{ and }M^{\otimes k}=M\otimes M^{\otimes k-1}\text{ for any }k\in \mathbb{N}^*\right)$.\\
\rule{\columnwidth}{0.5pt}
\end{minipage}\\

To simplify, the red and green nodes will be represented empty when holding a 0 angle:
\[ 
\InputIfFileExists{gn-empty-is-gn-zero.tikz}{}{\input{./figures/gn-empty-is-gn-zero.tikz}}
 \qquad\text{and}\qquad 
\InputIfFileExists{rn-empty-is-rn-zero.tikz}{}{\input{./figures/rn-empty-is-rn-zero.tikz}}
 \]

\subsection{Complete axiomatisation for Clifford+T}

The complete axiomatisation of the ZX-calculus for Clifford+T introduced in \cite{JPV} is given in Figure \ref{fig:ZX_rules}.

\begin{figure*}[!htb]
 \centering
 \hypertarget{r:rules}{}
 \begin{tabular}{|c@{$\qquad$}c|}
   \hline
   & \\
   
\InputIfFileExists{spider-1.tikz}{}{\input{./figures/spider-1.tikz}}
& 
\InputIfFileExists{s2-simple.tikz}{}{\input{./figures/s2-simple.tikz}}
\\
   & \\
   
\InputIfFileExists{induced_compact_structure-2wire.tikz}{}{\input{./figures/induced_compact_structure-2wire.tikz}}
&
\InputIfFileExists{bicolor_pi_4_eq_empty.tikz}{}{\input{./figures/bicolor_pi_4_eq_empty.tikz}}
\\
   & \\
   
\InputIfFileExists{b1s.tikz}{}{\input{./figures/b1s.tikz}}
& 
\InputIfFileExists{b2s.tikz}{}{\input{./figures/b2s.tikz}}
\\
   & \\
   
\InputIfFileExists{euler-decomp-scalar-free.tikz}{}{\input{./figures/euler-decomp-scalar-free.tikz}}
&  
\InputIfFileExists{h2.tikz}{}{\input{./figures/h2.tikz}}
\\
   & \\
   
\InputIfFileExists{k2s.tikz}{}{\input{./figures/k2s.tikz}}
& 
\InputIfFileExists{former-supp.tikz}{}{\input{./figures/former-supp.tikz}}
 \\
   & \\
   
\InputIfFileExists{commutation-of-controls-general-simplified.tikz}{}{\input{./figures/commutation-of-controls-general-simplified.tikz}}
&
\InputIfFileExists{BW-simplified.tikz}{}{\input{./figures/BW-simplified.tikz}}
\\
   & \\
   \hline
  \end{tabular}
 \caption[]{Set of rules for the Clifford+T \zxcalc with scalars. All of these rules also hold when flipped upside-down, or with the colours red and green swapped. The right-hand side of (E) is an empty diagram. (...) denote zero or more wires, while (\protect\rotatebox{45}{\raisebox{-0.4em}{$\cdots$}}) denote one or more wires.
 }
 \label{fig:ZX_rules}
\end{figure*}

These rules come together with a set of implicit axioms aggregated under the paradigm ``Only Topology Matters'', which states that the way the wires are bent or cross each other does not matter. What only matters is whether two dots are connected or not. Such rules are:
\[\fit{
\InputIfFileExists{bent-wire.tikz}{}{\input{./figures/bent-wire.tikz}}
}\]
\[{
\InputIfFileExists{bent-wire-2.tikz}{}{\input{./figures/bent-wire-2.tikz}}
}\]

The equality between diagrams is preserved when axioms are applied locally, which means that for any three diagrams of the ZX-Calculus, $D_1, D_2$, and $D$, if $\zx\vdash D_1=D_2$, then:\\
\renewcommand*{\arraystretch}{1.2}
\begin{tabular}{@{$\qquad$}l@{$\qquad$}l}
$\bullet\zx\vdash D_1\circ D = D_2\circ D$ & $\bullet \zx\vdash D\circ D_1 = D\circ D_2$\\
$\bullet\zx\vdash D_1\otimes D = D_2\otimes D$ & $\bullet \zx\vdash D\otimes D_1 = D\otimes D_2$
\end{tabular}\\
\renewcommand*{\arraystretch}{0.8}
where $\zx\vdash D_1 = D_2$ means that $D_1$ can be transformed into $D_2$ using the axioms of the ZX-Calculus. 

Notice that some rules are specific to the $\frac \pi 4$ angle, like \e{} or \nfi{}, whereas some others, \so{}, \h{}, \kt{}, \supp{} and \com{} are parametrised by angles that can take whatever value in $\mathbb{R}$. In the following, ZX will denote either the set of general diagrams (with angles in $\mathbb{R}$) or the set of general rules in Figure \ref{fig:ZX_rules}.

\subsection{Variables and Constants}

It is customary to view some angles in the ZX-diagrams as variables, in order to prove families of equalities. For instance, the rule \so{} displays two variables $\alpha$ and $\beta$, and potentially gives an infinite number of equalities. Notice that in the axioms of the ZX-calculus, the variables are used in a linear way, reflecting the phase group structure. 

\begin{definition} 
A ZX-diagram is linear in $\alpha_1, \ldots, \alpha_k$ with constants in $C\subseteq \mathbb R$, if it is generated by $R_Z^{(n,m)}(E)$, $R_X^{(n,m)}(E)$, $H$, $e$, $\mathbb I$, $\sigma$, $\epsilon$, $\eta$, and the spacial and sequential compositions, where $n,m\in \mathbb  N$, and $E$ is of the form $\sum_{i} n_i \alpha_i+c$, with $n_i\in \mathbb Z$ and $c\in C$. 
\end{definition}

Notice that all the diagrams in Figure \ref{fig:ZX_rules} are linear in $\alpha, \beta, \gamma$ with constants in $\frac \pi 4 \mathbb Z$. A diagram linear in $\alpha_1, \ldots,  \alpha_k$ is denoted $D(\alpha_1, \ldots, \alpha_k)$, or more compactly $D(\vec \alpha)$ with $\vec \alpha = \alpha_1, \ldots, \alpha_k$. Obviously, if $D(\alpha)$ is a diagram linear in $\alpha$, $D(\pi/2)$ denotes the ZX-diagram where all occurrences of $\alpha$ are replaced by $\pi/2$.

\section{Proving Equalities beyond Clifford+T}
\label{sec:thm-param-diag}

While the set of rules of Figure \ref{fig:ZX_rules} is complete for the Clifford+T fragment of the ZX-calculus, it can also prove a lot of equalities for the general ZX-calculus, when the rules \so{}, \h{}, \kt{}, \com{} are supposed to hold for all angles rather than angles in the \frag4.

In fact, it can prove all equalities that are valid for linear diagrams with constants multiple of $\frac \pi 4$, in the following sense:

\begin{theorem}
\label{thm:provability}  
For any ZX-diagrams $D_1(\vec \alpha)$ and $D_2(\vec \alpha)$ linear in $\vec \alpha=\alpha_1, \ldots, \alpha_k$ with constants in $\frac \pi 4 \mathbb Z$, 
$$  \forall \vec \alpha \in \mathbb R^k, \interp{D_1(\vec \alpha)}= \interp{D_2(\vec \alpha)}       \,\,\,\Leftrightarrow\,\,\,  \forall \vec \alpha \in \mathbb R^k, \zx\vdash D_1(\vec \alpha) = D_2(\vec \alpha)       $$
\end{theorem}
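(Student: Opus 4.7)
The $\Leftarrow$ direction follows immediately from the soundness of each rule of Figure~\ref{fig:ZX_rules}. For the forward implication, my plan is to reduce to the Clifford+T completeness theorem of~\cite{JPV} by using the linearity hypothesis to separate the $\vec\alpha$-dependence from the Clifford+T content of the diagrams.

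The first step is \emph{variable isolation}. The spider-fusion rule \so{} implies that a spider carrying any angle of the form $\sum_i n_i\alpha_i + c$ with $n_i \in \mathbb Z$ and $c \in \frac{\pi}{4}\mathbb Z$ can be equivalently written as the fusion of a constant $c$-spider with $|n_i|$ unary spiders labelled $\pm\alpha_i$. Applying this rewrite locally at every spider of $D(\vec\alpha)$ yields a ZX-derivation $D(\vec\alpha) = D^{\mathrm{sep}}(\vec\alpha)$ in which each variable $\alpha_i$ occurs only as the phase of a unary $Z$-spider attached to an otherwise fixed Clifford+T ``skeleton''. By padding with trivial loops $\rz{\alpha_i}\circ\rz{-\alpha_i}=\rz{0}$ if necessary, one may further assume that $D_1^{\mathrm{sep}}$ and $D_2^{\mathrm{sep}}$ carry the same number of $\alpha_i$-boxes for each~$i$.

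Once the variables are isolated in this way, each matrix entry of $\interp{D_j^{\mathrm{sep}}(\vec\alpha)}$ is a Laurent polynomial in the $e^{i\alpha_i}$ whose coefficients are obtained by projecting the variable boxes onto computational basis states, and therefore lie entirely in the Clifford+T ring. The semantic hypothesis $\interp{D_1(\vec\alpha)}=\interp{D_2(\vec\alpha)}$ for every $\vec\alpha \in \mathbb R^k$ is then equivalent to equality, for every choice of projection pattern on the $\alpha_i$-boxes, of two purely Clifford+T matrices; by~\cite{JPV} each such pointwise equality is a theorem of ZX.

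The main obstacle is to promote these finitely many pointwise Clifford+T derivations into a single \emph{symbolic} ZX-derivation valid for all $\vec\alpha$. The decisive asset is that the general rules \so{}, \h{}, \kt{}, \supp{}, \com{} of Figure~\ref{fig:ZX_rules} hold at arbitrary angles, which lets one hope to drive both $D_j^{\mathrm{sep}}(\vec\alpha)$ into a common normal form whose Clifford+T skeleton is dictated by the polynomial coefficients above. Constructing such a normal form---presumably by lifting a ZW normal form in the spirit of~\cite{zw,Amar} to polynomially-parameterised matrix entries and then transporting the derivation back along the ZX$\leftrightarrow$ZW translation used in~\cite{JPV}---is where most of the technical work resides.
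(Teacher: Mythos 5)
Your opening move---spider-fusing each phase $\sum_i n_i\alpha_i + c$ into a $c$-spider plus unary $\pm\alpha_i$-boxes, then balancing the counts---is exactly the content of the paper's Proposition~\ref{prop:var2inp}, so that part is on track. Two things go wrong after that.

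First, there is a logical error in the ``projection pattern'' step. Once the variable is isolated, $D_j'\circ\theta_r(\alpha)$ with $\theta_r(\alpha)=(\ket 0 + e^{i\alpha}\ket 1)^{\otimes r}$ expands as $\sum_{h=0}^{r} C^{(j)}_h e^{ih\alpha}$ where $C^{(j)}_h = \sum_{|x|_1=h} \interp{D_j'}\ket{x}$. The hypothesis ``equal for all $\alpha$'' forces only $C^{(1)}_h = C^{(2)}_h$ for each Hamming weight $h$, i.e.\ $r+1$ linear constraints; it does \emph{not} force the $2^r$ individual pointwise equalities $\interp{D_1'}\ket{x}=\interp{D_2'}\ket{x}$. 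The paper is explicit that for $r>1$ the states $\theta_r(\alpha)$ do not span $\mathbb C^{2^r}$, which is precisely why a naive projection onto basis states fails. So the claimed equivalence ``semantic hypothesis $\iff$ equality for every projection pattern'' is false in the direction you need.

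Second---and this is where the real gap is---you correctly identify the central obstacle (turning finitely many pointwise Clifford+T derivations into one uniform symbolic derivation) but then only gesture at a solution: lifting a ZW normal form to polynomially-parameterised entries. That is not what the paper does and you give no argument that it can be made to work. The paper sidesteps the uniformization problem entirely with a projector trick: it constructs a single Clifford+T diagram $P_r$ whose interpretation is a projector onto $\mathrm{span}\{\interp{\theta_r(\alpha)}:\alpha\in\mathbb R\}$ (Lemmas~\ref{lem:alphas-on-M} and~\ref{lem:rank}), shows that the semantic hypothesis is equivalent to the single Clifford+T equation $\interp{D_1'\circ P_r}=\interp{D_2'\circ P_r}$ (Lemma~\ref{lem:equivalence-Pk}), applies Clifford+T completeness once to get $\zx\vdash D_1'\circ P_r = D_2'\circ P_r$, and then recovers the parametrised equation by composing with $\theta_r(\alpha)$ and using the ZX-derivable absorption $\zx\vdash P_r\circ\theta_r(\alpha)=\theta_r(\alpha)$. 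That last absorption is what makes the derivation uniform in $\alpha$; nothing in your proposal plays its role. Without the projector (or some comparable device), the proposal does not constitute a proof.
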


%
%
%
%
%
%

The proof essentially relies on the completeness of the $\pi/4$-fragment of the ZX-calculus: the variables are first turned into inputs of the diagrams (Prop. \ref{prop:var2inp} and \ref{prop:vars2inp}) and then replaced by some constant diagram in the \frag4 (Lem. \ref{lem:equivalence-X} and \ref{lem:equivalence-Pk}). To simplify the proofs, we will first consider the case where a single variable -- with potentially several occurrences --  is involved in the equation, the general case being similar and addressed in section \ref{sec:multiple-variables}. 

\subsection{From variables to inputs}

We show in this section that, given an equation involving diagrams linear in some variable $\alpha$, the variables can be \emph{extracted}, splitting  the diagrams into two parts: a collection of points (points $\alpha$) and a constant diagram independent of the variables. 

First we define the multiplicity of a variable in an equation:

\begin{definition}
For any $D_1(\alpha), D_2(\alpha): n\to m$ two ZX-diagrams linear in $\alpha$, the multiplicity of $\alpha$ in the equation $D_1(\alpha) = D_2(\alpha)$ is defined as:
$$\mu_\alpha = \max_{i\in \{1,2\}}\left(\mu^+_\alpha(D_i(\alpha))\right)  + \max_{i\in \{1,2\}}\left(\mu^-_\alpha(D_i(\alpha))\right)$$
 where 
$\mu^+_\alpha(D)$ (resp. $\mu^-_\alpha(D)$) is the number of occurrences of $\alpha$ (resp. -$\alpha$) in $D$,  inductively defined as   \\
$\mu^+_\alpha(R_Z^{(n,m)}(\ell\alpha+c))=\mu^+_\alpha(R_X^{(n,m)}(\ell\alpha+c))=\begin{cases} \ell &\text{if $\ell>0$}\\0&\text{otherwise}\end{cases}$\\
$\mu^-_\alpha(R_Z^{(n,m)}(\ell\alpha+c))=\mu^-_\alpha(R_X^{(n,m)}(\ell\alpha+c))=\begin{cases} -\ell&\text{if $\ell<0$}\\0&\text{otherwise}\end{cases}$\\
$\forall \diamond \in \{+,-\}$, 
$\mu^\diamond_\alpha(D\otimes D') = \mu^\diamond_\alpha(D\circ D') = \mu^\diamond_\alpha(D)+\mu^\diamond_\alpha(D')$\\
$\mu^\diamond_\alpha(H)=\mu^\diamond_\alpha(e)=\mu^\diamond_\alpha(\mathbb I)=\mu^\diamond_\alpha(\sigma)=\mu^\diamond_\alpha(\epsilon)=\mu^\diamond_\alpha(\eta)=0$

\end{definition}



\begin{proposition}\label{prop:var2inp}
For any $D_1(\alpha), D_2(\alpha): n\to m$ two ZX-diagrams linear in $\alpha$ with constants in $\frac \pi 4$ $\mathbb Z$, there exist  $D_1', D'_2:r\to n+m$ two ZX-diagrams with angles multiple of $\frac \pi 4$ such that, for any $\alpha \in \mathbb R$, the equivalence 
\begin{equation}\label{eq:varasinputs}
\zx\vdash D_1(\alpha)= D_2(\alpha) \equi[\ ]{}\zx\vdash D_1'\circ \theta_r(\alpha) =  D_2'\circ \theta_r(\alpha)
\end{equation}
is provable using the axioms of the ZX-calculus, where $r$ is the multiplicity of $\alpha$ in $D_1(\alpha) = D_2(\alpha)$, and $\theta_r(\alpha)= \left(R_Z^{(0,1)}(\alpha)\right)^{\otimes r}$.\\
Pictorially: 
\def\fig{theta_r-alpha-on-diagrams}
\[\zx\vdash\begin{tikzpicture}
	\begin{pgfonlayer}{nodelayer}
		\node [style=dot] (39)  at (-0.125, 0.5) {};
		\node [style=dot] (40)  at (-0.125, 0.25) {};
		\node [style=white dot] (41)  at (-0.125, -0.0) {};
		\node [style=none, anchor=west] (42)  at (0.125, -0.0) {\footnotesize $\rho e^{i\theta}$};
		\node [style=none] (43)  at (-0.125, -0.5) {};
	\end{pgfonlayer}
	\begin{pgfonlayer}{edgelayer}
		\draw (39) to (43.center);
		\draw [style=none, in=135, out=45, loop] (39) to ();
	\end{pgfonlayer}
\end{tikzpicture}\eq{}\begin{tikzpicture}
	\begin{pgfonlayer}{nodelayer}
		\node [style=gn] (0)  at (-1.25, -0.5) {-$\gamma$};
		\node [style=gn] (1)  at (-1.25, 0.5) {$\gamma$};
		\node [style=rn] (2)  at (-1.0, -0.0) {$\pi$};
		\node [style=gn] (3)  at (-0.5, -0.0) {$\theta$};
		\node [style=rn] (4)  at (-0.5, 0.5) {};
		\node [style=none] (5)  at (-0.5, -0.75) {};
		\node [style=rn] (6)  at (0.0, -0.0) {};
		\node [style=gn] (7)  at (0.25, -0.5) {-$\beta$};
		\node [style=gn] (8)  at (0.25, 0.5) {$\beta$};
		\node [style=none] (9)  at (1.25, 0.5) {$)$};
		\node [style=none] (10)  at (0.75, 0.5) {$($};
		\node [style=gn] (11)  at (1.0, 0.5) {};
		\node [style=none, anchor=west] (12)  at (1.25, 0.75) {\scriptsize $\otimes n$};
		\node [style=gn] (13)  at (1.0, -0.75) {};
		\node [style=none, xshift=4pt, anchor=east] (14)  at (0.75, -0.5) {$\left(\vphantom{\rule{1pt}{1.5em}}\right.$};
		\node [style=none, anchor=west] (15)  at (1.25, -0.0) {\scriptsize $\otimes 3$};
		\node [style=rn] (16)  at (1.0, -0.25) {};
		\node [style=none, anchor=west, xshift=-4pt] (17)  at (1.25, -0.5) {$\left.\vphantom{\rule{1pt}{1.5em}}\right)$};
	\end{pgfonlayer}
	\begin{pgfonlayer}{edgelayer}
		\draw (1) to (2);
		\draw (2) to (0);
		\draw (2) to (6);
		\draw (4) to (5.center);
		\draw (6) to (7);
		\draw (8) to (6);
		\draw [bend right=45, looseness=1.00] (13) to (16);
		\draw [bend right=45, looseness=1.00] (16) to (13);
		\draw (16) to (13);
	\end{pgfonlayer}
\end{tikzpicture}\equi[\ ]{}\zx\vdash\begin{tikzpicture}
	\begin{pgfonlayer}{nodelayer}
		\node [style=none] (19)  at (-0.125, -0.75) {};
		\node [style=none, anchor=west] (20)  at (0.875, 0.75) {\scriptsize $\otimes n$};
		\node [style=rn] (21)  at (0.625, -0.25) {};
		\node [style=none] (22)  at (0.375, 0.5) {$($};
		\node [style=gn] (23)  at (0.625, 0.5) {};
		\node [style=gn] (24)  at (0.625, -0.75) {};
		\node [style=none, xshift=-4pt, anchor=west] (25)  at (0.875, -0.5) {$\left.\vphantom{\rule{1pt}{1.5em}}\right)$};
		\node [style=none] (26)  at (0.875, 0.5) {$)$};
		\node [style=none, anchor=east, xshift=4pt] (27)  at (0.375, -0.5) {$\left(\vphantom{\rule{1pt}{1.5em}}\right.$};
		\node [style=none, anchor=west] (28)  at (0.875, -0.0) {\scriptsize $\otimes 5$};
		\node [style=rn] (29)  at (-0.125, -0.25) {};
		\node [style=gn] (30)  at (-0.875, -0.5) {};
		\node [style=rn] (31)  at (-0.625, 0.25) {$\pi$};
		\node [style=gn] (32)  at (-0.875, 0.75) {$\gamma$};
		\node [style=gn] (33)  at (-0.125, 0.75) {-$\gamma$};
	\end{pgfonlayer}
	\begin{pgfonlayer}{edgelayer}
		\draw [bend right=45, looseness=1.00] (21) to (24);
		\draw (21) to (24);
		\draw [bend right=45, looseness=1.00] (24) to (21);
		\draw (29) to (19.center);
		\draw (31) to (33);
		\draw (32) to (31);
	\end{pgfonlayer}
\end{tikzpicture}\eq{}\begin{tikzpicture}
	\begin{pgfonlayer}{nodelayer}
		\node [style=gn] (34)  at (0.25, -0.125) {};
		\node [style=rn] (35)  at (-0.25, 0.125) {};
		\node [style=none] (36)  at (-0.25, -0.375) {};
		\node [style=rn] (38)  at (0.25, 0.375) {};
	\end{pgfonlayer}
	\begin{pgfonlayer}{edgelayer}
		\draw [bend right=45, looseness=1.00] (34) to (38);
		\draw (35) to (36.center);
		\draw [bend right=45, looseness=1.00] (38) to (34);
		\draw (38) to (34);
	\end{pgfonlayer}
\end{tikzpicture}\]
\end{proposition}

\begin{proof}The proof consists in transforming the equation $D_1(\alpha)=D_2(\alpha)$ into the equivalent equation $D_1'\circ \theta_r(\alpha) =  D_1'\circ \theta_r(\alpha)$ using axioms of the ZX-calculus. This transformation involves 6 steps:
\\-- \emph{Turn inputs into outputs.} First, each input can be bent to an output using $\eta$: 
\def\fig{thm1-equivalence}\[\eq{}\equi[]{}\eq{}\]
\\-- \emph{Make the red spiders green.} All red spiders $R^{(k,l)}_X(n\alpha+c)$ are transformed into green spiders using the axioms \so{} and \h{}:
\def\fig{h-on-red-spiders}
\[\eq{}\]
 \\-- \emph{Expending spiders.} All spiders $R_Z(n\alpha +c)$ are expended using \so{} so that all the occurrences of $\alpha$ are \rz{$\alpha$} or \rz{-$\alpha$}:
\def\fig{S1-on-n-alpha-plus-c-arxiv}
\[\eq{}\]
\\-- \emph{Changing the sign.} Using \kt{} all occurrences of \rz{-$\alpha$}
 are replaced as follows: $\rz{-$\alpha$} \mapsto
\InputIfFileExists{Rz-0-1-minus-alpha.tikz}{}{\input{./figures/Rz-0-1-minus-alpha.tikz}}
$. Notice that this rule is not applied recursively, which would not terminate. After this step all the original $-\alpha$ have been replaced by an $\alpha$ and as many scalars 
\InputIfFileExists{scalar-e-pow-minus-i-alpha.tikz}{}{\input{./figures/scalar-e-pow-minus-i-alpha.tikz}}
 have been created. So far, we have shown:
\def\fig{thm1-equivalence-one-var}
\begin{align*}
\eq{}\equi[\ \ ]{}
\input{./figures/\fig/\fig_06.tikz}\eq[]{}\input{./figures/\fig/\fig_07.tikz}
\end{align*}
\\-- \emph{(Re)moving scalars.} The scalar 
\InputIfFileExists{scalar-e-pow-i-alpha.tikz}{}{\input{./figures/scalar-e-pow-i-alpha.tikz}}
 has an inverse for $\otimes$, which is 
\InputIfFileExists{scalar-e-pow-minus-i-alpha.tikz}{}{\input{./figures/scalar-e-pow-minus-i-alpha.tikz}}
 (see Lemmas \ref{lem:multiplying-global-phases}, \ref{lem:bicolor-0-alpha} and \ref{lem:inverse}). This has for consequence:
\begin{itemize}
\item $\zx\vdash \scalebox{0.8}{
\InputIfFileExists{scalar-e-pow-minus-i-alpha.tikz}{}{\input{./figures/scalar-e-pow-minus-i-alpha.tikz}}
}D_1=D_2 \equi{}\zx\vdash D_1=\scalebox{0.8}{
\InputIfFileExists{scalar-e-pow-i-alpha.tikz}{}{\input{./figures/scalar-e-pow-i-alpha.tikz}}
}D_2$
\item $\zx\vdash \scalebox{0.8}{
\InputIfFileExists{scalar-e-pow-i-alpha.tikz}{}{\input{./figures/scalar-e-pow-i-alpha.tikz}}
}D_1=\scalebox{0.8}{
\InputIfFileExists{scalar-e-pow-i-alpha.tikz}{}{\input{./figures/scalar-e-pow-i-alpha.tikz}}
}D_2 \equi{} \zx\vdash D_1=D_2$
\end{itemize}
The scalars 
\InputIfFileExists{scalar-e-pow-minus-i-alpha.tikz}{}{\input{./figures/scalar-e-pow-minus-i-alpha.tikz}}
 are eliminated by adding $\max\limits_{i\in \{1,2\}}\left(\mu^-_\alpha(D_i)\right)$ times the scalar 
\InputIfFileExists{scalar-e-pow-i-alpha.tikz}{}{\input{./figures/scalar-e-pow-i-alpha.tikz}}
 on both sides, then simplifying when we have a scalar and its inverse.
\begin{align*}
\equi{}\input{./figures/\fig/\fig_08.tikz}\eq{}\input{./figures/\fig/\fig_09.tikz}
\end{align*}
\\-- \emph{Balancing the variables.} At this step the number of occurrences of $\alpha$ might be different on both sides of the equation. Indeed, one can check that the side of $D_i$ has $\mu^+_\alpha(D_i)+\max\limits_{j\in \{1,2\}}\left(\mu^-_\alpha(D_j)\right)$ occurrences of $\alpha$. One can then use the simple equation 
\InputIfFileExists{inverse-alpha.tikz}{}{\input{./figures/inverse-alpha.tikz}}
 (whose proof uses Lemmas \ref{lem:bicolor-0-alpha} and \ref{lem:inverse}) $\max\limits_{j\in \{1,2\}}\left(\mu^+_\alpha(D_j)\right)-\mu^+_\alpha(D_i)$ times on the side of $D_i$. We hence end up with $\mu_\alpha = \max\limits_{i\in \{1,2\}}\left(\mu^+_\alpha(D_i(\alpha))\right)  + \max\limits_{i\in \{1,2\}}\left(\mu^-_\alpha(D_i(\alpha))\right)$ occurrences of $\alpha$ on both sides. $D_i'$ is defined as:
\begin{align*}
\input{./figures/\fig/\fig_10.tikz}:=\input{./figures/\fig/\fig_11.tikz}
\end{align*}
\end{proof}

Proposition \ref{prop:var2inp}  implies in particular that if the equation $D_1'\circ \theta_r(\alpha) =  D_2'\circ \theta_r(\alpha)$ is provable  using the axioms of the ZX-calculus, then so is $D_1(\alpha)= D_2(\alpha)$.  
 Proposition \ref{prop:var2inp} also implies that if $\interp{D_1(\alpha)}= \interp{D_2(\alpha)}$, then $ \interp{D_1'\circ \theta_r(\alpha)} = \interp{ D_2'\circ \theta_r(\alpha)}$, thanks to the soundness of the ZX-calculus. 

%
%

 \subsection{Removing the variables}
 
 Given $D_1(\alpha)$ and $D_2(\alpha)$  linear in $\alpha$ with constants in $\frac \pi 4\mathbb Z$, if $\alpha$ has multiplicity $1$ in  ${D_1(\alpha)} = {D_2(\alpha)}$, then according to Prop. \ref{prop:var2inp}, the equation  can be transformed into the following equivalent equation involving a single occurrence of $\alpha$: 
\begin{equation}\label{eq:single}

\InputIfFileExists{thm1-single-occurrence.tikz}{}{\input{./figures/thm1-single-occurrence.tikz}}

\end{equation} 
 where $D_1'$ and $D_2'$ are in the $\frac \pi 4$-fragment. 
Notice that equation (\ref{eq:single}) holds 
 if and only if $\interp{D_1'}=\interp{D_2'}$, since $\left(\rz{}, \rz{$\pi$}\right)$ forms a basis. Thus, a variable of multiplicity $1$ can easily be removed, leading to an equivalent equation in the complete $\frac \pi 4$-fragment of the ZX-calculus.

%
 
 When a variable has a  multiplicity $r>1$ in an equation, the variable cannot be removed similarly as $\left(\rz{$\alpha$}\right)^{\otimes r}$ does not generate a basis of the $2^r$ dimensional space when $r>1$. However these dots can be replaced by an appropriate projector on the subspace generated by these dots, as described in the following.

 
%
%


%
%
%
%
%
%
 
%
%
%
 \subsubsection{When multiplicity is 2}


%
%
%


%

Consider the following diagram $R$:
\def\varone{$R$}
\[
\InputIfFileExists{2-in-2-out-box-var.tikz}{}{\input{./figures/2-in-2-out-box-var.tikz}}
~~:=~~
\InputIfFileExists{matrix-X.tikz}{}{\input{./figures/matrix-X.tikz}}
\]
One can check that $\interp R=\begin{pmatrix}
1&0&0&0\\
0&\frac{1}{2}&\frac{1}{2}&0\\
0&\frac{1}{2}&\frac{1}{2}&0\\
0&0&0&1
\end{pmatrix}$. This matrix basically mixes the second and third elements of any size-4 vector. We can then show:

\begin{lemma} For any $\alpha \in \mathbb R$, $\zx\vdash R\circ \theta_2(\alpha) = \theta_2(\alpha)$, i.e. pictorially: 
\label{lem:alphas-on-X}
\[\forall\alpha\in\mathbb{R},~~\zx\vdash~ 
\InputIfFileExists{2-gn-alpha-to-X.tikz}{}{\input{./figures/2-gn-alpha-to-X.tikz}}
\]
\end{lemma}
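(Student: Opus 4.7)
The plan is to prove the equation by direct diagrammatic manipulation. First I would observe the semantic content: $\interp{R}=\tfrac12(I+\mathrm{SWAP})$ is the symmetrizer onto the symmetric subspace of two qubits, while $\theta_2(\alpha)=(\ket 0+e^{i\alpha}\ket 1)^{\otimes 2}$ is a symmetric product state, so semantically $\interp{R}\circ\interp{\theta_2(\alpha)}=\interp{\theta_2(\alpha)}$. The goal is to realise this fact using only the axioms of Figure~\ref{fig:ZX_rules}, uniformly in $\alpha$, since we cannot yet invoke Theorem~\ref{thm:provability} itself (this lemma is a building block for it).

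The strategy is as follows. First, compose $R$ with the two $\alpha$-spiders $\theta_2(\alpha)$ on top. Since $R$ (the ``matrix-X'' diagram) is built out of a small number of red and green spiders, the green $\alpha$-spiders can be absorbed into the top of $R$ using the spider rule \so{} (perhaps after a single application of \h{} to commute phases past internal Hadamards). Second, once the phases have been moved inside, I would exploit the fact that both inputs carry the \emph{same} angle $\alpha$: this is the diagrammatic expression of the symmetry that allows $R$ to act as an identity. A few further applications of \so{}, \h{}, and possibly \kt{} to push phases across red/green boundaries should then re-extract two $\alpha$-spiders onto the output wires. Finally, any residual scalars arising from the $\tfrac12$ normalisation of $R$ can be discharged using the scalar arithmetic already developed in the proof of Prop.~\ref{prop:var2inp} (in particular Lem.~\ref{lem:bicolor-0-alpha} and Lem.~\ref{lem:inverse}).

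The main obstacle I anticipate is scalar bookkeeping: the middle block of $\interp R$ has entries $\tfrac12$ while the corner entries are $1$, so the ZX-representation of $R$ must carry scalar subdiagrams whose contributions have to cancel against the doubled output coming from the SWAP-branch of $R$. A secondary subtlety is that $\alpha$ must be treated as a formal symbol throughout --- since we cannot yet appeal to completeness for the $\tfrac{\pi}{4}$-fragment, each rule application has to work uniformly in $\alpha$, which rules out ad hoc case analyses. Given that $R$ is only a $2$-to-$2$ diagram, I nonetheless expect the final derivation to be short, on the order of half a dozen rule applications, dominated by spider fusions and a handful of scalar manipulations.
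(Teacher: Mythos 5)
Your high-level plan is the same as the paper's: a direct diagrammatic derivation, uniform in $\alpha$, that pushes the two $\alpha$-phases through $R$ and re-extracts $\theta_2(\alpha)$. Your semantic observation that $\interp R=\tfrac12(I+\mathrm{SWAP})$ is the two-qubit symmetriser and that $\theta_2(\alpha)$ is a symmetric product is also accurate, and you are right that Theorem~\ref{thm:provability} cannot be invoked here.

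The gap is that the derivation you sketch --- \so{}, \h{}, ``possibly \kt{}'', and scalar cleanup --- is not close to what is actually needed. The symmetriser structure of $R$ does not yield to spider fusion, colour change and phase-flipping; the paper's proof dismantles $R$ using an Euler decomposition of the Hadamard (Lemma~\ref{lem:euler-decomp-with-scalar}), the bialgebra rule \bt{}, a commutation-of-controls step (Lemma~\ref{lem:C1-original}, itself a consequence of rule \com{}), the Hopf law (Lemma~\ref{lem:hopf}), and, crucially, the supplementarity rule \supp{} via Lemma~\ref{lem:supp-to-minus-pi_4}. It is the Hopf/supplementarity step --- not a spider fusion --- that uses the fact that the two inputs carry the \emph{same} angle $\alpha$, which is precisely the symmetry you identified semantically. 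The resulting derivation spans roughly nine rewriting steps with well over a dozen individual rule applications, most of them structural rather than scalar, so ``half a dozen applications, dominated by spider fusions and a handful of scalar manipulations'' is off in both count and in kind. A completed proof along your lines would need to bring in these ingredients (or equivalents) explicitly.
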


The proof is given in appendix.

\begin{lemma}
\label{lem:equivalence-X}
For any two ZX-diagrams $D_1,D_2 : 2 \to n$, \\$(\forall \alpha\in \mathbb R, \interp{D_1\circ \theta_2(\alpha)} =  \interp{D_2\circ \theta_2(\alpha)}) \Leftrightarrow \interp{D_1\circ R}=\interp{D_2\circ R}$ i.e.,
$$\left(\forall\alpha\in\mathbb{R},~~ \interp{
\InputIfFileExists{2-gn-alpha-to-D_1.tikz}{}{\input{./figures/2-gn-alpha-to-D_1.tikz}}
}  = \interp{
\InputIfFileExists{2-gn-alpha-to-D_2.tikz}{}{\input{./figures/2-gn-alpha-to-D_2.tikz}}
}\right) \Leftrightarrow \interp{
\InputIfFileExists{2-gn-D_1-X.tikz}{}{\input{./figures/2-gn-D_1-X.tikz}}
} = \interp{
\InputIfFileExists{2-gn-D_2-X.tikz}{}{\input{./figures/2-gn-D_2-X.tikz}}
}$$
where $\alpha$ does not appear in $D_1$ or $D_2$.
\end{lemma}

\begin{proof}
The proof consists in showing that $\interp R$ is a projector onto $S= \mathop{\mathrm{span}} \{ \interp{\theta_2(\alpha)}~|~\alpha \in \mathbb{R}\}$. According to Lemma \ref{lem:alphas-on-X}, $\interp R$ is the identity on $S$, moreover it is easy to show that $\interp R$ is a matrix of rank $3$ and that $\interp {\theta_2(0)},\interp {\theta_2(\pi/2)}, \interp {\theta_2(\pi)}$  are three linearly independent vectors in the image of $\interp R$.
%
%
%
%
%
  \end{proof}

\subsubsection{Arbitrary multiplicity}
\label{sec:multiple-variables}
%
%
%
We now want to generalise Lemma \ref{lem:equivalence-X} to any multiplicity $r$ of $\alpha$.
It turns out that there is no obvious generalization for $r$ wires of the matrix $\interp R$ expressible using angles multiple of $\frac \pi 4$, so we will rather use the following family  $(P_r)_{r\ge 2}$ of diagrams:

\[\left\lbrace\begin{array}{l}

\InputIfFileExists{2-in-2-out-box-M_2.tikz}{}{\input{./figures/2-in-2-out-box-M_2.tikz}}
~~:=~~ 
\InputIfFileExists{matrix-M2-2-arxiv.tikz}{}{\input{./figures/matrix-M2-2-arxiv.tikz}}
\\

\InputIfFileExists{matrix-M_n-def.tikz}{}{\input{./figures/matrix-M_n-def.tikz}}
~~=~~\scalebox{0.6}{
\InputIfFileExists{matrix-M_n-form.tikz}{}{\input{./figures/matrix-M_n-form.tikz}}
}
\end{array}\right.\]


For the reader convenience, here are the interpretations of $P_2$ and $P_3$:
\[
\interp{P_2} = \begin{pmatrix}
  1 & 0 & 0 & 0 \\
  0 & 0 & 1 & 0 \\
  0 & 0 & 1 & 0 \\
  0 & 0 & 0 & 1 \\
  \end{pmatrix}
\qquad \interp{P_3} = \begin{pmatrix}
  1 & 0 & 0 & 0 & 0 & 0 & 0  \\
  0 & 0 & 0 & 1 & 0 & 0 & 0 \\
  0 & 0 & 0 & 1 & 0 & 0 & 0 \\
  0 & 0 & 0 & 0 & 0 & 1 & 0  \\
  0 & 0 & 0 & 1 & 0 & 0 & 0  \\
  0 & 0 & 0 & 0 & 0 & 1 & 0 \\
  0 & 0 & 0 & 0 & 0 & 1 & 0 \\
  0 & 0 & 0 & 0 & 0 & 0 & 1  \\
  \end{pmatrix}
\]

\begin{lemma}
\label{lem:alphas-on-M}For any $r\ge 2$ and any $\alpha \in \mathbb R$, $\zx\vdash P_r\circ \theta_r(\alpha) = \theta_r(\alpha)$ i.e.,
\[\zx\vdash~ 
\InputIfFileExists{n-gn-alpha-to-M_n.tikz}{}{\input{./figures/n-gn-alpha-to-M_n.tikz}}
\]
\end{lemma}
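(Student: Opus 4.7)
The natural approach is induction on $r \geq 2$.

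For the base case $r = 2$, I would prove $\zx\vdash P_2 \circ \theta_2(\alpha) = \theta_2(\alpha)$ directly, following the same strategy as the proof of Lemma \ref{lem:alphas-on-X} (which handles the analogous statement for $R$ and is deferred to the appendix). Concretely, unfold the definition of $P_2$, attach the two $\alpha$-dots of $\theta_2(\alpha)$ to its inputs, and simplify using spider fusion \so{}, the color-change rule \h{}, and the remaining axioms until the diagram reduces to two $\alpha$-dots at the outputs. Semantically, this works precisely because $\theta_2(\alpha)$ assigns the same coefficient $e^{i\alpha}$ to both basis vectors $|01\rangle$ and $|10\rangle$, so even though $\interp{P_2}$ sends $|01\rangle \mapsto 0$ and $|10\rangle \mapsto |01\rangle + |10\rangle$, the superposition appearing in $\theta_2(\alpha)$ is preserved.

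For the inductive step, assume $\zx\vdash P_{r-1}\circ \theta_{r-1}(\alpha) = \theta_{r-1}(\alpha)$. The recursive definition of $P_r$ expresses $P_r$ in terms of $P_{r-1}$ acting on $r-1$ of the wires, combined with additional local structure involving the $r$-th wire. Since $\theta_r(\alpha) = \theta_{r-1}(\alpha) \otimes R_Z^{(0,1)}(\alpha)$, I would factor the composition $P_r \circ \theta_r(\alpha)$ so that a $P_{r-1}\circ \theta_{r-1}(\alpha)$ sub-diagram is exposed on $r-1$ wires; the induction hypothesis then rewrites this sub-diagram to $\theta_{r-1}(\alpha)$. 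What remains is a local configuration acting on the newly introduced wire together with a wire carrying an $\alpha$-dot, which should reduce to an instance of the base case (essentially $P_2\circ \theta_2(\alpha) = \theta_2(\alpha)$ restricted to those two wires), leaving the desired $\theta_r(\alpha)$ at the output.

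The main obstacle is the bookkeeping in the inductive step: the recursive wiring of $P_r$ likely interleaves the new wire with the wires processed by $P_{r-1}$, and one must use the topological axioms together with spider fusion \so{} to rearrange the diagram into a form where $P_{r-1}\circ \theta_{r-1}(\alpha)$ appears as a clean sub-diagram to which the hypothesis applies. Once this rearrangement is achieved, the remaining reduction is a small, base-case-type computation; the whole difficulty is localized in correctly routing the wires so that both the induction hypothesis and the base case can be invoked.
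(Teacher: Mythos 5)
Your outline is broadly sound and agrees with the paper on the inductive step, but your base case takes a genuinely different and more laborious route. You propose to establish $\zx\vdash P_2\circ\theta_2(\alpha) = \theta_2(\alpha)$ from scratch by unfolding $P_2$ and finding a fresh diagrammatic derivation in the style of the appendix proof of Lemma~\ref{lem:alphas-on-X}. That is possible in principle, but it requires hand-crafting a new rewrite sequence. The paper instead observes that $P_2\circ R$ and $R$ are both constant diagrams in the $\frac{\pi}{4}$-fragment with $\interp{P_2\circ R}=\interp{R}$, so completeness of the $\frac{\pi}{4}$-fragment yields $\zx\vdash P_2\circ R = R$ with no further work; composing both sides on the right with $\theta_2(\alpha)$ and applying Lemma~\ref{lem:alphas-on-X} (which already rewrites $R\circ\theta_2(\alpha)$ to $\theta_2(\alpha)$) then gives $\zx\vdash P_2\circ\theta_2(\alpha)=\theta_2(\alpha)$ immediately. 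The key insight you miss is that the only $\alpha$-dependent rewriting has already been done in Lemma~\ref{lem:alphas-on-X}, and the remaining equality between $P_2$ and $R$ is a constant $\frac{\pi}{4}$-fragment fact for which completeness can be invoked directly, rather than re-proved diagrammatically. Your inductive step is stated at the same level of detail as the paper's (which says only ``by induction on $r$''), so no discrepancy there, though as you note yourself the wire-routing against the recursive definition of $P_r$ still needs to be checked.
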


\begin{proof}
Notice that $\interp{P_2\circ R} = \interp R$, so by completeness of the ZX-calculus for the $\frac \pi 4$ fragment, $\zx\vdash P_2\circ R = R$, so  $\zx\vdash P_2\circ R \circ \theta_2(\alpha) = R\circ \theta_2(\alpha)$. According to Lemma \ref{lem:alphas-on-X}, it implies $\zx\vdash P_2\circ \theta_2(\alpha) = \theta_2(\alpha)$. The proof for $r>2$ is by induction on $r$.  
\end{proof}

%

\begin{lemma}  \label{lem:rank}
For any $r\ge 2$, $\interp {P_r}$ is a matrix of rank at most $r+1$. 
\end{lemma}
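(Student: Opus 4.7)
The plan is to establish, by induction on $r$, the stronger statement that the image of $\interp{P_r}$ is contained in the symmetric subspace $\mathrm{Sym}^r(\mathbb{C}^2) \subseteq (\mathbb{C}^2)^{\otimes r}$. Since $\dim \mathrm{Sym}^r(\mathbb{C}^2) = r+1$, the rank bound follows immediately from basic linear algebra.

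For the base case $r=2$, I would read off directly from the explicit matrix of $\interp{P_2}$ given above that its three non-zero columns are $\ket{00}$, $\ket{01}+\ket{10}$, and $\ket{11}$, all of which are symmetric vectors (the column indexed by $\ket{01}$ is zero). Hence $\mathrm{Im}(\interp{P_2}) \subseteq \mathrm{Sym}^2(\mathbb{C}^2)$ and the rank is at most $3$. The same pattern can be confirmed for $r=3$ by inspecting $\interp{P_3}$: the non-zero columns are $\ket{000}$, $\ket{001}+\ket{010}+\ket{100}$, $\ket{011}+\ket{101}+\ket{110}$, and $\ket{111}$, which are precisely the unnormalised Dicke states spanning $\mathrm{Sym}^3(\mathbb{C}^2)$.

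For the inductive step, I would unfold the recursive definition of $P_r$ as a composition of $P_{r-1}$ acting on $r-1$ of the wires together with a constant-size gluing gadget that incorporates the $r$-th wire. By the induction hypothesis, the image on those $r-1$ wires lies in $\mathrm{Sym}^{r-1}(\mathbb{C}^2)$. The remaining step is to verify that the gluing gadget carries $\mathrm{Sym}^{r-1}(\mathbb{C}^2) \otimes \mathbb{C}^2$ into $\mathrm{Sym}^r(\mathbb{C}^2)$; this can be done by evaluating the gadget directly on the weight-indexed basis of $\mathrm{Sym}^{r-1}$.

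The main obstacle is this last verification, which hinges on the explicit form of the recursive gluing gadget in the definition of $P_r$. The naive estimate via submultiplicativity of the rank under composition is too loose, because the intermediate diagrammatic cuts have dimension $2^k$ rather than $r+1$, so the argument really does need to exploit a symmetry property rather than pure dimension counting. As a sanity check and complementary perspective, Lemma \ref{lem:alphas-on-M} already shows that the coherent states $\interp{\theta_r(\alpha)} = (\ket{0}+e^{i\alpha}\ket{1})^{\otimes r}$ lie in $\mathrm{Im}(\interp{P_r})$; a Vandermonde argument on $r+1$ distinct values of $\alpha$ shows these span $\mathrm{Sym}^r(\mathbb{C}^2)$, so the reverse inclusion of the inductive claim holds automatically, and the bound $r+1$ is in fact tight.
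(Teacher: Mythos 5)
Your argument takes a genuinely different route from the paper's. The paper works in the X-basis $u_r(\vec{x}) := \interp{R_X^{(0,1)}(x_1\pi)\otimes\cdots\otimes R_X^{(0,1)}(x_r\pi)}$ and bounds the \emph{kernel}: it shows diagrammatically, by induction on $r$, that $\interp{P_r}\,u_r(\vec{x}) = 0$ whenever the word $\vec{x}$ contains the pattern $01$, so only the $r+1$ words $1^p0^{r-p}$ can lie outside the kernel. You instead work in the Z-basis and bound the \emph{image}, arguing $\mathrm{Im}(\interp{P_r}) \subseteq \mathrm{Sym}^r(\mathbb{C}^2)$ with $\dim\mathrm{Sym}^r(\mathbb{C}^2)=r+1$. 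These are dual readings of the same bound, and both must exploit the recursive definition of $P_r$ via an induction on the diagram. A pleasant consequence of your route is that, combined with Lemma \ref{lem:alphas-on-M} and your Vandermonde remark, it identifies $\interp{P_r}$ as an oblique projector onto the symmetric subspace, which makes its role in Lemma \ref{lem:equivalence-Pk} conceptually transparent.

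That said, there is a real gap, and it sits exactly where you flag it: the inductive verification that the gluing gadget in the recursive definition of $P_r$ carries $\mathrm{Sym}^{r-1}(\mathbb{C}^2) \otimes \mathbb{C}^2$ into $\mathrm{Sym}^r(\mathbb{C}^2)$ is never performed. This step is load-bearing; it is the precise analogue of the two diagrammatic identities the paper proves in its appendix, which show that a $01$ pattern is annihilated either directly by the head of the recursion or by being pushed inductively onto $P_{r-1}$. Inspecting the printed matrices for $r=2$ and $r=3$ corroborates the claim but does not replace the induction: one must still evaluate the gadget on an arbitrary Dicke state tensored with $\ket{0}$ or $\ket{1}$ and check the output is again a combination of Dicke states. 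You also correctly note that the Vandermonde argument only yields $\mathrm{Sym}^r(\mathbb{C}^2) \subseteq \mathrm{Im}(\interp{P_r})$, the wrong direction for an upper bound on the rank. So this is a sound and well-motivated plan, but not yet a proof.
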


The proof of Lemma \ref{lem:rank} is given in appendix. 


We can now prove a similar statement as in lemma \ref{lem:equivalence-X}:

\begin{lemma}
\label{lem:equivalence-Pk}
For any $r\ge 2$ and any $D_1,D_2 : r \to n$, \\$(\forall \alpha\in \mathbb R, \interp{D_1\circ \theta_r(\alpha)} =  \interp{D_2\circ \theta_r(\alpha)}) \Leftrightarrow \interp{D_1\circ P_r}=\interp{D_2\circ P_r}$ i.e.,
\begin{align*}
\left(\forall\alpha\in\mathbb{R},~~ \interp{
\InputIfFileExists{r-gn-alpha-to-D_1.tikz}{}{\input{./figures/r-gn-alpha-to-D_1.tikz}}
} = \interp{
\InputIfFileExists{r-gn-alpha-to-D_2.tikz}{}{\input{./figures/r-gn-alpha-to-D_2.tikz}}
}\right) \Leftrightarrow \interp{
\InputIfFileExists{P-to-D1.tikz}{}{\input{./figures/P-to-D1.tikz}}
} = \interp{
\InputIfFileExists{P-to-D2.tikz}{}{\input{./figures/P-to-D2.tikz}}
}
\end{align*}
where $\alpha$ does not appear in $D_1$ nor $D_2$.
\end{lemma}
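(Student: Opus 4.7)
The plan is to mimic the proof of Lemma \ref{lem:equivalence-X} at arbitrary arity: show that $\interp{P_r}$ is a projector onto the subspace
$$S_r := \mathop{\mathrm{span}}\{\interp{\theta_r(\alpha)} \mid \alpha\in\mathbb{R}\} \subseteq \mathbb{C}^{2^r},$$
and then deduce the equivalence from the universal property of a projector.

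First, I would determine $\dim(S_r)$. Since $\interp{\rz{$\alpha$}} = \begin{pmatrix}1\\e^{i\alpha}\end{pmatrix}$, the entry of $\interp{\theta_r(\alpha)} = \interp{\rz{$\alpha$}}^{\otimes r}$ indexed by $(b_1,\ldots,b_r)\in\{0,1\}^r$ equals $e^{i\alpha\sum_j b_j}$. Thus $S_r$ is contained in the subspace $T_r$ of $\mathbb{C}^{2^r}$ of vectors whose coordinates depend only on the Hamming weight of the index, and $\dim(T_r) = r+1$. Conversely, picking $r+1$ distinct values $\alpha_0,\ldots,\alpha_r$, the $(r+1)\times(r+1)$ matrix $(e^{ik\alpha_j})_{j,k}$ is Vandermonde in $e^{i\alpha_j}$ and hence invertible, so $\interp{\theta_r(\alpha_0)},\ldots,\interp{\theta_r(\alpha_r)}$ are linearly independent. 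Therefore $\dim(S_r) = r+1$.

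Second, I would show $\interp{P_r}$ is the projector onto $S_r$. By soundness applied to Lemma \ref{lem:alphas-on-M}, we have $\interp{P_r}\interp{\theta_r(\alpha)} = \interp{\theta_r(\alpha)}$ for every $\alpha$, so $\interp{P_r}$ acts as the identity on $S_r$. This forces $S_r \subseteq \mathop{\mathrm{Im}}(\interp{P_r})$, hence $\mathop{\mathrm{rank}}(\interp{P_r}) \geq r+1$. Combining with Lemma \ref{lem:rank} gives $\mathop{\mathrm{rank}}(\interp{P_r}) = r+1$, so $\mathop{\mathrm{Im}}(\interp{P_r}) = S_r$; since $\interp{P_r}$ is the identity on its image, it is a projector onto $S_r$.

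Finally, the equivalence is an immediate consequence. For $(\Leftarrow)$, if $\interp{D_1\circ P_r} = \interp{D_2\circ P_r}$, then applying both sides to $\interp{\theta_r(\alpha)}$ and using $\interp{P_r}\interp{\theta_r(\alpha)} = \interp{\theta_r(\alpha)}$ yields $\interp{D_1\circ\theta_r(\alpha)} = \interp{D_2\circ\theta_r(\alpha)}$ for every $\alpha$. For $(\Rightarrow)$, the hypothesis says $\interp{D_1}$ and $\interp{D_2}$ coincide on every $\interp{\theta_r(\alpha)}$, hence on all of $S_r$ by linearity; since $\mathop{\mathrm{Im}}(\interp{P_r}) = S_r$, for any $v\in\mathbb{C}^{2^r}$ the vector $\interp{P_r}v$ lies in $S_r$, whence $\interp{D_1}\interp{P_r}v = \interp{D_2}\interp{P_r}v$, giving $\interp{D_1\circ P_r} = \interp{D_2\circ P_r}$. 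The only non-mechanical step is the dimension bookkeeping needed to match $\dim(S_r)$ with the bound from Lemma \ref{lem:rank}; once these two numbers coincide, the projector characterization and both implications follow uniformly.
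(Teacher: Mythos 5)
Your proposal is correct and follows essentially the same route as the paper: use Lemma \ref{lem:alphas-on-M} to show $\interp{P_r}$ is the identity on $S_r$, bound the rank by Lemma \ref{lem:rank}, establish $\dim S_r \ge r+1$ via a Vandermonde argument, and conclude that $\interp{P_r}$ is a projector onto $S_r$. The only cosmetic differences are that you spell out both directions of the equivalence explicitly and add the (correct but not strictly needed) observation that $S_r$ lives in the Hamming-weight subspace, which gives an independent upper bound on $\dim S_r$.
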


%
%
%
%
%
%

\begin{proof} The proof consists in showing that $\interp {P_r}$ is a projector onto $S_r= \mathop{\mathrm{span}} \{ \interp{\theta_r(\alpha)}~|~\alpha \in \mathbb{R}\}$. According to  Lemma \ref{lem:alphas-on-M}, $\interp{P_r}$ is the identity on $S_r$, and $\interp {P_r}$ is of rank at most $r+1$ according to Lemma \ref{lem:rank}, thus to finish the proof, it is sufficient to prove that the $r+1$ vectors $(\theta_r(\alpha^{(j)}))_{j=0\ldots r}$ are linearly independent, where $\alpha^{(j)} = j\pi/r$. 

  Let $\lambda_0,...,\lambda_r$ be scalars such that $\sum_j\lambda_j\theta_r(\alpha^{(j)})=0$.
  Notice that the $2^p$-th row (when rows are labeled from $1$ to $2^r$) of $\theta_r(\alpha^{(j)})$ is exactly $e^{ip\alpha^{(j)}}$.
  Therefore, if we look at all $2^p$-th rows of the equations, we obtain
\[ \begin{pmatrix}
1&1&\cdots&1\\
e^{i\alpha^{(0)}}&e^{i\alpha^{(1)}}&\cdots&e^{i\alpha^{(r)}}\\
\vdots&\vdots&\ddots&\vdots\\
e^{in\alpha^{(0)}}&e^{in\alpha^{(1)}}&\cdots&e^{in\alpha^{(r)}}
\end{pmatrix}\begin{pmatrix}
\lambda_0\\
\lambda_1\\
\vdots\\
\lambda_r
\end{pmatrix}=0 \]
However, the first matrix is a Vandermonde matrix, with
$e^{i\alpha^{(j)}}=e^{i\alpha^{(l)}}$ iff $j=l$, which is enough to state that
this matrix is invertible. Therefore all $\lambda^{(j)}$ are equal to $0$
and the vectors $\theta_r(\alpha^{(j)})$ are linearly independent.
\end{proof}

We are now ready to prove the main theorem in the particular case of a single variable:

\begin{proposition}\label{prop:1var}
For any $D_1(\alpha), D_2(\alpha)$ ZX-diagrams linear in $\alpha$ with constants in $\frac \pi 4 \mathbb Z$,
$$ \forall  \alpha \in \mathbb R, \interp{D_1( \alpha)}= \interp{D_2( \alpha)}      \,\,\,\,\,\Leftrightarrow\,\,\,\,\,  \forall  \alpha \in \mathbb R, \zx\vdash D_1( \alpha) = D_2( \alpha)       $$
\end{proposition}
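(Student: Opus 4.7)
The plan is to prove the nontrivial direction ($\Rightarrow$) by chaining together the preceding machinery; the reverse direction is immediate from soundness of the ZX-calculus.

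First I would apply Proposition \ref{prop:var2inp} to the equation $D_1(\alpha) = D_2(\alpha)$. This gives ZX-diagrams $D_1', D_2' : r \to n+m$ with angles multiple of $\frac{\pi}{4}$ such that
\[\zx\vdash D_1(\alpha) = D_2(\alpha) \iff \zx\vdash D_1' \circ \theta_r(\alpha) = D_2' \circ \theta_r(\alpha),\]
where $r = \mu_\alpha$ is the multiplicity of $\alpha$ in the equation. By soundness, the semantic hypothesis $\interp{D_1(\alpha)} = \interp{D_2(\alpha)}$ for all $\alpha$ transfers to $\interp{D_1' \circ \theta_r(\alpha)} = \interp{D_2' \circ \theta_r(\alpha)}$ for all $\alpha$. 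The task reduces to deriving $D_1' \circ \theta_r(\alpha) = D_2' \circ \theta_r(\alpha)$ in ZX.

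I would then split by the value of $r$. If $r = 0$, the diagrams $D_1', D_2'$ are themselves in the $\frac{\pi}{4}$-fragment with the same interpretation, so completeness of the $\frac{\pi}{4}$-fragment (from \cite{JPV}) delivers $\zx\vdash D_1' = D_2'$ directly. If $r = 1$, the two vectors $\interp{\theta_1(0)}$ and $\interp{\theta_1(\pi)}$ form a basis of $\mathbb{C}^2$, so the semantic equality for all $\alpha$ forces $\interp{D_1'} = \interp{D_2'}$, and again completeness of the $\frac{\pi}{4}$-fragment gives $\zx\vdash D_1' = D_2'$, whence $\zx\vdash D_1' \circ \theta_1(\alpha) = D_2' \circ \theta_1(\alpha)$ by congruence. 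If $r \geq 2$, Lemma \ref{lem:equivalence-Pk} yields the semantic equation $\interp{D_1' \circ P_r} = \interp{D_2' \circ P_r}$; since $P_r$ is built entirely from $\frac{\pi}{4}$-angles, $D_1' \circ P_r$ and $D_2' \circ P_r$ live in the $\frac{\pi}{4}$-fragment and completeness gives $\zx\vdash D_1' \circ P_r = D_2' \circ P_r$. Composing both sides with $\theta_r(\alpha)$ and applying Lemma \ref{lem:alphas-on-M} twice to rewrite $P_r \circ \theta_r(\alpha)$ as $\theta_r(\alpha)$ on each side, I obtain $\zx\vdash D_1' \circ \theta_r(\alpha) = D_2' \circ \theta_r(\alpha)$.

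In all cases, pulling back through the equivalence of Proposition \ref{prop:var2inp} concludes $\zx\vdash D_1(\alpha) = D_2(\alpha)$ for every $\alpha$. The only genuinely delicate point is handling $r \geq 2$: here the naive argument that $\theta_r(\alpha)$ spans enough of $\mathbb{C}^{2^r}$ fails, and one must route semantic equalities through the projector $P_r$ and then rely on the key identity $P_r \circ \theta_r(\alpha) = \theta_r(\alpha)$ to cancel $P_r$ back out syntactically. Everything else is bookkeeping between soundness, completeness of the $\frac{\pi}{4}$-fragment, and the congruence of $\zx\vdash$.
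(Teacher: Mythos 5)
Your proof is correct and follows essentially the same route as the paper: reduce via Proposition \ref{prop:var2inp} to the form $D_i' \circ \theta_r(\alpha)$, pass to $\interp{D_1' \circ P_r} = \interp{D_2' \circ P_r}$ via Lemma \ref{lem:equivalence-Pk}, invoke Clifford+T completeness, recompose with $\theta_r(\alpha)$, and cancel $P_r$ using Lemma \ref{lem:alphas-on-M}. The one place you go slightly beyond the paper is in explicitly treating $r=0$ and $r=1$; the paper's proof as stated jumps straight to Lemma \ref{lem:equivalence-Pk}, which is only stated for $r\geq 2$ (the $r=1$ case is handled informally in the preceding discussion around equation (\ref{eq:single}), using that $\interp{\theta_1(0)},\interp{\theta_1(\pi)}$ form a basis). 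Your explicit case split closes that small presentational gap but is otherwise the same argument.
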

\begin{proof}
{[$\Leftarrow$]} is a direct consequence of the soundness of the ZX-calculus. 
[$\Rightarrow$] Assume $\forall  \alpha \in \mathbb R, \interp{D_1( \alpha)}= \interp{D_2( \alpha)}$. According to Proposition \ref{prop:var2inp}, $\forall  \alpha \in \mathbb R, \interp{D'_1\circ \theta_r(\alpha)}= \interp{D'_2\circ \theta_r( \alpha)}$ where $D'_i$ are in the $\frac \pi 4$-fragment of the ZX-calculus. It implies, according to Lemma \ref{lem:equivalence-Pk}, that  $\interp{D_1'\circ P_r} = \interp{D_2'\circ P_r}$. Thanks to the completeness of the ZX-calculus for the $\frac \pi 4$-fragment, $\zx\vdash D_1'\circ P_r= D_2'\circ P_r$, so $\forall \alpha\in \mathbb R, \zx\vdash D_1'\circ P_r \circ \theta_r(\alpha)= D_2'\circ P_r\circ \theta_r(\alpha)$. Thus, by Lemma \ref{lem:alphas-on-M}, $\forall \alpha\in \mathbb R, \zx\vdash D_1' \circ \theta_r(\alpha)= D_2'\circ  \theta_r(\alpha)$, which is equivalent to $\forall  \alpha \in \mathbb R, \zx\vdash D_1( \alpha) = D_2( \alpha)$ according to Proposition \ref{prop:var2inp}. 
\end{proof}

\subsection{Multiple variables}

Proposition \ref{prop:var2inp} can be straighforwardly extended to multiple variables:

\begin{proposition}\label{prop:vars2inp}
For any $D_1(\alpha), D_2(\alpha): n\to m$ two ZX-diagrams linear in $\vec \alpha=\alpha_1,\ldots, \alpha_k$ with constants in $\frac \pi 4$$\mathbb Z$, there exist  $D_1', D'_2:(\sum_{i=1}^k r_i)\to n+m$ two ZX-diagrams with angles multiple of $\frac \pi 4$ such that, for any $\vec \alpha \in \mathbb R^k$, 
\begin{equation}\label{eq:varasinputs}D_1(\vec \alpha)= D_2(\vec \alpha) \quad\Leftrightarrow\quad D_1'\circ \theta_{\vec r}(\vec \alpha) =  D_2'\circ \theta_{\vec r}(\vec \alpha)\end{equation}
is provable using the axioms of the ZX-calculus, where $r_i$ is the multiplicity of $\alpha_i$ in $D_1(\vec \alpha)=D_2(\vec \alpha)$, $\vec r:=r_1, \ldots, r_k$, and $\theta_{\vec{r}}(\vec{\alpha}):=\theta_{r_1}(\alpha_1)  \otimes \ldots\otimes \theta_{r_k}(\alpha_k)$. 

Pictorially: 
\def\fig{theta_r-alpha-on-diagrams-gen-arxiv}
\begin{align*}
&\zx\vdash\eq{}\equi{}\\
&\zx\vdash\eq{}
\end{align*}
\end{proposition}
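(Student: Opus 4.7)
The plan is to reduce Proposition \ref{prop:vars2inp} to Proposition \ref{prop:var2inp} by induction on the number of variables $k$.

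The base case $k=1$ is exactly Proposition \ref{prop:var2inp}. For the inductive step, I first observe that the six-step proof of Proposition \ref{prop:var2inp} extends verbatim to the case where the constants in the diagrams are themselves linear expressions of the form $\sum_{j<k} n_j\alpha_j + c$ with $c \in \frac\pi4\mathbb{Z}$. The tools used in that proof are \so{} (for splitting spiders), \h{} (for recoloring), \kt{} (for sign changes, which introduce auxiliary scalars $e^{-i\alpha_k}$), and the scalar-inverse manipulations of Lemmas \ref{lem:multiplying-global-phases}, \ref{lem:bicolor-0-alpha}, \ref{lem:inverse}. None of these rules requires the non-$\alpha_k$ part of an angle to be a true constant; all of them act uniformly in whatever coefficients appear alongside $\alpha_k$. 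Moreover, none of them creates, destroys, or otherwise alters occurrences of $\alpha_j$ for $j \neq k$, so the multiplicities $\mu^\pm_{\alpha_j}$ are preserved throughout the extraction of $\alpha_k$.

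Applying this extended form of Proposition \ref{prop:var2inp} to extract $\alpha_k$ yields two diagrams $\tilde D_1, \tilde D_2 : r_k \to n+m$ with angles linear in $\alpha_1,\ldots,\alpha_{k-1}$ and constants in $\frac\pi4\mathbb{Z}$, such that
\[ \zx \vdash D_1(\vec\alpha) = D_2(\vec\alpha) \quad\Leftrightarrow\quad \zx \vdash \tilde D_1 \circ \theta_{r_k}(\alpha_k) = \tilde D_2 \circ \theta_{r_k}(\alpha_k). \]
The induction hypothesis then applies to $\tilde D_1, \tilde D_2$ (viewed as diagrams with $r_k$ inputs and $n+m$ outputs, linear in $\vec\alpha' := \alpha_1,\ldots,\alpha_{k-1}$ with multiplicities $r_1,\ldots,r_{k-1}$), producing diagrams $D_1', D_2'$ with angles in $\frac\pi4\mathbb{Z}$ such that
\[ \zx \vdash \tilde D_1 = \tilde D_2 \quad\Leftrightarrow\quad \zx \vdash D_1' \circ (\theta_{\vec r'}(\vec\alpha') \otimes \mathbb{I}^{\otimes r_k}) = D_2' \circ (\theta_{\vec r'}(\vec\alpha') \otimes \mathbb{I}^{\otimes r_k}). \]
Composing both sides with $\theta_{r_k}(\alpha_k)$ on the remaining $r_k$ inputs and invoking the local substitutivity of the ZX-calculus, the two equivalences chain together to give the desired statement, with $\theta_{\vec r}(\vec\alpha) = \theta_{\vec r'}(\vec\alpha') \otimes \theta_{r_k}(\alpha_k)$ attached to the inputs of $D_1'$ and $D_2'$.

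The only real obstacle is the bookkeeping: one must check that the multiplicity $r_i$ of $\alpha_i$ computed from the equation $\tilde D_1 = \tilde D_2$ after the first extraction coincides with the multiplicity of $\alpha_i$ in the original equation $D_1(\vec\alpha) = D_2(\vec\alpha)$, so that the induction hypothesis can be invoked with the correct parameters. This is guaranteed by the observation above that extracting $\alpha_k$ does not touch any of the other variables. The precise placement of the various $\theta_{r_i}(\alpha_i)$ within the final tensor product is irrelevant thanks to ``Only Topology Matters''.
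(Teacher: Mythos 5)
Your inductive approach is sound, and it is a reasonable way to formalize what the paper dismisses as a ``straightforward'' extension (which more likely refers to rerunning the six steps of Prop.~\ref{prop:var2inp} simultaneously for all $k$ variables: splitting every spider $R_Z(\sum_i n_i\alpha_i + c)$ into its $\alpha_1$-points, $\alpha_2$-points, etc., then handling the $\alpha_i$-negations, scalars and balancing one variable at a time). Both routes arrive at the same place.

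Two small points of care in your write-up. First, the intermediate equivalence you attribute to the induction hypothesis,
\[ \zx \vdash \tilde D_1 = \tilde D_2 \quad\Leftrightarrow\quad \zx \vdash D_1' \circ (\theta_{\vec r'}(\vec\alpha') \otimes \mathbb{I}^{\otimes r_k}) = D_2' \circ (\theta_{\vec r'}(\vec\alpha') \otimes \mathbb{I}^{\otimes r_k}), \]
is not literally what Prop.~\ref{prop:vars2inp} in the $(k-1)$-variable case gives you: as stated, it bends the $r_k$ inputs of $\tilde D_i$ into outputs, producing $D_i' : \sum_{i<k} r_i \to r_k + n + m$, not $D_i' : (\sum_{i<k} r_i) + r_k \to n + m$. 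You need to either skip the ``turn inputs into outputs'' step when invoking the IH, or bend the $r_k$ outputs back afterwards using compact structure — fine either way, and your appeal to ``Only Topology Matters'' does cover this, but it is worth flagging since it is the very place where the wire-counting of the final statement is established. Second, the chaining of the two equivalences is not a matter of composing both sides of the IH with $\theta_{r_k}(\alpha_k)$ at the meta level (that only gives one direction); what actually makes it work is that the ZX rewrites establishing the IH are local, so they apply unchanged whether or not $\theta_{r_k}(\alpha_k)$ is plugged into those $r_k$ wires. This is implicit in your invocation of local substitutivity but deserves to be stated as the justification rather than ``composing on the remaining inputs.'' The multiplicity-preservation observation is correct and is the right thing to check; the uses of \so{}, \h{}, \kt{} and the scalar lemmas are indeed all valid for arbitrary angles, so viewing $\alpha_1,\ldots,\alpha_{k-1}$ as generalised constants during the extraction of $\alpha_k$ is legitimate.
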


 Similarly Lemma \ref{lem:equivalence-Pk} can also be extended to multiple variables:
\begin{lemma}
\label{lem:equivalence-Pk-vars}
For any $k\ge  0$, any $\vec r=r_1,\ldots, r_k\in \mathbb N^k$ and any $D_1,D_2 : (\sum_ir_i) \to n$, \\$(\forall \vec \alpha\in \mathbb R^k\!, \interp{D_1\circ \theta_{\vec r}(\vec \alpha)} =  \interp{D_2\circ \theta_{\vec r}(\vec \alpha)}) \Leftrightarrow \interp{D_1\circ P_{\vec r}}=\interp{D_2\circ P_{\vec r}}$
where no $\alpha_i$ appear in $D_1$ or $D_2$, and $P_{r_1,\ldots, r_k} = P_{r_1} \otimes \ldots \otimes P_{r_k}$.  
\end{lemma}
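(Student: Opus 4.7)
The plan is to mimic the proof of Lemma \ref{lem:equivalence-Pk}, exploiting the fact that both $P_{\vec r}$ and $\theta_{\vec r}(\vec \alpha)$ are defined as tensor products over the individual variables. The goal is again to show that $\interp{P_{\vec r}}$ is a projector onto the subspace $S_{\vec r} = \mathrm{span}\{\interp{\theta_{\vec r}(\vec\alpha)} \mid \vec\alpha \in \mathbb R^k\}$, after which the two directions of the equivalence follow formally.

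First I would establish that $\interp{P_{\vec r}}$ acts as the identity on every vector of the form $\interp{\theta_{\vec r}(\vec \alpha)}$. Since
\[ P_{\vec r}\circ \theta_{\vec r}(\vec\alpha) = (P_{r_1}\otimes\cdots\otimes P_{r_k})\circ(\theta_{r_1}(\alpha_1)\otimes\cdots\otimes\theta_{r_k}(\alpha_k)) = \bigotimes_{i=1}^k \bigl(P_{r_i}\circ\theta_{r_i}(\alpha_i)\bigr), \]
applying Lemma \ref{lem:alphas-on-M} coordinate-wise gives $\zx\vdash P_{\vec r}\circ\theta_{\vec r}(\vec\alpha) = \theta_{\vec r}(\vec\alpha)$, so by soundness the same equality holds semantically. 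In particular, $S_{\vec r}$ is contained in the image of $\interp{P_{\vec r}}$, and $\interp{P_{\vec r}}$ fixes it pointwise.

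Next I would bound the rank. Since the rank of a tensor product of matrices is the product of the ranks, Lemma \ref{lem:rank} yields $\mathrm{rank}(\interp{P_{\vec r}}) \le \prod_{i=1}^k (r_i+1)$. It remains to exhibit that many linearly independent vectors in $S_{\vec r}$. For each $i$, the proof of Lemma \ref{lem:equivalence-Pk} (via the Vandermonde argument) shows that the family $\bigl(\interp{\theta_{r_i}(j\pi/r_i)}\bigr)_{j=0,\ldots,r_i}$ is linearly independent. The corresponding $\prod_i(r_i+1)$ tensor products
\[ \interp{\theta_{r_1}(j_1\pi/r_1)}\otimes\cdots\otimes\interp{\theta_{r_k}(j_k\pi/r_k)}, \qquad 0\le j_i \le r_i, \]
are then also linearly independent, by the standard fact that tensor products of linearly independent families are linearly independent. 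All these vectors lie in $S_{\vec r}$ and in the image of $\interp{P_{\vec r}}$, so $\mathrm{rank}(\interp{P_{\vec r}}) = \prod_i(r_i+1) = \dim S_{\vec r}$, and $\interp{P_{\vec r}}$ is exactly the projector onto $S_{\vec r}$.

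Finally, the equivalence follows formally. For $[\Rightarrow]$, $\interp{D_1}$ and $\interp{D_2}$ agree on every vector of $S_{\vec r}$, hence on the image of $\interp{P_{\vec r}}$, so $\interp{D_1\circ P_{\vec r}} = \interp{D_2\circ P_{\vec r}}$. For $[\Leftarrow]$, compose on the right by $\theta_{\vec r}(\vec\alpha)$ and use $\interp{P_{\vec r}\circ\theta_{\vec r}(\vec\alpha)} = \interp{\theta_{\vec r}(\vec\alpha)}$. The main subtlety is the linear-independence claim for tensor products, which is the only place where one has to go beyond a purely coordinate-wise application of the single-variable results; everything else reduces mechanically to tensor-product computations and to Lemmas \ref{lem:alphas-on-M} and \ref{lem:rank}.
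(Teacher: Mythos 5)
Your proof is correct and follows exactly the route the paper intends: the paper states this lemma as a ``straightforward extension'' of Lemma~\ref{lem:equivalence-Pk} and omits the details, which you supply by applying Lemma~\ref{lem:alphas-on-M} coordinate-wise, multiplying the rank bounds from Lemma~\ref{lem:rank}, and using the Vandermonde bases for each factor to get a linearly independent spanning set of $S_{\vec r}$ via the tensor-product independence fact. Nothing is missing; the argument that $\interp{P_{\vec r}}$ is a projector onto $S_{\vec r}$ and the two directions of the equivalence are exactly as in the single-variable case.
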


Using Proposition \ref{prop:vars2inp} and Lemma \ref{lem:equivalence-Pk-vars}, the proof of Theorem \ref{thm:provability} is a straightforward generalization of the single variable case (Proposition \ref{prop:1var}).

Notice that Theorem \ref{thm:provability} implies that if $\forall \vec \alpha \in \mathbb R^k,  \interp{D_1(\vec \alpha)}= \interp{D_2(\vec \alpha)} $ then $D_1(\vec \alpha)=D_2(\vec \alpha)$ has a \emph{uniform} proof in the ZX-calculus in the sense that the structure of the proof is the same for all the values of $\vec \alpha \in \mathbb R^k$. Indeed, following the proof of Theorem \ref{thm:provability}, the sequence of axioms which leads to a proof of $D_1(\vec \alpha)=D_2(\vec \alpha)$ is independent of the particular values of $\vec \alpha$.  Notice, however, that Theorem \ref{thm:provability} is non constructive. 

\section{Finite case-based reasoning}
\label{sec:finite-case-based-reasoning}

In order to prove that $\forall \vec \alpha \in \mathbb R^k, \zx\vdash D_1(\vec \alpha)=D_2(\vec \alpha)$ using  Theorem \ref{thm:provability}, one has to double check the semantic condition $\interp{D_1(\vec \alpha)}= \interp{D_2(\vec \alpha)}$ for all $\vec \alpha \in \mathbb R^k$, which might not be easy in practice. We show in the following two alternative ways to prove $\forall \vec \alpha \in \mathbb R^k, \zx\vdash D_1(\vec \alpha)=D_2(\vec \alpha)$ based on a finite case-based reasoning in the ZX-calculus. 

\subsection{Considering a basis}

\begin{theorem}\label{thm:basis}
For any ZX-diagrams $D_1(\vec \alpha),D_2(\vec \alpha):1\to m$ linear in $\vec \alpha=\alpha_1, \ldots, \alpha_k$ with constants in $\frac \pi 4 \mathbb Z$, if 
 $$\forall j\in \{0,1\},\forall \vec \alpha \in \mathbb R^k,  \zx\vdash {D_1(\vec \alpha)\circ R_X(j\pi)}= {D_2(\vec \alpha)\circ R_X(j\pi)}$$ 
 then $$\forall \vec \alpha \in \mathbb R^k, \zx\vdash D_1(\vec \alpha) = D_2(\vec \alpha)$$
\end{theorem}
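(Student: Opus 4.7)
The plan is to reduce the statement to the semantic level using soundness, observe that $\{R_X^{(0,1)}(0),R_X^{(0,1)}(\pi)\}$ spans $\mathbb{C}^2$, and then invoke Theorem~\ref{thm:provability} to lift back to the syntactic side.

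First, I would apply soundness to the hypothesis: for each $j\in\{0,1\}$ and each $\vec\alpha\in\mathbb{R}^k$, $\zx\vdash D_1(\vec\alpha)\circ R_X(j\pi)=D_2(\vec\alpha)\circ R_X(j\pi)$ implies $\interp{D_1(\vec\alpha)\circ R_X(j\pi)}=\interp{D_2(\vec\alpha)\circ R_X(j\pi)}$. A short computation from the semantics in Section~\ref{sec:zx-pi_4} gives $\interp{R_X^{(0,1)}(0)}=\sqrt{2}\ket{0}$ and $\interp{R_X^{(0,1)}(\pi)}=\sqrt{2}\ket{1}$, so the two equations express exactly that $\interp{D_1(\vec\alpha)}\ket{j}=\interp{D_2(\vec\alpha)}\ket{j}$ for $j\in\{0,1\}$.

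Since $\{\ket{0},\ket{1}\}$ is a basis of $\mathbb{C}^2$ and the maps $\interp{D_i(\vec\alpha)}:\mathbb{C}^2\to\mathbb{C}^{2^m}$ are linear, this forces $\interp{D_1(\vec\alpha)}=\interp{D_2(\vec\alpha)}$ for every $\vec\alpha\in\mathbb{R}^k$. This is the only nontrivial semantic step, and it is immediate.

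Finally, $D_1(\vec\alpha)$ and $D_2(\vec\alpha)$ are linear in $\vec\alpha$ with constants in $\frac{\pi}{4}\mathbb{Z}$, so the hypotheses of Theorem~\ref{thm:provability} are met. Applying that theorem yields $\zx\vdash D_1(\vec\alpha)=D_2(\vec\alpha)$ for all $\vec\alpha\in\mathbb{R}^k$, as required.

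There is no real obstacle here: the content of the theorem is the basis argument, and the heavy lifting is done by Theorem~\ref{thm:provability}. The only subtlety worth flagging is that the hypothesis is stated as a syntactic equality $\zx\vdash\cdots$ for each $\vec\alpha$, which is used only through soundness; one does not need to know anything about how the proofs of $D_1(\vec\alpha)\circ R_X(j\pi)=D_2(\vec\alpha)\circ R_X(j\pi)$ are constructed.
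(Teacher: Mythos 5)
Your proof is correct and follows essentially the same route as the paper's: soundness reduces the hypothesis to a statement about $\interp{D_i(\vec\alpha)}$ on the basis vectors $\ket{0},\ket{1}$, linearity gives $\interp{D_1(\vec\alpha)}=\interp{D_2(\vec\alpha)}$, and Theorem~\ref{thm:provability} lifts this back to provability. The only difference is that you spell out the $\sqrt{2}$ normalisation, which the paper leaves implicit; both are fine.
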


\begin{proof}
Assume $\zx\vdash {D_1(\vec \alpha)\circ R_X(j\pi)}= {D_2(\vec \alpha)\circ R_X(j\pi)}$ for any $j\in \{0,1\}$ and any $\vec \alpha \in \mathbb R^k$. It implies that for $x\in \left\lbrace \left(\begin{array}{c}1\\0\end{array}\right),\left(\begin{array}{c}0\\1\end{array}\right)\right\rbrace$, $\interp{D_1(\vec \alpha)} x= \interp{D_2(\vec \alpha)}x$, so $\interp{D_1(\vec \alpha)} = \interp{D_2(\vec \alpha)}$, which implies according to Theorem  \ref{thm:provability} $\forall \vec \alpha \in \mathbb R^k, \zx\vdash D_1(\vec \alpha) = D_2(\vec \alpha)$. 
\end{proof}

Notice that the Theorem \ref{thm:basis} can be applied recursively: in order to prove the equality between two diagrams  with  $n$ inputs, $m$ outputs, and constants in $\frac \pi 4\mathbb Z$, one can consider the $2^{n+m}$ ways to fix these inputs/outputs in a standard  basis states. It reduces the existence of a proof between two diagrams with constants in $\frac \pi 4\mathbb Z$ to the existence of proofs on scalar diagrams (diagrams with no input and no output).   

\begin{corollary}
\label{cor:distribution}
\[\forall \alpha, \beta\in \mathbb R, \zx\vdash~~ 
\InputIfFileExists{add-axiom-2.tikz}{}{\input{./figures/add-axiom-2.tikz}}
\]
\end{corollary}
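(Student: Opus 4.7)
The plan is to reduce the universally-quantified diagrammatic identity to a finite family of scalar identities via Theorem \ref{thm:basis}, and then dispatch each scalar identity using Theorem \ref{thm:provability}. Since both sides of the claimed equation are ZX-diagrams linear in $\alpha,\beta$ with constants in $\frac{\pi}{4}\mathbb Z$, both theorems apply.

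First I would apply Theorem \ref{thm:basis} in its iterated form, as described in the paragraph following it: close every free input and every free output (using the colour-swapped, upside-down version where necessary) with $R_X(j\pi)$ for $j\in\{0,1\}$. If the two diagrams have $n$ inputs and $m$ outputs in total, this yields $2^{n+m}$ equations between \emph{scalar} diagrams $D_1'(\alpha,\beta)=D_2'(\alpha,\beta)$, all still linear in $\alpha,\beta$ with constants in $\frac{\pi}{4}\mathbb Z$. By Theorem \ref{thm:basis}, proving each of these implies the original corollary.

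For each such scalar equation I would then invoke Theorem \ref{thm:provability}: it suffices to verify the semantic identity $\interp{D_1'(\alpha,\beta)}=\interp{D_2'(\alpha,\beta)}$ for all $\alpha,\beta\in\mathbb R$. Because all wires are closed, the interpretation of each side collapses to a complex number obtained as a product of factors of the form $1+e^{i\gamma}$ (coming from $R_Z$/$R_X$ spiders evaluated on $|0\rangle,|1\rangle,|+\rangle,|-\rangle$) and constants $\tfrac{1}{\sqrt 2}$ from Hadamards. Matching these scalar expressions case by case is a straightforward trigonometric/algebraic check.

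The main obstacle is not conceptual but combinatorial: one must carefully enumerate the $2^{n+m}$ basis closures and track scalar factors (including those produced by bending wires and by spider fusion on basis states) without error. Each individual check is routine, but the bookkeeping is the bulk of the work. It is also worth noting that the resulting proof is inherently non-constructive: Theorem \ref{thm:provability} guarantees the existence of a derivation inside \zx for each scalar case but does not exhibit one, so Corollary \ref{cor:distribution} inherits this non-constructive character.
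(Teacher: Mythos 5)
Your plan is sound and would indeed prove the corollary, but it diverges from the paper's proof in a way worth spelling out. The paper also starts from Theorem~\ref{thm:basis}, plugging $R_X(0)$ and $R_X(\pi)$ on the input (and, in the $\pi$-branch, iterating once more onto one output), but for each plugged case it then \emph{exhibits an explicit ZX derivation}, a sequence of applications of the axioms and of the appendix lemmas, that rewrites both sides to a common diagram. Only the final lift back to the unplugged equation is delegated to the non-constructive Theorem~\ref{thm:basis}. You instead propose to iterate all the way down to scalar ($0\to0$) diagrams and then close each scalar case via the semantic direction of Theorem~\ref{thm:provability}; the whole argument then rests on existence of a derivation rather than producing one. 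Both are logically correct, and you are right that the resulting proof is non-constructive; but the paper's version actually yields a derivation for the plugged cases, which is the pedagogical point of the worked example. There is also a logical redundancy in your layering: once you have closed all wires and are prepared to verify the semantic identity for every scalar closure, you have in effect verified $\interp{D_1(\alpha,\beta)}=\interp{D_2(\alpha,\beta)}$ on a spanning set of basis vectors, which is the same as verifying matrix equality outright, so Theorem~\ref{thm:provability} could be invoked directly on the original equation without any detour through Theorem~\ref{thm:basis}. Theorem~\ref{thm:basis} earns its keep precisely when you intend to \emph{exhibit} ZX derivations for the plugged cases, as the paper does, and typically stops well short of the full $2^{n+m}$ scalar closures.
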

\begin{proof}
We can prove that this equality is derivable by plugging our basis $\left(\rx{},\rx{$\pi$}\right)$ on the input and one of the outputs. The detail is given in the appendix at Section \ref{prf:distribution}.
\end{proof}

\subsection{Considering a finite set of angles}


\begin{theorem}
\label{thm:valuations}
For any ZX-diagrams $D_1(\vec \alpha),D_2(\vec \alpha):n\to m$ linear in $\vec \alpha = \alpha_1, \ldots, \alpha_k$ with constants in $\frac \pi 4 \mathbb Z$, if 
 $$\forall \vec \alpha \in T_1\times \ldots \times T_k,  \zx\vdash {D_1(\vec \alpha)}= {D_2(\vec \alpha)}$$ 
 then $$\forall \vec \alpha \in \mathbb R^k, \zx\vdash D_1(\vec \alpha) = D_2(\vec \alpha)$$
 with $T_i$  a set of $\mu_i+1$ 
 distinct angles in $\mathbb{R}/2\pi\mathbb{Z}$
  where $\mu_i$ is the multiplicity of $\alpha_i$ in $D_1(\vec \alpha) = D_2(\vec \alpha)$. 
\end{theorem}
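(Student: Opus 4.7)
The plan is to reduce to Theorem \ref{thm:provability}: it suffices to establish that the semantic equality $\interp{D_1(\vec\alpha)}=\interp{D_2(\vec\alpha)}$ holds for all $\vec\alpha\in\mathbb R^k$, since then the derivability follows. The hypothesis, together with the soundness of the \zx-calculus, already gives this equality on the finite grid $T_1\times\ldots\times T_k$, so the task is to interpolate from the grid to all of $\mathbb R^k$.

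First I would write each entry of $\interp{D_i(\vec\alpha)}$ as a function of $\vec\alpha$. Because every angle of every spider is of the form $\sum_j n_j\alpha_j + c$ with $n_j\in\mathbb Z$ and $c\in\frac\pi 4\mathbb Z$, each matrix entry of $\interp{D_i(\vec\alpha)}$ is a Laurent polynomial in the variables $z_j:=e^{i\alpha_j}$. Inspecting the semantics of $R_Z$ and $R_X$ spiders and the inductive definitions of $\mu^+_{\alpha_j}$ and $\mu^-_{\alpha_j}$, one checks that the degree of $z_j$ appearing in $\interp{D_i(\vec\alpha)}$ is at most $\mu^+_{\alpha_j}(D_i)$ and the degree of $z_j^{-1}$ is at most $\mu^-_{\alpha_j}(D_i)$. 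Hence, after multiplying the difference $\interp{D_1(\vec\alpha)}-\interp{D_2(\vec\alpha)}$ (entrywise) by $\prod_j z_j^{m_j}$ with $m_j:=\max_i\mu^-_{\alpha_j}(D_i)$, each entry becomes a genuine polynomial $P(z_1,\dots,z_k)$ of degree at most $\mu_j$ in $z_j$.

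Next I would observe that, since $e^{i\alpha}=e^{i\alpha'}\Leftrightarrow \alpha\equiv\alpha'\pmod{2\pi}$, the set $\{e^{i\alpha}:\alpha\in T_j\}\subset\mathbb C$ consists of exactly $\mu_j+1$ distinct values. By soundness, $P$ vanishes on the product grid of these $\prod_j(\mu_j+1)$ points. I would then invoke the standard multivariate polynomial interpolation lemma: a polynomial with degree at most $d_j$ in its $j$-th variable that vanishes on a product $S_1\times\ldots\times S_k$ with $|S_j|\geq d_j+1$ is the zero polynomial. This is a short induction on $k$: for fixed values of $z_2,\dots,z_k$ chosen in the grid, the univariate restriction in $z_1$ has degree $\leq d_1$ and $\geq d_1+1$ zeros, hence is zero; regrouping $P$ by powers of $z_1$ shows each coefficient polynomial in $(z_2,\dots,z_k)$ vanishes on $S_2\times\ldots\times S_k$, so we conclude by induction.

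Applying this to each matrix entry yields $\interp{D_1(\vec\alpha)}=\interp{D_2(\vec\alpha)}$ for all $\vec\alpha\in\mathbb R^k$, and Theorem \ref{thm:provability} then delivers $\forall \vec\alpha\in\mathbb R^k, \zx\vdash D_1(\vec\alpha)=D_2(\vec\alpha)$. The only mildly subtle point is matching the exact combinatorial bound: one must be careful that $\mu_j=\max_i\mu^+_{\alpha_j}(D_i)+\max_i\mu^-_{\alpha_j}(D_i)$ is indeed the right upper bound on the degree of $P$ in $z_j$ after clearing the $z_j^{-1}$ factors, which is exactly what the definition of multiplicity is tailored to guarantee; the polynomial-interpolation argument itself is entirely standard.
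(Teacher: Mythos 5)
Your proof is correct, and it takes a genuinely different route from the paper's. The paper's argument stays inside the $\theta_r / P_r$ machinery set up for Theorem~\ref{thm:provability}: it observes that in the proof of Lemma~\ref{lem:equivalence-Pk} only $\mu_\alpha+1$ values of $\alpha$ were needed (because the vectors $\interp{\theta_r(\alpha)}$ for those values form a basis of the $(\mu_\alpha+1)$-dimensional space $S_{\mu_\alpha}$, by the Vandermonde argument), and tensors this across the $k$ variables to conclude that agreement on the grid already forces $\interp{D_1'\circ P_{\vec r}}=\interp{D_2'\circ P_{\vec r}}$, after which the chain of Proposition~\ref{prop:vars2inp} and Lemma~\ref{lem:alphas-on-M} gives derivability. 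You instead treat Theorem~\ref{thm:provability} as a black box and argue entirely at the semantic level: each matrix entry of $\interp{D_i(\vec\alpha)}$ is a Laurent polynomial in $z_j=e^{i\alpha_j}$ whose positive and negative degrees are controlled exactly by $\mu^+_{\alpha_j}(D_i)$ and $\mu^-_{\alpha_j}(D_i)$ (a routine induction over the generators and compositions, which is precisely what the inductive definition of multiplicity is tuned for), so after clearing denominators the entrywise difference is a genuine polynomial of degree at most $\mu_j$ in $z_j$, and grid-vanishing plus multivariate interpolation forces it to be identically zero. Both proofs ultimately rest on the same Vandermonde-flavoured fact; yours has the advantage of not re-opening the proof of Lemma~\ref{lem:equivalence-Pk} and of making the role of the multiplicity $\mu_j$ as a degree bound explicit and self-contained, while the paper's version is shorter on the page because it reuses the spanning/rank infrastructure already in place. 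One small point worth being explicit about, since you flag it yourself as "mildly subtle": you need not just that the degree of the difference in $z_j$ is at most $\mu_j$ after clearing denominators, but that this bound is uniform over the two sides; this is exactly why the multiplicity is defined with $\max_i\mu^+_{\alpha_j}(D_i)+\max_i\mu^-_{\alpha_j}(D_i)$ rather than, say, the per-diagram sums, and your choice of $m_j=\max_i\mu^-_{\alpha_j}(D_i)$ handles it correctly.
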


\begin{proof}
In the proof of Lemma \ref{lem:equivalence-Pk}, we actually only used $\mu_{\alpha}+1$ values of $\alpha$, that constitute a basis of $S_{\mu_{\alpha}}$. This extends naturally to several variables: the dimension of $S_{\mu_{\alpha_1}}\times\cdots\times S_{\mu_{\alpha_k}}$ is $(\mu_{\alpha_1}+1)\times\cdots\times(\mu_{\alpha_k}+1)$, and taking $\vec \alpha \in T_1\times \ldots \times T_k$ gives as many linearly independent vectors in (hence a basis of) $S_{\mu_{\alpha_1}}\times\cdots\times S_{\mu_{\alpha_k}}$.
\end{proof}

\begin{corollary}
\label{cor:big-scalar-equation}
\[
\InputIfFileExists{big-scalar-equation.tikz}{}{\input{./figures/big-scalar-equation.tikz}}
\] 
\end{corollary}
\begin{proof}
Notice that $\mu_{\alpha}=2$ in this equation. Hence we just need to evaluate it for three values of $\alpha$, for instance $0$, $\pi$ and $\frac{\pi}{2}$. We actually do not need to also evaluate $\beta$, although if we had to, since $\mu_{\beta} = 3$, we would have needed 4 different values for this variable, and so $12$ valuations for the pair $(\alpha,\beta)$.
Details are in appendix at Section \ref{prf:big-scalar-equation}.
\end{proof}

\begin{remark}
The number of occurrences of a variable is not to be mistaken for its multiplicity. For instance consider the following equation:
\[\rz{$\alpha$}=\rz{-$\alpha$}\]
This equation is obviously wrong in general, but not for $0$ and $\pi$. If we tried to apply Theorem \ref{thm:valuations} with the number of occurrences (which seems to be $1$), then we might end up with the wrong conclusion. The multiplicity (here $\mu_{\alpha}=2$) prevents this.
\end{remark}

%

\section{Diagram substitution}
\label{sec:diagram-substitution}

\begin{definition}
A diagram $D:0 \to n$ is symmetric if for any permutation $\tau$ on $\{1,\ldots n\}$, $$Q_\tau(\interp {D}) = \interp{D}$$ where $Q_\tau:\mathbb C^{2^r}\to \mathbb C^{2^r}$ is the unique morphism such that:\\
$\forall \varphi_1,\ldots, \varphi_r\in \mathbb C^{2}$, $Q_\tau(\varphi_1 \otimes \ldots \otimes \varphi_r)=\varphi_{\tau(1)} \otimes \ldots \otimes \varphi_{\tau(r)}$. 
\end{definition}

In particular for any diagram $D_0:0\to 1$, $D_0\otimes \ldots \otimes D_0$ is a symmetric diagram. 

\begin{theorem}
\label{thm:diagram-substitution}
For any $D_1(\vec \alpha), D_2(\vec \alpha):r\to n$ and any symmetric $D(\vec \alpha):0\to r$ such that  $D_1(\vec \alpha)$, $D_2(\vec \alpha)$, and $D(\vec \alpha)$ are linear in $\vec \alpha$ with constants in $\frac \pi 4\mathbb Z$, if $\forall \alpha_0\in \mathbb R, \forall\vec \alpha\in\mathbb{R}^k, \zx\vdash D_1(\vec \alpha) \circ \theta_r(\alpha_0) = D_2(\vec \alpha) \circ \theta_r(\alpha_0)$ then $\forall\vec \alpha\in\mathbb{R}^k, \zx\vdash D_1(\vec \alpha) \circ D(\vec \alpha) = D_2(\vec \alpha) \circ D(\vec  \alpha)$ i.e., pictorially:
\begin{align*}
\forall \alpha_0\in \mathbb R, \forall\vec \alpha\in\mathbb{R}^k,~~ \zx\vdash {
\InputIfFileExists{2-gn-alpha-to-D_1-bis.tikz}{}{\input{./figures/2-gn-alpha-to-D_1-bis.tikz}}
}  = {
\InputIfFileExists{2-gn-alpha-to-D_2-bis.tikz}{}{\input{./figures/2-gn-alpha-to-D_2-bis.tikz}}
}\\
 \Rightarrow ~~\forall\vec \alpha\in\mathbb{R}^k, \zx\vdash {
\InputIfFileExists{2-gn-D_1-X-bis.tikz}{}{\input{./figures/2-gn-D_1-X-bis.tikz}}
} = {
\InputIfFileExists{2-gn-D_2-X-bis.tikz}{}{\input{./figures/2-gn-D_2-X-bis.tikz}}
}
\end{align*}
\end{theorem}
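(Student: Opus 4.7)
The plan is to combine soundness of the ZX-calculus with an identification of the space $S_r = \mathop{\mathrm{span}}\set{\interp{\theta_r(\alpha_0)}}{\alpha_0 \in \mathbb{R}}$ (already appearing in the proof of Lemma~\ref{lem:equivalence-Pk}) with the symmetric subspace of $(\mathbb{C}^2)^{\otimes r}$, and then invoking Theorem~\ref{thm:provability} to turn a semantic equality back into a provable one.

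First I would apply soundness to the hypothesis, getting, for every $\vec\alpha \in \mathbb{R}^k$ and every $\alpha_0 \in \mathbb{R}$, the semantic identity $\interp{D_1(\vec\alpha)}\circ\interp{\theta_r(\alpha_0)} = \interp{D_2(\vec\alpha)}\circ\interp{\theta_r(\alpha_0)}$; equivalently, the linear maps $\interp{D_1(\vec\alpha)}$ and $\interp{D_2(\vec\alpha)}$ agree on the subspace $S_r$. The key step is then to recognize $S_r$ as the full symmetric subspace of $(\mathbb{C}^2)^{\otimes r}$. One inclusion is immediate: $\interp{\theta_r(\alpha_0)} = (1,e^{i\alpha_0})^{\otimes r}$ is invariant under permutations of tensor factors, so $S_r$ sits inside the $(r+1)$-dimensional symmetric subspace. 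For the other inclusion, expand in the unnormalised Dicke basis $\ket{W_k} = \sum_{|s|=k}\ket{s}$ (for $k=0,\ldots,r$), which spans the symmetric subspace: $\interp{\theta_r(\alpha_0)} = \sum_{k=0}^{r} e^{ik\alpha_0}\ket{W_k}$. The Vandermonde argument in the proof of Lemma~\ref{lem:equivalence-Pk}, evaluated at $\alpha^{(j)} = j\pi/r$, already gives $r+1$ linearly independent such vectors in $S_r$, so $\dim S_r = r+1$ and the two spaces coincide.

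Since $D(\vec\alpha)$ is symmetric, $\interp{D(\vec\alpha)}$ lies in the symmetric subspace for every $\vec\alpha$, hence in $S_r$. Combining with the previous step,
\[
\interp{D_1(\vec\alpha) \circ D(\vec\alpha)} = \interp{D_1(\vec\alpha)}\circ\interp{D(\vec\alpha)} = \interp{D_2(\vec\alpha)}\circ\interp{D(\vec\alpha)} = \interp{D_2(\vec\alpha) \circ D(\vec\alpha)}
\]
for every $\vec\alpha \in \mathbb{R}^k$. The composition $D_i \circ D$ is again linear in $\vec\alpha$ with constants in $\frac{\pi}{4}\mathbb{Z}$ (linearity of angles is preserved by sequential and spatial composition, since no new angles are introduced), so Theorem~\ref{thm:provability} promotes this semantic equality to the desired $\zx\vdash D_1(\vec\alpha)\circ D(\vec\alpha) = D_2(\vec\alpha)\circ D(\vec\alpha)$. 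The only mild obstacle is the identification $S_r = $ symmetric subspace; everything else is a bookkeeping exercise chaining soundness and Theorem~\ref{thm:provability}.
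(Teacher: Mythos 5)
Your proof is correct and rests on the same conceptual core as the paper's: the identification of $S_r=\mathop{\mathrm{span}}\{\interp{\theta_r(\alpha_0)}\}$ with the $(r{+}1)$-dimensional symmetric subspace of $(\mathbb{C}^2)^{\otimes r}$, so that $\interp{D(\vec\alpha)}\in S_r$. The only difference is bookkeeping: you apply Theorem~\ref{thm:provability} once, directly to the semantic equality $\interp{D_1(\vec\alpha)\circ D(\vec\alpha)}=\interp{D_2(\vec\alpha)\circ D(\vec\alpha)}$, whereas the paper routes through the projector $P_r$ (first deriving $\zx\vdash D_1(\vec\alpha)\circ P_r = D_2(\vec\alpha)\circ P_r$ and then separately $\zx\vdash P_r\circ D(\vec\alpha)=D(\vec\alpha)$) before cancelling $P_r$ -- your shortcut is a small genuine simplification.
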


\begin{proof}
If $\forall \alpha_0\in \mathbb R, \forall\vec \alpha\in\mathbb{R}^k, \zx\vdash D_1(\vec \alpha) \circ \theta_r(\alpha_0) = D_2(\vec \alpha) \circ \theta_r(\alpha_0)$ then $\interp{D_1(\vec \alpha) \circ \theta_r(\alpha_0)} = \interp{D_2(\vec \alpha) \circ \theta_r(\alpha_0)}$, so according to Lemma \ref{lem:equivalence-Pk}, $\interp{D_1(\vec \alpha) \circ P_r} = \interp{D_2(\vec \alpha) \circ P_r}$. It implies that $\zx\vdash D_1(\vec \alpha) \circ P_r  =D_2(\vec \alpha) \circ P_r$, so $\zx\vdash D_1(\vec \alpha) \circ P_r \circ D(\vec \alpha)  =D_2(\vec \alpha) \circ P_r\circ D(\vec \alpha)$. To complete the proof, it is enough to show that $\zx\vdash P_r\circ D(\vec \alpha) = D(\vec \alpha)$. \\
Let $\mathcal S = \{\interp D ~|~ÊD:0\to n \text{ symmetrical}\}$. First we show that $\mathcal S$ is of dimension at most $r+1$. 
Indeed, notice that if $\varphi \in \mathcal S$, then $\forall i,j\in \{0,\ldots , 2^r-1\}$ s.t. $|i|_1 = |j|_1$, $\varphi_i = \varphi_j$, where $|x|_1$ is the Hamming weight of the binary representation of $x$. As a consequence, for any $\varphi\in \mathcal S$, $\exists a_0, \ldots ,a_r\in \mathbb C$ s.t. $\varphi = \sum_{h=0}^na_h\varphi^{(h)}$ where $\varphi^{(h)}\in \mathbb C^{2^r}$ is defined as $\varphi^{(h)}_i=\begin{cases}1&\text{if $|i|_1 = h$}\\0&\text{otherwise}\end{cases}$. Thus $\mathcal S$ is of dimension at most $r+1$.  Moreover, for any $\alpha\in \mathbb R$, $\interp{\theta_r(\alpha)}\in \mathcal S$, so $\mathcal S \subseteq \mathcal S_r:= \mathop{\mathrm{span}} \{ \interp{\theta_r(\alpha)}~|~\alpha \in \mathbb{R}\}$. Since $\mathcal S_r$ is of dimension $r+1$ (see proof of Lemma \ref{lem:equivalence-Pk}), $\mathcal S=\mathcal S_r$. 
As a consequence $\interp D\in \mathcal S_r$, so $\interp {Pr}\circ \interp {D(\vec \alpha)} = \interp {D(\vec \alpha)}$, since, according to Lemma \ref{lem:alphas-on-M} for any $\alpha\in \mathbb R$, $\interp{P_r\circ \theta_r(\alpha)}= \interp {\theta_r(\alpha)}$. Thus, $\zx\vdash P_r\circ D(\vec \alpha)$ thanks to Theorem \ref{thm:provability}.
\end{proof}

\begin{corollary}
\label{cor:gen-supp}
 \[\forall\alpha,\beta\in\mathbb{R}^2,\quad \zx\vdash
\InputIfFileExists{new-supplementarity.tikz}{}{\input{./figures/new-supplementarity.tikz}}
\]
\end{corollary}
\begin{proof}
 Indeed, simply by decomposing the colour-swapped version of \supp{} using \so{}, we can derive:
 \[\forall\alpha\in\mathbb{R},\quad\zx\vdash 
\InputIfFileExists{decomposed-supp.tikz}{}{\input{./figures/decomposed-supp.tikz}}
\]
Now we just need to apply Theorem \ref{thm:diagram-substitution} with 
$$
\InputIfFileExists{diag-substitution-gen-supp-arxiv.tikz}{}{\input{./figures/diag-substitution-gen-supp-arxiv.tikz}}
$$ which is clearly symmetrical, and use \so{} to merge the adjacent red nodes.
\end{proof}

\section{Completion of ZX-calculus for general quantum mechanics}
\label{sec:completion}

\subsection{Incompleteness}

\begin{figure*}[!b]
 \centering
 \hypertarget{r:rules-c}{}
\begin{tabular}{|c@{$\qquad$}c|}
   \hline
   & \\
   \multicolumn{2}{|c|}{
\InputIfFileExists{spider-1.tikz}{}{\input{./figures/spider-1.tikz}}
 $\qquad$ 
\InputIfFileExists{s2-simple.tikz}{}{\input{./figures/s2-simple.tikz}}
 $\qquad$ 
\InputIfFileExists{induced_compact_structure-2wire.tikz}{}{\input{./figures/induced_compact_structure-2wire.tikz}}
}\\
   & \\
   
\InputIfFileExists{k2s.tikz}{}{\input{./figures/k2s.tikz}}
&
\InputIfFileExists{bicolor_pi_4_eq_empty.tikz}{}{\input{./figures/bicolor_pi_4_eq_empty.tikz}}
\\
   & \\
   
\InputIfFileExists{b1s.tikz}{}{\input{./figures/b1s.tikz}}
& 
\InputIfFileExists{b2s.tikz}{}{\input{./figures/b2s.tikz}}
\\
   & \\
   
\InputIfFileExists{euler-decomp-scalar-free.tikz}{}{\input{./figures/euler-decomp-scalar-free.tikz}}
&  
\InputIfFileExists{h2.tikz}{}{\input{./figures/h2.tikz}}
\\
   & \\
   
\InputIfFileExists{commutation-of-controls-general-simplified.tikz}{}{\input{./figures/commutation-of-controls-general-simplified.tikz}}
& 
\InputIfFileExists{former-supp.tikz}{}{\input{./figures/former-supp.tikz}}
 \\
   & \\
   
\InputIfFileExists{BW-simplified.tikz}{}{\input{./figures/BW-simplified.tikz}}
 & 
\InputIfFileExists{add-axiom-3.tikz}{}{\input{./figures/add-axiom-3.tikz}}
\\
   & \\
   \hline
  \end{tabular}
 \caption[]{Set of rules for the general \zxcalc with scalars, denoted $\zx_c$. All of these rules also hold when flipped upside-down, or with the colours red and green swapped. The right-hand side of (E) is an empty diagram. (...) denote zero or more wires, while (\protect\rotatebox{45}{\raisebox{-0.5em}{$\cdots$}}) denote one or more wires.}
 \label{fig:ZX_rules-complete}
\end{figure*}

The axiomatisation of ZX-calculus (figure \ref{fig:ZX_rules}) is complete for the Clifford+T quantum mechanics  --i.e. the $\frac \pi 4$-fragment--, but is not complete in general:

\begin{theorem}
\label{thm:incompleteness}
There exist two ZX-diagrams $D_1$ and $D_2$ such that:
\[\interp{D_1}=\interp{D_2}\qquad\text{and}\qquad \zx\nvdash D_1=D_2\]
\end{theorem}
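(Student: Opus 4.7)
The plan is to exhibit a pair of diagrams with identical matrix interpretations but no $\zx$-derivation between them. The central observation is that every axiom of Figure \ref{fig:ZX_rules} is $\mathbb{Z}$-linear in its angle parameters with constants in $\frac{\pi}{4}\mathbb{Z}$; this severely constrains which semantically valid equations can be derived. Since the new axiom added in Figure \ref{fig:ZX_rules-complete} is explicitly non-linear, the incompleteness must be witnessed by some equation whose justification is itself non-linear in the angles.

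First, I would define an invariant $\Phi$ on ZX-diagrams such that $\zx\vdash D_1=D_2$ implies $\Phi(D_1)=\Phi(D_2)$. A natural approach is to build a non-standard interpretation valued in an auxiliary algebra that tracks, in addition to the matrix, algebraic information about the angles appearing in the diagram (for example, the $\mathbb{Z}$-submodule of $\mathbb{R}/2\pi\mathbb{Z}$ that the angles generate, modulo $\frac{\pi}{4}\mathbb{Z}$). Each of the twelve rules of Figure \ref{fig:ZX_rules} must then be checked to be sound under this alternative interpretation; the reverse direction of the spider rule \so{} is the main concern, since it can introduce fresh angles $\alpha,\gamma-\alpha$ summing to a fixed $\gamma$, so $\Phi$ must be stable under such splittings.

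Next, I would exhibit specific diagrams $D_1,D_2$ built from angles outside $\frac{\pi}{4}\mathbb{Z}$, whose matrices coincide via a genuinely non-linear trigonometric identity, for example $1+e^{i\alpha} = 2\cos(\alpha/2)\,e^{i\alpha/2}$, or an identity such as $(1+e^{i\pi/3})(1+e^{-i\pi/3})=3$ producing a scalar that lives outside the Clifford+T scalar ring $\mathbb{Z}[e^{i\pi/4},1/\sqrt{2}]$ and can therefore only be reached using genuinely non-$\frac{\pi}{4}$ angles. Because any semantic bridge between the two forms requires halving the angle or otherwise combining angles non-linearly, the invariant $\Phi$ should separate $D_1$ from $D_2$, witnessing $\zx\nvdash D_1=D_2$.

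The hardest part will be the design of $\Phi$: it must be preserved by all rules of Figure \ref{fig:ZX_rules} (including those with fixed $\frac{\pi}{4}$ constants, such as \e{} and \nfi{}) yet remain fine enough to separate the candidate $D_1$ and $D_2$. This tension is precisely what motivates the single additional non-linear axiom in Figure \ref{fig:ZX_rules-complete}: adding that axiom should destroy the invariant $\Phi$, matching the completeness result for the enlarged system $\zx_c$.
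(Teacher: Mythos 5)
Your overall strategy — construct an invariant $\Phi$ preserved by all axioms in Figure~\ref{fig:ZX_rules} and a pair of semantically equal diagrams separated by $\Phi$ — is exactly the right shape, and you correctly identify both a plausible counterexample family (cube-root-of-unity scalar identities) and the chief obstruction (the splittable spider rule \so{}, which can introduce fresh angles $\alpha, \gamma-\alpha$). However, the proposal stops precisely at the hard step: you never actually produce a $\Phi$ that survives that obstruction, and the candidate you sketch — tracking the $\mathbb{Z}$-submodule of $\mathbb{R}/2\pi\mathbb{Z}$ generated by the angles, modulo $\frac{\pi}{4}\mathbb{Z}$ — cannot be made to work as stated, since \so{} applied backwards on any green node freely enlarges that submodule, and scalar cancellations allow arbitrary angles to appear and disappear. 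Without a concrete preserved quantity, there is no proof yet.

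The paper's construction resolves this in one stroke with a much simpler idea than an auxiliary algebra: it uses an \emph{alternative standard interpretation} $\interp{\cdot}_9$ that multiplies every angle by $9$. Because multiplication by $9$ is a group endomorphism of $\mathbb{R}/2\pi\mathbb{Z}$, it automatically commutes with every linear combination of angles — so \so{}, \kt{}, \h{}, \supp{}, \com{} are all preserved with no case analysis — and because $9\equiv 1\pmod 8$, every multiple of $\frac{\pi}{4}$ is fixed, so the constant-specific rules \e{} and \nfi{} are preserved too. The witness equation is the scalar identity $(1+e^{i2\pi/3})(1+e^{i4\pi/3})=1$ (two $R_Z^{(0,0)}$ nodes equal the empty diagram), which under $\interp{\cdot}_9$ becomes $(1+1)(1+1)=4\neq 1$; since $\interp{\cdot}_9$ is sound for the axioms, the equation is not derivable. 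Incidentally, the same interpretation separates your suggested pair based on $(1+e^{i\pi/3})(1+e^{-i\pi/3})=3$, since multiplying $\pm\pi/3$ by $9$ gives $\pi$, sending the left side to $0\neq 3$ — but note that your side-remark is wrong: $3\in\mathbb{Z}\subset\mathbb{Z}[e^{i\pi/4},1/\sqrt 2]$, so the scalar is not outside the Clifford+T ring; the obstruction comes from the non-$\frac{\pi}{4}$ angles in the diagram, not from the value of the scalar. The gap in your proposal is thus the missing concrete invariant, and the missing insight is that a linear angle-rescaling fixing $\frac{\pi}{4}\mathbb{Z}$ modulo $2\pi$ (any multiplier $\equiv 1 \pmod 8$, other than $1$) does the job with no further checking.
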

\begin{proof}
Consider the following equation:
\[
\InputIfFileExists{incomplete-pi_3.tikz}{}{\input{./figures/incomplete-pi_3.tikz}}
\]
This equation is sound, it represents $$(1+e^{i\frac{2\pi}{3}})(1+e^{i\frac{4\pi}{3}})=1+e^{i\frac{2\pi}{3}}+e^{i\frac{4\pi}{3}}+e^{i\frac{6\pi}{3}}=1$$
However, consider the interpretation $\interp{.}_9$ that multiplies all the angles by $9$. All the multiples of $\frac{\pi}{4}$ remain unchanged ($\frac{k\pi}{4}\times9=\frac{k\pi}{4}+2k\pi=\frac{k\pi}{4}$). It is then easy to show that all the rules in Figure \ref{fig:ZX_rules} hold with this interpretation. However:
\def\fig{incomplete-pi_3-proof}
\[\interp{}_9\eq{}\ \neq\ \eq{}\interp{}_9\]
Indeed the left hand side amounts to $4$ while the right hand side amounts to $1$.
Since all the rules in Figure \ref{fig:ZX_rules} hold with this interpretation, if the calculus were complete, then it would prove the above equation and so its interpretation would hold. It does not, so the \zxcalc is not complete.
\end{proof}

Notice that thanks to Theorem \ref{thm:provability}, a completion of the ZX calculus would imply to  add  either non linear axioms, or axioms with constants not multiple of $\pi/4$. Such potential axioms have already been discovered, for instance the cyclotomic supplementarity \cite{cyclo}:
\[
\InputIfFileExists{cyclo-supp.tikz}{}{\input{./figures/cyclo-supp.tikz}}
\quad\text{(SUP}_n\text{)}\]
Adding this family of axioms to those of Figure \ref{fig:ZX_rules} would nullify the counterexample in the proof of \ref{thm:incompleteness} (the equality is derivable from ZX+(SUP$_3$)). However, the \zxcalc, with this set of axioms, would still be incomplete. Indeed, the argument given in \cite{cyclo} still holds here.

In the following, we actually show that adding one axiom to the set in Figure \ref{fig:ZX_rules} is sufficient to get the completeness in general. Contrary to the previous family of axioms, this one manipulates angles in a non-linear fashion.

\subsection{A complete axiomatisation}

We add a new axiom \add{-c} to the previous set of axioms, and define $\zx_c$ as the resulting set of axioms. This set is given in Figure \ref{fig:ZX_rules-complete}. The side condition $2e^{i\theta_3}\cos(\gamma)=e^{i\theta_1}\cos(\alpha)+e^{i\theta_2}\cos(\beta)$ forces this axiom to be non-linear. As announced:

\begin{theorem}
\label{thm:completeness}
The set of rules $\zx_c$ (Figure \ref{fig:ZX_rules-complete}) is complete. For any two ZX-diagrams $D_1$ and $D_2$:
\[\interp{D_1}=\interp{D_2}\equi{}\zx_c\vdash D_1=D_2\]
\end{theorem}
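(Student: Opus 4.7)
The plan is to prove completeness by reduction to the completeness of the ZW-calculus for matrices over $\mathbb{C}$, following the strategy alluded to in the introduction. First, I would fix an interpretation $[\cdot]_{ZW}$ from ZX-diagrams to ZW-diagrams that is sound, i.e.\ preserves the standard semantics: $\interp{D} = \interp{[D]_{ZW}}$ for every ZX-diagram $D$. Symmetrically, I would fix a translation $\langle\cdot\rangle_{ZX}$ from ZW-diagrams back to ZX-diagrams that is also sound. Such translations are already available in the literature for the suitably extended ZW; the non-trivial content will be entirely on the ZX side, in how one \emph{proves} the translated ZW-axioms.

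The completeness argument would then follow the standard three-step pattern. Given $D_1, D_2$ with $\interp{D_1} = \interp{D_2}$: by soundness, $\interp{[D_1]_{ZW}} = \interp{[D_2]_{ZW}}$; by completeness of the ZW-calculus for all complex matrices (Amar), $ZW \vdash [D_1]_{ZW} = [D_2]_{ZW}$; then applying $\langle\cdot\rangle_{ZX}$ to the proof, $\zx_c \vdash \langle[D_1]_{ZW}\rangle_{ZX} = \langle[D_2]_{ZW}\rangle_{ZX}$. The argument is closed by proving the round-trip identity $\zx_c \vdash \langle[D]_{ZW}\rangle_{ZX} = D$ for every ZX-diagram~$D$. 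This last step only needs to be checked on the ZX-generators, and for the generators handled by the $\pi/4$-fragment it is a direct consequence of Theorem \ref{thm:provability} (the round-trip produces diagrams that are linear with constants in $\frac{\pi}{4}\mathbb{Z}$, so any semantic equality there is already derivable without invoking \add{-c}).

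The two technical obligations that remain are (a)~showing that every axiom of the ZW-calculus, once translated via $\langle\cdot\rangle_{ZX}$, becomes a $\zx_c$-derivable equation, and (b)~showing that every ZW-generator, once translated to a ZX-diagram and then back via $[\cdot]_{ZW}$, is $\zx_c$-equal to itself. Most of the ZW-axioms translate to linear equations with $\pi/4$-constants and are therefore dispatched by invoking Theorem \ref{thm:provability}, possibly together with the case-based and substitution tools of Theorems \ref{thm:basis}, \ref{thm:valuations} and \ref{thm:diagram-substitution}. The genuinely non-trivial axioms are those of ZW that encode the \emph{addition of complex numbers}: the translation of the ZW-summation generator produces a ZX-equation relating three spiders carrying arbitrary angles $\alpha,\beta,\gamma$ and scalar prefactors $e^{i\theta_1}, e^{i\theta_2}, e^{i\theta_3}$ in precisely the form $2e^{i\theta_3}\cos(\gamma) = e^{i\theta_1}\cos(\alpha) + e^{i\theta_2}\cos(\beta)$. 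This is exactly the side condition of the new axiom \add{-c}, which is why a single non-linear axiom suffices.

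The hard part of the proof will therefore be the bookkeeping step~(a): systematically going through the list of ZW-axioms, translating each one to ZX, and exhibiting a $\zx_c$-derivation. Most of these derivations reduce to Theorem \ref{thm:provability} and are purely mechanical, but identifying the minimal use of \add{-c} — and in particular verifying that it is exactly the algebraic content needed to simulate the addition of arbitrary complex scalars in ZW — is where the real work lies. Once this is done, the round-trip identity (b)~and the final chain of implications fall into place immediately, giving the completeness of $\zx_c$.
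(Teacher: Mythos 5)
Your proposal follows exactly the paper's strategy: a sound translation $[\cdot]_W$ from ZX to ZW, appeal to ZW-completeness, a translation $[\cdot]_X$ back that sends each ZW-axiom to a $\zx_c$-derivable equation (the paper's Proposition \ref{prop:X-is-homomorphism}), and the round-trip identity $\zx_c\vdash [[D]_W]_X = D$ on generators (the paper's Proposition \ref{prop:double-interpretation-equivalence-1}); you also correctly pinpoint that the one genuinely non-linear obligation is the ZW addition axiom, which is precisely what \add{-c} encodes. The only blemish is that your stated obligation~(b) garbles the direction of the round-trip (it should be ZX $\to$ ZW $\to$ ZX applied to ZX-generators, as you in fact state correctly a sentence earlier), but the intended argument and the paper's argument coincide.
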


The rest of the article is dedicated to the proof of this theorem.

\subsubsection{ZW-Calculus}
\label{sec:zw}

To do so, as in \cite{JPV,NgWang}, we will use the completeness of another graphical calculus for quantum mechanics called ZW-Calculus, that we present in this section.


The GHZ/W-Calculus, developed by Coecke and Kissinger \cite{ghz-w}, has been turned into another language, called ZW-Calculus by Hazihasanovic, who also proved its completeness \cite{zw}. This language initially dealt with matrices over $\mathbb{Z}$, but it has been expanded later on, and its more universal version deals with $\mathbb{C}$ \cite{Amar}. It is generated by:
\begin{align*}
T_e=\left\lbrace 

\InputIfFileExists{Z-param.tikz}{}{\input{./figures/Z-param.tikz}}
~,~
\begin{tikzpicture}
	\begin{pgfonlayer}{nodelayer}
		\node [style=dot] (0) at (0, -0) {};
		\node [style=none] (1) at (0, 0.5) {};
		\node [style=none] (2) at (0, -0.5) {};
	\end{pgfonlayer}
	\begin{pgfonlayer}{edgelayer}
		\draw (1.center) to (2.center);
	\end{pgfonlayer}
\end{tikzpicture}}
~,~
\begin{tikzpicture}
	\begin{pgfonlayer}{nodelayer}
		\node [style=dot] (0) at (0, 0) {};
		\node [style=none] (1) at (0, 0.5000001) {};
		\node [style=none] (2) at (-0.25, -0.5) {};
		\node [style=none] (3) at (0.25, -0.5) {};
	\end{pgfonlayer}
	\begin{pgfonlayer}{edgelayer}
		\draw (0) to (2.center);
		\draw (3.center) to (0);
		\draw (0) to (1.center);
	\end{pgfonlayer}
\end{tikzpicture}}
~,\scalebox{1}{
}
}~,\raisebox{-0.3em}{
}
}~,\raisebox{-0.4em}{
}
}~,~
\InputIfFileExists{crossing.tikz}{}{\input{./figures/crossing.tikz}}
~,~
\InputIfFileExists{zw-cross.tikz}{}{\input{./figures/zw-cross.tikz}}
~,~
\InputIfFileExists{empty-diagram.tikz}{}{\input{./figures/empty-diagram.tikz}}
~
\right\rbrace\\
\tag*{$\substack{n,m\in\mathbb{N}, r\in \mathbb{C}}$}
\end{align*}
and diagrams are created thanks to the two same -- spacial and sequential -- compositions.

The diagrams represent matrices, in accordance to the standard interpretation, that associates to any diagram of the ZW-Calculus $D$ with $n$ inputs and $m$ outputs, a linear map $\interp{D}:\mathbb{C}^{2^n}\to\mathbb{C}^{2^m}$, inductively defined as:\\
\titlerule{$\interp{.}$}
$$ \interp{D_1\otimes D_2}:=\interp{D_1}\otimes\interp{D_2} \qquad 
\interp{D_2\circ D_1}:=\interp{D_2}\circ\interp{D_1}$$
$$\interp{
\InputIfFileExists{empty-diagram.tikz}{}{\input{./figures/empty-diagram.tikz}}
~}:=\begin{pmatrix}
1
\end{pmatrix} \qquad
\interp{~
}
~~}:= \begin{pmatrix}
1 & 0 \\ 0 & 1\end{pmatrix} \qquad
\interp{\raisebox{-0.3em}{$
}
$}}:= \begin{pmatrix}
1&0&0&1
\end{pmatrix}$$
$$\interp{
\InputIfFileExists{crossing.tikz}{}{\input{./figures/crossing.tikz}}
}:= \begin{pmatrix}
1&0&0&0\\
0&0&1&0\\
0&1&0&0\\
0&0&0&1
\end{pmatrix} \qquad
\interp{
\InputIfFileExists{zw-cross.tikz}{}{\input{./figures/zw-cross.tikz}}
}:= \begin{pmatrix}
1&0&0&0\\
0&0&1&0\\
0&1&0&0\\
0&0&0&-1
\end{pmatrix}$$
$$ \interp{\raisebox{-0.4em}{$
}
$}}:= \begin{pmatrix}
1\\0\\0\\1
\end{pmatrix}\qquad\interp{
}
}:= \begin{pmatrix}
0&1\\1&0
\end{pmatrix} \qquad 
\interp{
}
}:= \begin{pmatrix}
0&1\\1&0\\1&0\\0&0
\end{pmatrix}$$
$$
\interp{\begin{tikzpicture}
	\begin{pgfonlayer}{nodelayer}
		\node [style=white dot] (0) at (0, -0) {};
		\node [style=none] (1) at (-0.25, -0) {$r$};
	\end{pgfonlayer}
\end{tikzpicture}}=\begin{pmatrix}1+r\end{pmatrix}\qquad\qquad
\begin{array}{r}
\interp{
\InputIfFileExists{Z-param.tikz}{}{\input{./figures/Z-param.tikz}}
}=\annoted{2^m}{2^n}{\begin{pmatrix}
  1 & 0 & \cdots & 0 & 0 \\
  0 & 0 & \cdots & 0 & 0 \\
  \vdots & \vdots & \ddots & \vdots & \vdots \\
  0 & 0 & \cdots & 0 & 0 \\
  0 & 0 & \cdots & 0 & r
 \end{pmatrix}}\\
 (n+m>0)
\end{array}
$$
\rule{\columnwidth}{0.5pt}
We use the same notation for the two standard interpretations, from either one of the two languages to their corresponding  matrices.

When a white dot has no visible parameter, then $1$ is implicitly used.


The ZW-Calculus comes with its own set of axioms, depicted in appendix in Section \ref{sec:rules-zw}.
The paradigm ``only topology matters'' still stands here, and gives a number of implicit rules, the same way it does with the ZX-Calculus, but for one node, \scalebox{0.7}{
\InputIfFileExists{zw-cross.tikz}{}{\input{./figures/zw-cross.tikz}}
}, for which the order of inputs and outputs matters. Here again, one can transform a diagram into an equivalent one by locally applying the axioms of the ZW: For any three diagrams of the ZW-Calculus, $D_1, D_2$, and $D$, if $\zw\vdash D_1=D_2$, then:\\
\renewcommand*{\arraystretch}{1.2}
\begin{tabular}{@{$\qquad$}l@{$\qquad$}l}
$\bullet\zw\vdash D_1\circ D = D_2\circ D$ & $\bullet \zw\vdash D\circ D_1 = D\circ D_2$\\
$\bullet\zw\vdash D_1\otimes D = D_2\otimes D$ & $\bullet \zw\vdash D\otimes D_1 = D\otimes D_2$
\end{tabular}\\
\renewcommand*{\arraystretch}{0.8}


\subsubsection{Interpretations from ZX to ZW and back}
Both the \zxcalc and the ZW-Calculus are universal for complex matrices, so there exists a pair of translations between the two languages which preserve the semantics ($[.]_X: ZW\to ZX$ and $[.]_W:ZX\to ZW$ s.t. $\forall D\in ZX, \interp{[D]_W}=\interp D$ and $\forall D\in ZW, \interp{[D]_X}=\interp D$). The axiom \add{-c} has been chosen so that we can prove that $\zx\vdash {[[D]_W]_X}= D$ for any generator $D$ of the ZX-calculus and that $\zx\vdash [D_1]_X=[D_2]_X$ for any axiom $D_1=D_2$ of the ZW calculus. The choice of the translations is however essential as the new axiom relies on them. 

The $[.]_W$ translation 
can be canonically defined using the normal form of the ZW-calculus: for any generator $D$ of the ZX one can define $[D]_W$ as the ZW normal form representation of the matrix $\interp{D}$. It is however convenient to deviate from this canonically defined interpretation for the green and red spiders and for the Hadamard gate. We end up with basically the same translation from ZX to ZW as in \cite{NgWang}:\\

\titlerule{$[.]_W$}\\
$$
\InputIfFileExists{empty-diagram.tikz}{}{\input{./figures/empty-diagram.tikz}}
 \quad\mapsto\quad 
\InputIfFileExists{empty-diagram.tikz}{}{\input{./figures/empty-diagram.tikz}}
\qquad\qquad

}
 \quad\mapsto\quad 
}
\qquad\qquad
\InputIfFileExists{ZW-to-ZX-braid-no-braid.tikz}{}{\input{./figures/ZW-to-ZX-braid-no-braid.tikz}}
$$
$$
}
 \quad\raisebox{0.3em}{$\mapsto$}\quad 
}
\hspace{6em}

}
 \quad\raisebox{0.3em}{$\mapsto$}\quad 
}
$$
$$
\InputIfFileExists{gn-alpha-ZX-to-ZW.tikz}{}{\input{./figures/gn-alpha-ZX-to-ZW.tikz}}
\qquad\qquad

\InputIfFileExists{hadamard-ZX-to-ZW.tikz}{}{\input{./figures/hadamard-ZX-to-ZW.tikz}}
$$
$$
\InputIfFileExists{rn-alpha-2.tikz}{}{\input{./figures/rn-alpha-2.tikz}}
\mapsto\left[~
}
~\right]_W^{\otimes m}\circ \left[
\InputIfFileExists{gn-alpha-2.tikz}{}{\input{./figures/gn-alpha-2.tikz}}
\right]_W\circ \left[~
}
~\right]_W^{\otimes n}
$$
\noindent\begin{minipage}{\columnwidth}
$$D_1\circ D_2\mapsto [D_1]_W\circ[D_2]_W\qquad\quad D_1\otimes D_2\mapsto [D_1]_W\otimes[D_2]_W$$
\rule{\columnwidth}{0.5pt}
\end{minipage}\\


The $[.]_X$ translation 
has already been partially defined in \cite{JPV}. To extend it to the generalised white spider present in ZW, the main subtlety is the encoding of positive real numbers 
 in the ZX-diagrams. 
In \cite{NgWang}, the authors decompose, roughly speaking, a positive real number  into its integer part and its non-integer part. 
Our translation relies on a different (although not unique) decomposition:
\[\forall z\in\mathbb{C},~~ \exists (n,\theta,\beta)\in\mathbb{N}\times[0;2\pi[\times\left[0;\frac{\pi}{2}\right],\quad z=2^n\cos(\beta)e^{i\theta}\]

\titlerule{$[.]_X$}\\
$$
\InputIfFileExists{empty-diagram.tikz}{}{\input{./figures/empty-diagram.tikz}}
 \quad\mapsto\quad 
\InputIfFileExists{empty-diagram.tikz}{}{\input{./figures/empty-diagram.tikz}}
\qquad\qquad

}
 \quad\mapsto\quad 
}
\qquad\qquad
\InputIfFileExists{ZW-to-ZX-braid-no-braid.tikz}{}{\input{./figures/ZW-to-ZX-braid-no-braid.tikz}}
$$
$$
}
 \quad\raisebox{0.3em}{$\mapsto$}\quad 
}
\hspace{6em}

}
 \quad\raisebox{0.3em}{$\mapsto$}\quad 
}
$$
$$\begin{array}{c}

\InputIfFileExists{ZW-to-ZX-dot-1-1.tikz}{}{\input{./figures/ZW-to-ZX-dot-1-1.tikz}}
\\\\

\InputIfFileExists{ZW-to-ZX-cross-no-braid.tikz}{}{\input{./figures/ZW-to-ZX-cross-no-braid.tikz}}

\end{array} \qquad\quad

\InputIfFileExists{ZW-to-ZX-dot-1-2.tikz}{}{\input{./figures/ZW-to-ZX-dot-1-2.tikz}}

$$
$$
\InputIfFileExists{ZW-white-dot-to-ZX.tikz}{}{\input{./figures/ZW-white-dot-to-ZX.tikz}}
$$
\noindent\begin{minipage}{\columnwidth}
$$D_1\circ D_2\mapsto [D_1]_X\circ[D_2]_X\quad\qquad D_1\otimes D_2\mapsto [D_1]_X\otimes[D_2]_X$$
\rule{\columnwidth}{0.5pt}
\end{minipage}\\

\begin{remark}
$n$ is well-defined: Every complex number $x\neq0$ can be expressed as $\rho e^{i\theta}$ where $\rho\in\mathbb{R}^*+$. If $x=0$, then $n:=0$. However, $\theta$ may take any value, but it makes no difference (see Section \ref{prf:X-is-homomorphism} in appendix).
\end{remark}


We may prove the two following propositions:

\begin{proposition}
\label{prop:double-interpretation-equivalence-1}
\[\zx_c\vdash D = [[D]_W]_X\]
\end{proposition}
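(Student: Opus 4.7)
The plan is to proceed by structural induction on the ZX-diagram $D$. Since both $[.]_W$ and $[.]_X$ are defined as homomorphisms with respect to the two compositions, we have $[[D_1\circ D_2]_W]_X = [[D_1]_W]_X \circ [[D_2]_W]_X$ and analogously for $\otimes$, so the compositional cases reduce immediately to the induction hypothesis applied to the subdiagrams. It therefore suffices to establish the proposition for each of the ZX-generators: $e$, $\mathbb{I}$, $\sigma$, $\epsilon$, $\eta$, $H$, $R_Z^{(n,m)}(\alpha)$ and $R_X^{(n,m)}(\alpha)$.

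The five topological generators are the easy base cases. Inspecting the definitions of $[.]_W$ and $[.]_X$, each of $e,\mathbb{I},\sigma,\epsilon,\eta$ is sent by $[.]_W$ to its syntactic counterpart in ZW (in particular the ZX crossing is sent to the ZW \emph{braid}, not to the signed crossing), and $[.]_X$ sends that counterpart back to the original ZX generator on the nose. Hence $[[D]_W]_X$ and $D$ are literally identical diagrams, and no axiom of $\zx_c$ is invoked.

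The three remaining generators require genuine rewriting inside $\zx_c$. For the green spider, $[R_Z^{(n,m)}(\alpha)]_W$ is, by definition, the ZW generalised white dot of arity $(n,m)$ carrying the complex parameter $e^{i\alpha}$, possibly pre/post-composed with a small fixed piece of ZW scaffolding. Since $|e^{i\alpha}|=1$, the decomposition $z = 2^n\cos(\beta)e^{i\theta}$ used by $[.]_X$ collapses to $n=0$, $\beta=0$, $\theta=\alpha$, so the back-translation produces a diagram that must be rewritten to a single green $\alpha$-spider using spider fusion \so{}, the scalar simplification \e{}, and the bialgebra/copy rules \bo{} and \bt{}. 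The Hadamard case is handled in the same spirit: one expands $[H]_W$ into its ZW normal form, applies $[.]_X$, and then proves the result equal to $H$ via the Euler decomposition \eu{} together with \h{}. Finally the red-spider case is free: by definition of $[.]_W$ on $R_X^{(n,m)}(\alpha)$ and functoriality of $[.]_X$, the double translation evaluates to $[[H]_W]_X^{\otimes m}\circ [[R_Z^{(n,m)}(\alpha)]_W]_X \circ [[H]_W]_X^{\otimes n}$, which, using the already-treated cases, is provably equal in $\zx_c$ to $H^{\otimes m}\circ R_Z^{(n,m)}(\alpha)\circ H^{\otimes n} = R_X^{(n,m)}(\alpha)$.

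The main obstacle will be the spider and Hadamard cases. Even though the modulus-one simplification trivialises the cosine decomposition for the \emph{generators themselves}, the ZW normal form for a Z-spider and the back-translation through $[.]_X$ still yields a non-trivial ZX-expression with auxiliary white/black scaffolding that must be simplified away. This is exactly where the new non-linear axiom \add{-c} is designed to play its role: its side condition $2e^{i\theta_3}\cos(\gamma)=e^{i\theta_1}\cos(\alpha)+e^{i\theta_2}\cos(\beta)$ allows us to merge the cosine-style encodings produced by $[.]_X$ into a single spider, which is what closes the algebraic gap between the re-translated diagram and the original generator. Once these base cases are verified, structural induction completes the proof.
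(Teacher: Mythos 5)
Your overall strategy — structural induction, with the compositional cases trivial and the work concentrated on the generators, red spiders deduced from green spiders and Hadamard, and the key observation that $|e^{i\alpha}|=1$ forces $n=0$, $\beta=\gamma=0$ in the cosine decomposition — matches the paper's proof exactly.

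However, there is a genuine error in your last paragraph. You claim that the new non-linear axiom \add{-c} is ``exactly where it plays its role'' to close the gap in the green-spider and Hadamard base cases. This is incorrect: the paper's proof of Proposition~\ref{prop:double-interpretation-equivalence-1} does not invoke \add{-c} at all. Because the ZW parameters produced when translating the ZX generators are $e^{i\alpha}$ (modulus one) and $1/\sqrt{2}$, the back-translation $[.]_X$ produces diagrams with angles only in $\frac{\pi}{4}\mathbb{Z}$ plus the surviving $\alpha$, and the required equalities are discharged using \so{}, \st{}, \bo{}, \kt{}, Lemma~\ref{lem:inverse}, Lemma~\ref{lem:2-is-sqrt2-squared}, Lemma~\ref{lem:gn-pi_2-0-0-equals-sqrt2-exp-pi_4}, Lemma~\ref{lem:multiplying-global-phases}, Lemma~\ref{lem:bicolor-0-alpha} — all of which are either original rules or are derived without \add{-c}. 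The axiom \add{-c} is designed for, and only used in, Proposition~\ref{prop:X-is-homomorphism} (specifically the ZW rule that adds two white-dot parameters $\rho_1 e^{i\theta_1} + \rho_2 e^{i\theta_2}$, via Lemma~\ref{lem:add-bis}), where genuinely arbitrary complex parameters must be combined. Attributing it a role in Proposition~\ref{prop:double-interpretation-equivalence-1} misplaces where the non-linear content of the completion actually lives, and would leave a reader believing that the round-trip $D \mapsto [[D]_W]_X$ requires power beyond the Clifford+T axioms when in fact it does not.
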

Proof in appendix at Section \ref{prf:double-interp-eq}.

\begin{proposition}
\label{prop:X-is-homomorphism}
\[\zw\vdash D_1=D_2 \implies \zx_c\vdash [D_1]_X=[D_2]_X\]
\end{proposition}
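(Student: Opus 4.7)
The plan is to proceed by induction on the length of the ZW-derivation $\zw\vdash D_1=D_2$. The inductive cases---reflexivity, symmetry, transitivity, and congruence under the two compositions $\circ$ and $\otimes$---are handled automatically by the translation: $[.]_X$ is defined to commute with both compositions, and $\zx_c$ enjoys the same congruence rules as \zx. Hence it suffices to verify, for each single axiom $A_1=A_2$ of the ZW-calculus (listed in the appendix, Section \ref{sec:rules-zw}), that $\zx_c\vdash [A_1]_X=[A_2]_X$.

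I would then split the ZW axioms into two groups. The first group consists of axioms whose statement does not depend on the complex weight $r$ of the parametric white spider in an essential way (wire bendings, topological identities, relations on the black W-dot and the parity crossing, axioms whose translations involve only fixed scalars). Their $[.]_X$-translations are ZX-diagrams whose angles are either constants in $\frac{\pi}{4}\mathbb{Z}$ or linear combinations with such constants; hence by Theorem \ref{thm:provability} it is enough to check that both sides have the same semantics. This is immediate from the soundness of the original ZW-axiom together with the fact that $\interp{[D]_X}=\interp{D}$ by construction of the translation.

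The second, more difficult group consists of axioms that really manipulate the parametric white spider with a complex weight $r$ (for instance the fusion rule multiplying weights, and the axioms responsible for additivity). Because the translation $r\mapsto 2^n\cos(\beta)e^{i\theta}$ encodes the modulus non-linearly through the factor $\cos(\beta)$, these axioms cannot be brought under the scope of Theorem \ref{thm:provability}. This is exactly what the new axiom \add{-c} is designed to resolve: its side condition $2e^{i\theta_3}\cos(\gamma)=e^{i\theta_1}\cos(\alpha)+e^{i\theta_2}\cos(\beta)$ allows one to derive the addition of two scalars encoded as $\cos(\beta)e^{i\theta}$, and, by reduction of products to sums via standard trigonometric identities and the rescaling factors $2^n$, also their multiplication. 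For each offending ZW-axiom I would produce a derivation that, after unfolding both translated sides into normal shapes of the form \tikzfig{ZW-white-dot-to-ZX}, aligns the decompositions $(n_i,\theta_i,\beta_i)$ of the various weights and then invokes \add{-c} to perform the required combination.

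The main obstacle is the second group: one must exhibit, for every ZW-axiom involving a parametric white dot, a concrete $\zx_c$-derivation of its translation, which is delicate because the decomposition $r=2^n\cos(\beta)e^{i\theta}$ is non-unique, so intermediate normalisation steps (balancing the integer parts $2^n$ via the $\sqrt{2}$-scalar available in the Clifford+T fragment, and adjusting the phase $\theta$ modulo $2\pi$) have to be carried out before \add{-c} can be applied. Once these verifications are assembled with the easy cases handled by Theorem \ref{thm:provability}, the induction closes and the proposition follows.
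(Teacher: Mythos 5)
Your plan follows the paper's route exactly: reduce by congruence to the individual ZW axioms, observe that the parameter-free (or $r=\pm1$) axioms translate into constant Clifford+T diagrams and are therefore dispatched by the completeness of the $\frac{\pi}{4}$-fragment already contained in $\zx_c$, and handle the genuinely parametrised ZW rules (the paper lists $1c$, $3b$, $4a$, $4b$, $6c$, plus the easy $0c$) using \add{-c} to cope with the non-linear $\cos$ factor in the decomposition $r=2^n\cos(\beta)e^{i\theta}$. You also correctly flag the main technical obstacle, namely the non-uniqueness of that decomposition and the need to renormalise the integer parts $2^n$ and phases $\theta$ before \add{-c} can bite.

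What your proposal leaves out is precisely where the paper spends its effort: the concrete derivations for each of $1c$, $3b$, $4a$, $4b$, $6c$, together with the auxiliary lemmas that make them go through --- Lemma \ref{lem:prod-cos} (the product identity $\cos\alpha\cos\beta=\cos\gamma$, the key to the fusion rule $1c$), Lemma \ref{lem:add-bis} (a variant of \add{-c} with the factor~$2$ removed, used for $4a$), Lemma \ref{lem:white-dot-to-zx-generalised-form} and Corollary \ref{cor:arccos-2^-n} (which reconcile the different exponents $n$ appearing in the non-unique decomposition), and Corollary \ref{cor:distribution}, itself proved earlier by the finite case-based reasoning of Theorem \ref{thm:basis}. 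This last point also corrects a small imbalance in your presentation: it is not only the easy group that falls under Theorem \ref{thm:provability}; several of the ingredients needed for the hard group are established that way too, and \add{-c} is applied only as the final move in each derivation. Until those lemmas and per-axiom derivations are actually exhibited, the induction you describe does not close.
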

Proof in appendix at Section \ref{prf:X-is-homomorphism}.

The completeness of the calculus is now easy to prove:
\begin{proof}[Theorem \ref{thm:completeness}]
Let $D_1$ and $D_2$ be two diagrams of the ZX-Calculus such that $\interp{D_1}=\interp{D_2}$. Since $[.]_W$ preserves the the semantics, $\interp{[D_1]_W}=\interp{[D_1]_W}$. By completeness of the ZW-Calculus, $\zw\vdash [D_1]_W=[D_2]_W$. By Proposition \ref{prop:X-is-homomorphism}, $\zx_c\vdash [[D_1]_W]_X=[[D_2]_W]_X$. Finally, by Proposition \ref{prop:double-interpretation-equivalence-1}, $\zx_c\vdash D_1=D_2$ which completes the proof.
\end{proof}

\section{Discussion}
\label{sec:discussion}

Together with the 12 axioms used for the Clifford+T completeness, the present complete axiomatisation is composed of 13 axioms, i.e. (less than) half of the 32 axioms in \cite{NgWang}. Moreover our axiomatisation is ``retro-compatible" in the sense that any proof being derived so far with some previous version of the ZX-calculus can be straightforwardly derived using this set of axioms. Indeed, this set of axioms has been obtained after successive refinements of the original axiomatisation of the ZX-calculus, where every discarded axiom has been constructively proved to be derivable using the remaining axioms. 

The rule \add{-c} comes with a side condition on the affected angles: $2e^{i\theta_3}\cos(\gamma)=e^{i\theta_1}\cos(\alpha)$ $+e^{i\theta_2}\cos(\beta)$.  In order to claim that the ZX-calculus is complete without the help of some external computations, axiom \add{-c} must be seen as an infinite (uncountable) family of axioms. Notice that other axioms (e.g. \so{-c}, \kt{-c}) also involve some operations ($\alpha+\beta$ or $-\alpha$) however these Phase group operations are not side operations, but on the contrary fundamental properties on which the ZX-calculus has been built. 
In this sense the complete axiomatisation is ``pseudo-finite", and the quest for a complete and finite axiomatisation of the ZX-calculus for a non-approximative universal fragment is still open. One way to achieve such finite completeness would be to provide translations $[.]_X$ and $[.]_W$ between the ZX and ZW calculi which somehow \emph{preserve} the phase group structure of the ZX-Calculus and the ring structure of the ZW-Calculus. Notice however that \cite{incompleteness} and \cite{cyclo} are two different kinds of evidence that such a finite complete axiomatisation may not exist.




%
\section*{Acknowledgements}
The authors acknowledge support from the projects ANR-17-CE25-0009 SoftQPro, ANR-17-CE24-0035 VanQuTe, PIA-GDN/Quantex, and STIC-AmSud 16-STIC-05 FoQCoSS. All diagrams were written with the help of TikZit.

\appendix
\section{Appendix}

As in \cite{JPV}, we define the triangle as a syntactic sugar for a bigger diagram:
\[
\InputIfFileExists{ug-decomp.tikz}{}{\input{./figures/ug-decomp.tikz}}
\]
It has interpretation $\interp{
\begin{tikzpicture}
	\begin{pgfonlayer}{nodelayer}
		\node [style=ug] (0) at (0, -0) {};
		\node [style=none] (1) at (0, 0.5000001) {};
		\node [style=none] (2) at (0, -0.5000001) {};
		\node [style=none] (3) at (0, -0.7499999) {};
		\node [style=none] (4) at (0, 0.7500001) {};
	\end{pgfonlayer}
	\begin{pgfonlayer}{edgelayer}
		\draw (1.center) to (2.center);
	\end{pgfonlayer}
\end{tikzpicture}}
} = \begin{pmatrix}1&1\\0&1\end{pmatrix}$.

\subsection{Lemmas}

We first give a few useful lemmas:

\noindent\begin{minipage}{\columnwidth}
\begin{multicols}{2}
\begin{lemma}
\label{lem:2-is-sqrt2-squared}
\[
\InputIfFileExists{2-is-sqrt2-squared.tikz}{}{\input{./figures/2-is-sqrt2-squared.tikz}}
\]
\end{lemma}
\begin{lemma}
\label{lem:hopf}
\[
\InputIfFileExists{hopf.tikz}{}{\input{./figures/hopf.tikz}}
\]
\end{lemma}
\end{multicols}
\end{minipage}\\\\

\noindent\begin{minipage}{\columnwidth}
\begin{multicols}{2}
\begin{lemma}
\label{lem:inverse}
\[
\InputIfFileExists{inverse.tikz}{}{\input{./figures/inverse.tikz}}
\]
\end{lemma}

\begin{lemma}
\label{lem:k1}
\[
\InputIfFileExists{k1.tikz}{}{\input{./figures/k1.tikz}}
\]
\end{lemma}
\end{multicols}
\end{minipage}\\\\

\noindent\begin{minipage}{\columnwidth}
\begin{multicols}{2}
\begin{lemma}
\label{lem:multiplying-global-phases}
\[
\InputIfFileExists{multiplying-global-phases.tikz}{}{\input{./figures/multiplying-global-phases.tikz}}
\]
\end{lemma}
\begin{lemma}
\label{lem:bicolor-0-alpha}
\[
\InputIfFileExists{bicolor-0-alpha.tikz}{}{\input{./figures/bicolor-0-alpha.tikz}}
\]
\end{lemma}
\end{multicols}
\end{minipage}\\\\

\noindent\begin{minipage}{\columnwidth}
\begin{multicols}{2}
\begin{lemma}
\label{lem:hadamard-involution}
\[
\InputIfFileExists{hadamard-involution.tikz}{}{\input{./figures/hadamard-involution.tikz}}
\]
\end{lemma}

\begin{lemma}
\label{lem:h-loop}
\[
\InputIfFileExists{h-loop.tikz}{}{\input{./figures/h-loop.tikz}}
\]
\end{lemma}
\end{multicols}
\end{minipage}\\\\

\noindent\begin{minipage}{\columnwidth}
\begin{multicols}{2}
\begin{lemma}
\label{lem:gn-pi_2-0-0-equals-sqrt2-exp-pi_4}
\[
\begin{tikzpicture}
	\begin{pgfonlayer}{nodelayer}
		\node [style=gn] (0) at (-1, -0) {$\frac{\pi}{2}$};
		\node [style=none] (1) at (0, -0) {=};
		\node [style=gn] (2) at (1, -0.2500001) {$\frac{\pi}{4}$};
		\node [style=rn] (3) at (1, 0.5) {$\pi$};
	\end{pgfonlayer}
	\begin{pgfonlayer}{edgelayer}
		\draw (3) to (2);
	\end{pgfonlayer}
\end{tikzpicture}}
\]
\end{lemma}

\begin{lemma}
\label{lem:euler-decomp-with-scalar}
\[
\InputIfFileExists{euler-decomp-with-scalar.tikz}{}{\input{./figures/euler-decomp-with-scalar.tikz}}
\]
\end{lemma}
\end{multicols}
\end{minipage}\\\\

\noindent\begin{minipage}{\columnwidth}
\begin{multicols}{2}
\begin{lemma}
\label{lem:C1-original}
\[
\InputIfFileExists{control-commutation-2.tikz}{}{\input{./figures/control-commutation-2.tikz}}
\]
\end{lemma}

\begin{lemma}
\label{lem:green-state-pi_2-is-red-state-minus-pi_2}
\[
\InputIfFileExists{green-state-pi_2-is-red-state-minus-pi_2.tikz}{}{\input{./figures/green-state-pi_2-is-red-state-minus-pi_2.tikz}}
\]
\end{lemma}
\end{multicols}
\end{minipage}\\\\

\noindent\begin{minipage}{\columnwidth}
\begin{multicols}{2}
\begin{lemma}
\label{lem:pi-red-state-on-triangle}
\[
\InputIfFileExists{pi-red-state-on-triangle.tikz}{}{\input{./figures/pi-red-state-on-triangle.tikz}}
\]
\end{lemma}

\begin{lemma}
\label{lem:red-state-on-upside-down-triangle}
\[
\InputIfFileExists{red-state-on-upside-down-triangle.tikz}{}{\input{./figures/red-state-on-upside-down-triangle.tikz}}
\]
\end{lemma}
\end{multicols}
\end{minipage}\\\\

\noindent\begin{minipage}{\columnwidth}
\begin{multicols}{2}
\begin{lemma}
\label{lem:supp-to-minus-pi_4}
\[
\InputIfFileExists{supp-to-minus-pi_4.tikz}{}{\input{./figures/supp-to-minus-pi_4.tikz}}
\]
\end{lemma}

\begin{lemma}
\label{lem:black-dot-swappable-outputs}
\[
\InputIfFileExists{lemma-black-dot-swappable-outputs.tikz}{}{\input{./figures/lemma-black-dot-swappable-outputs.tikz}}
\]
\end{lemma}
\end{multicols}
\end{minipage}\\\\


\begin{proof}
All these lemmas except Lemmas \ref{lem:C1-original}, \ref{lem:multiplying-global-phases} and \ref{lem:bicolor-0-alpha} come from the completeness in the \frag4 \cite{JPV}.\\
$\bullet$ \ref{lem:multiplying-global-phases}:
\def\fig{multiplying-global-phases-proof}
\begin{align*}

\eq{\so{}\\\bo{}}
\eq{\ref{lem:k1}}
\eq{\so{}}
\end{align*}
$\bullet$ \ref{lem:bicolor-0-alpha}:
\def\fig{bicolor-0-alpha-proof}
\begin{align*}

\eq{\so{}\\\ref{lem:inverse}}
\eq{\kt{}\\\ref{lem:inverse}}
\eq{\so{}\\\bo{}}
\eq{\ref{lem:multiplying-global-phases}\\\ref{lem:inverse}}\begin{tikzpicture}
	\begin{pgfonlayer}{nodelayer}
		\node [style=dot] (46)  at (0.0, 0.375) {};
		\node [style=dot] (47)  at (0.0, 0.125) {};
		\node [style=none] (48)  at (0.0, -0.375) {};
	\end{pgfonlayer}
	\begin{pgfonlayer}{edgelayer}
		\draw (46) to (48.center);
		\draw [style=none, in=135, out=45, loop] (46) to ();
	\end{pgfonlayer}
\end{tikzpicture}\\
\eq{\so{}}\input{./figures/\fig/\fig_05.tikz}
\eq{\so{}\\\ref{lem:k1}}\input{./figures/\fig/\fig_06.tikz}
\eq{\so{}}\input{./figures/\fig/\fig_07.tikz}
\eq{\st{}\\\so{}}\input{./figures/\fig/\fig_08.tikz}
\end{align*}
$\bullet$ \ref{lem:C1-original}:
\def\fig{control-commutation-2-proof-arxiv}
\begin{align*}

\eq{\ref{lem:euler-decomp-with-scalar}\\\so{}}
\eq{\bo{}}\\
\eq{\com{}}
\eq{\bo{}}
\eq{\so{}\\\ref{lem:euler-decomp-with-scalar}}\input{./figures/\fig/\fig_05.tikz}
\end{align*}
\end{proof}

\subsection{Proof of Lemma \ref{lem:alphas-on-X}}
\label{prf:alphas-on-X}
\def\fig{second-matrix-gn-alpha-proof-arxiv}
\begin{align*}

\eq{\h{}\\\ref{lem:euler-decomp-with-scalar}\\\so{}}
\eq{\bt{}}\\
\eq{\h{}}
\eq{\ref{lem:C1-original}}
\eq{\h{}\\\so{}}\input{./figures/\fig/\fig_05.tikz}\\
\eq{\ref{lem:hopf}}\input{./figures/\fig/\fig_06.tikz}
\eq{\ref{lem:green-state-pi_2-is-red-state-minus-pi_2}\\\so{}\\\kt{}\\\ref{lem:multiplying-global-phases}}\input{./figures/\fig/\fig_07.tikz}
\eq{\ref{lem:supp-to-minus-pi_4}\\\ref{lem:multiplying-global-phases}}\input{./figures/\fig/\fig_08.tikz}
\eq{\st{}\\\ref{lem:hopf}}\input{./figures/\fig/\fig_09.tikz}
\end{align*}
\qed

\subsection{Proof of Proposition \ref{lem:rank}}
\label{prf:rank}
We will prove the result diagrammatically. If $\vec x\in\{0;1\}^n$, we denote $u_n(\vec{x}):= \interp{\rx{$x_1\pi$}\cdots\rx{$x_n\pi$}}$. Notice that $\left(u_n(\vec x)\right)_{\vec x \in \{ 0,1 \}^n}$ forms a basis of $\mathbb{C}^{2^n}$. We show that if $01$ appears in the word $\vec x$, then $P_n\circ u_n(\vec x) = 0$. Diagrammatically, by completeness of the \frag4 of the ZX-Calculus, since the equations are sound:
\begin{align*}
\zx&\vdash\qquad 
\InputIfFileExists{kernel-of-matrix-M_n-1.tikz}{}{\input{./figures/kernel-of-matrix-M_n-1.tikz}}
\\
\zx&\vdash\qquad 
\InputIfFileExists{kernel-of-matrix-M_n-2.tikz}{}{\input{./figures/kernel-of-matrix-M_n-2.tikz}}

\end{align*}
The scalar $
\begin{tikzpicture}
	\begin{pgfonlayer}{nodelayer}
		\node [style=rn] (0) at (0, -0) {$\pi$};
	\end{pgfonlayer}
\end{tikzpicture}}
$ representing $0$, the base case is
handled by the first equality. In the general case, either $01$
appears on the first two wires, and the same equality produces the
result, otherwise the second schema appears, and $01$ appears
somewhere in the word applied to $P_{n-1}$. This proves the result by
induction.\\
Hence, the only possible words that are not in the kernel of $P_n$ are $1^p0^{n-p}$ for $p\in\{0,\cdots,n\}$, so there are $n+1$ of them.
\qed

\subsection{Details of the Proof for Corollary \ref{cor:distribution}}
\label{prf:distribution}
We first plug the basis $\left(\rx{}, \rx{$\pi$}\right)$ in the input:
\begin{itemize}
\item \rx{}
\begin{itemize}
\item Left hand side:
\def\fig{add-axiom-2-l-0-arxiv}
\begin{align*}

\eq{\bo{}\\\ref{lem:bicolor-0-alpha}}
\eq{\sth{}\\\so{}\\\ref{lem:green-state-pi_2-is-red-state-minus-pi_2}\\\st{}\\\ref{lem:2-is-sqrt2-squared}\\\ref{lem:inverse}}\\
\eq{\bt{}}
\eq{\supp{}\\\sth{}\\\so{}}
\eq{\ref{lem:inverse}\\\bo{}}\input{./figures/\fig/\fig_05.tikz}
\end{align*}
\item Right hand side:
\def\fig{add-axiom-2-r-0}
\begin{align*}

\eq{\vdots}
\eq{\bo{}\\\so{}\\\st{}\\\ref{lem:2-is-sqrt2-squared}\\\ref{lem:inverse}}
\end{align*}
\end{itemize}
The resulting two diagrams are equal when \rx{} is plugged.
\item \rx{$\pi$}
\begin{itemize}
\item Left hand side:
\def\fig{add-axiom-2-l-pi}
\begin{align*}

\eq{\kt{}\\\ref{lem:k1}\\\bo{}}
\eq{\kt{}\\\so{}\\\ref{lem:multiplying-global-phases}}
\eq{\ref{lem:hopf}\\\ref{lem:bicolor-0-alpha}\\\ref{lem:inverse}}
\end{align*}
\item Right hand side:
\def\fig{add-axiom-2-r-pi}
\begin{align*}

\eq{\vdots}
\eq{\kt{}\\\ref{lem:k1}\\\ref{lem:multiplying-global-phases}}
\end{align*}
Now we could have concluded directly with the help of Corollaries \ref{cor:gen-supp} and \ref{cor:big-scalar-equation}. For the sake of the example, though, we are going to plug our basis on, say, the left hanging branch:
\begin{itemize}
\item \rx{}
\def\fig{add-axiom-2-r-pi-0}
\begin{align*}

\eq{\ref{lem:inverse}\\\bo{}}
\eq{\kt{}\\\ref{lem:multiplying-global-phases}\\\sth{}\\\so{}\\\ref{lem:2-is-sqrt2-squared}}
\end{align*}
\item \rx{$\pi$}
\def\fig{add-axiom-2-r-pi-pi}
\begin{align*}

\eq{\ref{lem:k1}\\\ref{lem:inverse}\\\bo{}}
\eq{\kt{}\\\ref{lem:multiplying-global-phases}\\\sth{}\\\so{}\\\ref{lem:2-is-sqrt2-squared}}
\end{align*}
\end{itemize}
\end{itemize}
\end{itemize}
Hence, the two initial diagrams result in the same diagram when the basis is applied. Thanks to Theorem \ref{thm:basis}, the \zxcalc proves the equality between the two initial diagrams.
\qed

\subsection{Details of the Proof for Corollary \ref{cor:big-scalar-equation}}
\label{prf:big-scalar-equation}~
\begin{itemize}
\item $\alpha=0$:
	\begin{itemize}
	\item Left hand side:
		\def\fig{big-scalar-equation-l-0}
		\begin{align*}
		
		\eq{\ref{lem:inverse}\\\bo{}}
		\eq{\ref{lem:2-is-sqrt2-squared}\\\ref{lem:inverse}}
		\end{align*}
	\item Right hand side:
		\def\fig{big-scalar-equation-r-0}
		\begin{align*}
		
		\eq{\kt{}\\\ref{lem:multiplying-global-phases}}
		\end{align*}
	\end{itemize}
\item $\alpha=\pi$:
	\begin{itemize}
	\item Left hand side:
		\def\fig{big-scalar-equation-l-pi-arxiv}
		\begin{align*}
		
		\eq{\kt{}\\\ref{lem:k1}\\\ref{lem:inverse}\\\bo{}}
		\eq{\ref{lem:2-is-sqrt2-squared}\\\ref{lem:inverse}}
		\end{align*}
	\item Right hand side:
		\def\fig{big-scalar-equation-r-pi}
		\begin{align*}
		
		\eq{\kt{}\\\ref{lem:multiplying-global-phases}}
		\end{align*}
	\end{itemize}
\item $\alpha=\frac{\pi}{2}$:
	\begin{itemize}
	\item Left hand side:
		\def\fig{big-scalar-equation-l-pi_2}
		\begin{align*}
		
		\eq{\ref{lem:inverse}\\\kt{}}
		\eq{\supp{}}
		\eq{\st{}\\\so{}}
		\end{align*}
	\item Right hand side:
		\def\fig{big-scalar-equation-r-pi_2-arxiv}
		\begin{align*}
		
		\eq{\kt{}\\\ref{lem:multiplying-global-phases}\\\bo{}}
		\eq{\supp{}}
		\eq{\ref{lem:inverse}\\\bo{}\\\so{}}
		\end{align*}
	\end{itemize}
\end{itemize}
The results are the same for three different values of $\alpha$. This is enough to get the equation in Corollary \ref{cor:big-scalar-equation}, according to Theorem \ref{thm:valuations}.
\qed

\subsection{Rules of the ZW-Calculus}
\label{sec:rules-zw}
~
\begin{center}
\def\scale{0.9}
\scalebox{\scale}{
\InputIfFileExists{ZW-rule-0-no-braid.tikz}{}{\input{./figures/ZW-rule-0-no-braid.tikz}}
}\\~\\~\\
\scalebox{\scale}{
\InputIfFileExists{ZW-rule-1.tikz}{}{\input{./figures/ZW-rule-1.tikz}}
} \\~\\~\\
\scalebox{\scale}{
\InputIfFileExists{ZW-rule-2-no-braid.tikz}{}{\input{./figures/ZW-rule-2-no-braid.tikz}}
}\\~\\~\\
\scalebox{\scale}{
\InputIfFileExists{ZW-rule-3.tikz}{}{\input{./figures/ZW-rule-3.tikz}}
} \\~\\~\\
\scalebox{\scale}{
\InputIfFileExists{ZW-rule-4.tikz}{}{\input{./figures/ZW-rule-4.tikz}}
} \\~\\~\\
\scalebox{\scale}{
\InputIfFileExists{ZW-rule-5-no-braid.tikz}{}{\input{./figures/ZW-rule-5-no-braid.tikz}}
} \\~\\~\\
\scalebox{\scale}{
\InputIfFileExists{ZW-rule-6-a-b.tikz}{}{\input{./figures/ZW-rule-6-a-b.tikz}}
} \\~\\~\\
\scalebox{\scale}{
\InputIfFileExists{ZW-rule-6-c.tikz}{}{\input{./figures/ZW-rule-6-c.tikz}}
} \\~\\~\\
\scalebox{\scale}{
\InputIfFileExists{ZW-rule-7-no-braid.tikz}{}{\input{./figures/ZW-rule-7-no-braid.tikz}}
}\\~\\~\\
\scalebox{\scale}{
\InputIfFileExists{ZW-rule-X-no-braid.tikz}{}{\input{./figures/ZW-rule-X-no-braid.tikz}}
}\\~\\~\\
\scalebox{\scale}{
\InputIfFileExists{reidmeister-3.tikz}{}{\input{./figures/reidmeister-3.tikz}}
}
\end{center}

\subsection{Proof of Proposition \ref{prop:double-interpretation-equivalence-1}}
\label{prf:double-interp-eq}
The result is obvious for cups, caps, single wires, empty diagrams and swaps. Moreover, if we have the result for green dots and the Hadamard gate, then we also have it for red dots by construction.\\
For green dots, since $n=\max\left(0,\left\lceil \log_2(1)\right\rceil\right)=0$, $\beta = \gamma = 0$:
\def\fig{gn-alpha-double-interp}
\begin{align*}

~~\mapsto~~
~~\mapsto~~
\eq{\ref{lem:inverse}\\\bo{-c}\\\ref{lem:2-is-sqrt2-squared}\\\so{-c}}
\end{align*}
For Hadamard, first notice:
\def\fig{1-over-sqrt2-ZW-to-ZX}
\begin{align*}

~~\mapsto~~
\eq{\ref{lem:inverse}\\\bo{-c}\\\ref{lem:2-is-sqrt2-squared}\\\so{-c}}
\eq{\ref{lem:inverse}\\\kt{-c}\\\so{-c}}\\
\eq{\ref{lem:gn-pi_2-0-0-equals-sqrt2-exp-pi_4}}
\eq{\ref{lem:multiplying-global-phases}\\\ref{lem:bicolor-0-alpha}\\\ref{lem:inverse}}\input{./figures/\fig/\fig_05.tikz}
\end{align*}
since $n=0$, $\beta=\arccos{1/\sqrt{2}}=\pi/4$, $\gamma=\arccos{1}=0$. Finally:
\def\fig{hadamard-double-interpretation}
\begin{align*}
\quad\mapsto\quad\quad\mapsto\quad
\eq[\quad]{\so{}\\\st{}\\\ref{lem:inverse}}
\end{align*}
\qed

\subsection{Lemmas for $\zx_c$}

\begin{lemma}
\label{lem:prod-cos}
\[\zx_c\vdash
\InputIfFileExists{add-axiom.tikz}{}{\input{./figures/add-axiom.tikz}}
\]
\end{lemma}
\begin{proof}
\def\fig{add-axiom-3-to-A1}
\begin{align*}
\zx_c\vdash~~
\eq{Thm~\ref{thm:provability}}\\
\eq{\add{-c}}
\eq{\ref{lem:supp-to-minus-pi_4}\\\ref{lem:multiplying-global-phases}\\\ref{lem:bicolor-0-alpha}\\\ref{lem:inverse}}
\end{align*}
where $\cos(\gamma) = \frac{1}{2}(\cos(\alpha-\beta)+ \cos(\alpha+\beta)) = \cos(\alpha)\cos(\beta)$.
\end{proof}

\begin{lemma}
\label{lem:add-bis}
We can deduce an equality similar to the rule \add{-c}:
\[\zx_c\vdash
\InputIfFileExists{add-axiom-3-simplified-2.tikz}{}{\input{./figures/add-axiom-3-simplified-2.tikz}}
\]
\end{lemma}
\begin{proof}
\def\fig{add-axiom-3-simp-2-proof}
\begin{align*}
\zx_c\vdash~~
\eq{\ref{lem:supp-to-minus-pi_4}\\\kt{-c}\\\ref{lem:multiplying-global-phases}}
\eq{\ref{lem:prod-cos}}\\
\eq{\ref{lem:prod-cos}}
\eq{\kt{-c}\\\ref{lem:multiplying-global-phases}}
\eq{\add{-c}}\input{./figures/\fig/\fig_05.tikz}
\end{align*}
where $\cos(\gamma')=\frac{\cos(\gamma)}{\cos(\frac{\pi}{4})}=\sqrt{2}\cos(\gamma)$ and $\cos(\gamma'')=\sqrt{2}\cos(\gamma')=2\cos(\gamma)$. We end up with the right part of the rule \add{-c}, and applying the rule with $\cos(\gamma'')$ gives the wanted condition on the angles.
\end{proof}

\begin{lemma}
\label{lem:white-dot-to-zx-generalised-form}~\\
Let $\rho\in\mathbb{R}+$. Then, for any $n_1,n_2\geq\max\left(0,\left\lceil \log_2(\rho)\right\rceil\right)$:
\[\zx_c\vdash~~\scalebox{0.91}{
\InputIfFileExists{ZW-white-dot-to-ZX-generalised.tikz}{}{\input{./figures/ZW-white-dot-to-ZX-generalised.tikz}}
}\]
\end{lemma}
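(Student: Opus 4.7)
The plan is to prove the lemma by induction on the excess $(n_1 - n) + (n_2 - n)$ over the canonical value $n := \max(0, \lceil \log_2(\rho) \rceil)$. The base case $n_1 = n_2 = n$ corresponds exactly to the definition of $[\cdot]_X$ applied to a ZW white dot with nonnegative real parameter $\rho = 2^n \cos(\beta)$, $\theta = 0$, so the generalised diagram in the statement unfolds syntactically to the canonical translation and nothing further is required.

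For the inductive step, it suffices to show that increasing either $n_1$ or $n_2$ by one produces a provably equal diagram; by the color/direction symmetry of the construction the two cases are analogous. Writing the current parameter as $\rho = 2^{n_i}\cos(\beta_i)$ with $\beta_i = \arccos(\rho / 2^{n_i}) \in [0, \pi/2]$, incrementing $n_i$ to $n_i + 1$ replaces $\beta_i$ by $\beta_i' = \arccos(\cos(\beta_i)/2)$, so diagrammatically we must absorb a factor of $1/2$ inside a cosine while simultaneously introducing one extra wire. This is precisely the content of the axiom \add{-c} in its convenient packaged form, Lemma \ref{lem:add-bis}: it allows us to exchange a $2\cos(\gamma)$ scalar attached to an additional wire for a $\cos(\gamma'')$ scalar with $\cos(\gamma'') = 2\cos(\gamma)$, absorbing the factor of two in exactly the right place. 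Lemma \ref{lem:prod-cos} handles the complementary situation where the new wire splits a cosine into a product.

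The main obstacle will be the bookkeeping around the newly inserted wire: one has to verify that the local rearrangement delivered by \add{-c} meshes with the rest of the generalised diagram, i.e. that the green/red spiders, the $\pi/4$ constants, and the residual topology align with the replacement $\beta_i \mapsto \beta_i'$. To discharge these rearrangements without hand computation I would invoke Theorem \ref{thm:provability}: once the equality to be proved has been reduced to a comparison of two diagrams linear in the single remaining variable angle (either $\beta$ or the fresh $\gamma$) with constants in $\frac{\pi}{4}\mathbb{Z}$, it is enough to verify the semantic identity, which amounts to a finite matrix check. Iterating the single-step increment on the input side and on the output side then covers any $n_1, n_2 \geq n$, completing the induction.
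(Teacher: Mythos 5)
Your overall inductive scaffolding (base case $n_1=n_2=n$ is the canonical $[\cdot]_X$ translation, induction step increments one of $n_1,n_2$ by one) matches the paper's argument, but the proposed inductive step has two real problems.

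First, the halving step goes through Lemma \ref{lem:prod-cos} with the fixed angle $\pi/3$, not through Lemma \ref{lem:add-bis}. The paper first establishes a $\pi/3$-branches identity and then applies Lemma \ref{lem:prod-cos} with the instantiation $\alpha = \pi/3$, so that $\cos(\beta') = \cos(\pi/3)\cos(\beta) = \tfrac{1}{2}\cos(\beta)$ and likewise for $\gamma' = \arccos\!\left(1/2^{n+1}\right)$; this is exactly the halving-inside-an-arccos one needs when passing from $n$ to $n+1$. Lemma \ref{lem:add-bis}, by contrast, encodes the doubling relation $\cos(\gamma'')=2\cos(\gamma)$ and is used elsewhere (in the proof that rule $4a$ of ZW is derivable); it is not the engine of this induction. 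You mention Lemma \ref{lem:prod-cos} only in passing as handling a ``complementary situation'', whereas it is actually the main tool.

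Second, and more seriously, you cannot discharge the bookkeeping by appealing to Theorem \ref{thm:provability}. That theorem applies only to pairs of diagrams that are linear in some free variables $\vec\alpha$ with \emph{all constants in} $\tfrac{\pi}{4}\mathbb{Z}$, and it requires the semantic equality to hold for \emph{all} values of those variables. Here the angles $\beta_i = \arccos(\rho/2^{n_i})$ and $\gamma_i = \arccos(1/2^{n_i})$ are constants that generally lie outside $\tfrac{\pi}{4}\mathbb{Z}$, and they satisfy a non-linear constraint ($\cos\gamma_i = 1/2^{n_i}$, $\cos\beta_i = \rho\cos\gamma_i$) rather than ranging freely; the diagram equality you want to prove is false if these angles are treated as unconstrained parameters. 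So Theorem \ref{thm:provability} simply does not apply, and the ``finite matrix check'' it would license is not available. The paper instead has to carry out the explicit ZX$_c$ derivation (including the $\pi/3$-branches lemma, which itself relies on Corollary \ref{cor:gen-supp} and Lemma \ref{lem:prod-cos}), and that hand computation is the substance of the proof, not mere bookkeeping.
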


\begin{proof}First we prove:
\def\fig{lemma-pi_3-branches-arxiv}
\begin{align*}
\zx_c\vdash~~
\eq{\ref{cor:gen-supp}\\\ref{lem:inverse}\\\ref{lem:hopf}}
\eq{\so{-c}\\\ref{lem:prod-cos}}\\
\eq{\so{-c}\\\ref{lem:inverse}\\\ref{lem:k1}\\\ref{lem:multiplying-global-phases}}
\eq{\ref{lem:supp-to-minus-pi_4}}
\eq{\ref{lem:2-is-sqrt2-squared}\\\ref{lem:multiplying-global-phases}\\\ref{lem:inverse}}\input{./figures/\fig/\fig_05.tikz}
\end{align*}
We now show the result for $n\geq\max\left(0,\left\lceil \log_2(\rho)\right\rceil\right)$ and $n+1$, which then generalises to lemma \ref{lem:white-dot-to-zx-generalised-form} by induction:
\def\fig{ZW-white-dot-to-ZX-generalised-proof}
\begin{align*}
\zx_c\vdash~~
\eq{}\\
\eq{\ref{lem:prod-cos}}
\end{align*}
with:
\begin{align*}
\beta &= \arccos{\frac{\rho}{2^n}}\qquad
\gamma = \arccos{\frac{1}{2^n}}\\
\beta' &= \arccos{\frac{\rho}{2^n}\cos(\pi/3)}=\arccos{\frac{\rho}{2^{n+1}}}\\
\gamma' &= \arccos{\frac{1}{2^{n+1}}}\\
\end{align*}
\end{proof}

\begin{corollary}
\label{cor:arccos-2^-n}
For any $n\in\mathbb{N}$, with $\gamma=\arccos{\frac{1}{2^n}}$:
\def\fig{lemma-arccos-1_2-to-the-n-branches}
\begin{align*}
\zx_c\vdash~~
\eq{}
\eq{}
\end{align*}
\end{corollary}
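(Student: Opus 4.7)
The plan is to observe that Corollary \ref{cor:arccos-2^-n} is essentially a direct specialization of Lemma \ref{lem:white-dot-to-zx-generalised-form} to the scalar case $\rho = 1$. When $\rho = 1$, the constraint $n_1, n_2 \geq \max(0, \lceil \log_2 \rho \rceil)$ reduces to $n_1, n_2 \geq 0$, which holds for every $n \in \mathbb N$, and the angle $\arccos(\rho/2^n)$ appearing in the right-hand side of the lemma becomes exactly $\arccos(1/2^n) = \gamma$. Thus the ZX-diagram built from $n$ branches with angle $\gamma$ is, by the lemma, provably equal in $\zx_c$ to the ZX-image of the white dot labelled $1$, independently of the chosen $n$.

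The steps I would carry out are as follows. First, I would verify that the three diagrams tikzfigc{00}, tikzfigc{01}, tikzfigc{02} in the statement correspond to three valid choices of $n$ (equivalently, $n$, $n+1$, and some related count) in the right-hand side of Lemma \ref{lem:white-dot-to-zx-generalised-form} with $\rho = 1$. Second, for each of these choices, invoke the lemma to rewrite the corresponding ZX-diagram as $[\tikzfig{ZW-to-ZX-white-dot-form}\,]_X$ applied to the white dot with parameter $1$, using $\zx_c$-derivable equalities. Third, chain these equalities: since all three diagrams are shown equal in $\zx_c$ to the same reference diagram (the translation of the white dot at parameter $1$), they are pairwise equal by transitivity. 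This step is, in effect, the same inductive move already used at the end of the proof of Lemma \ref{lem:white-dot-to-zx-generalised-form}, where adding one extra branch at angle $\arccos(1/2^{n+1})$ is proved to preserve the ZX-interpretation via Lemma \ref{lem:prod-cos} and rule \add{-c}.

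The only delicate point (and the likely main obstacle) is careful bookkeeping of the scalar prefactors: when passing from $n$ to $n+1$ branches, the cosine factor $\cos(\gamma) = 1/2^n$ that is folded out by Lemma \ref{lem:prod-cos} must be reabsorbed consistently, so that the overall scalar matches on every instance. Because the lemma was proved for an arbitrary $\rho$ with exactly this normalisation in mind, the factors telescope correctly; no additional axiom beyond those already used in Lemma \ref{lem:white-dot-to-zx-generalised-form} is needed. Since the proof is essentially a re-reading of the lemma at a single fixed value of $\rho$, the corollary follows without any new derivation, once the identification of the three diagrams with instances of the lemma is made explicit.
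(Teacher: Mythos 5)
Your reading is exactly right, and it matches the paper's treatment: the paper gives no separate derivation for Corollary \ref{cor:arccos-2^-n} precisely because it is Lemma \ref{lem:white-dot-to-zx-generalised-form} specialized to $\rho = 1$, where the constraint on $n$ degenerates to $n \geq 0$ and $\arccos(\rho/2^n)$ collapses to $\gamma = \arccos(1/2^n)$, so the different branch counts are all provably interchangeable in $\zx_c$.
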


\begin{lemma}
\label{lem:gn-inverse-c}
The green node \begin{tikzpicture}
	\begin{pgfonlayer}{nodelayer}
		\node [style=gn] (0) at (0,0) {$\alpha$};
	\end{pgfonlayer}
\end{tikzpicture} 
has an inverse if $\alpha\neq\pi\mod 2\pi$:
\[
\InputIfFileExists{gn-alpha-inverse-no-triangle.tikz}{}{\input{./figures/gn-alpha-inverse-no-triangle.tikz}}
\]
for $n\geq \log_2\left(\frac{1}{|cos(\alpha/2)|}\right)$ and $\beta = 2\arccos{\frac{1}{2^n\cos(\alpha/2)}}$.
\end{lemma}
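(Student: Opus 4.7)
The plan is to verify the stated equality semantically and then assemble a derivation in $\zx_c$ from previously established lemmas. Semantically, the scalar $\rz{$\alpha$}$ (as a $(0,0)$-spider) interprets as $1 + e^{i\alpha} = 2 e^{i\alpha/2}\cos(\alpha/2)$, which is non-zero exactly under the hypothesis $\alpha \neq \pi \bmod 2\pi$. Its $\otimes$-inverse must therefore interpret as $\frac{e^{-i\alpha/2}}{2\cos(\alpha/2)}$. The side condition $n \geq \log_2(1/|\cos(\alpha/2)|)$ guarantees $1/(2^n\cos(\alpha/2)) \in [-1,1]$, so that $\beta = 2\arccos(1/(2^n\cos(\alpha/2)))$ is a well-defined real angle satisfying $\cos(\alpha/2)\cos(\beta/2) = 1/2^n$; this last identity is the algebraic backbone of the construction.

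First I would tensor $\rz{$\alpha$}$ with the proposed inverse diagram and massage the resulting $(0,0)$-diagram so that the two cosine-producing spiders (at angles $\alpha$ and $\beta$) are arranged to match the left-hand pattern of \add{-c} / Lemma \ref{lem:prod-cos}. One application of Lemma \ref{lem:prod-cos} then fuses $\cos(\alpha/2)\cos(\beta/2)$ into $\cos(\gamma)$ with $\gamma = \arccos(1/2^n)$, which is exactly the scalar shape handled by Corollary \ref{cor:arccos-2^-n}. That corollary then absorbs the $1/2^n$ factor and leaves only pure phase scalars and stabilizer-angle debris behind.

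Next I would clear the remaining phase data. The factors $e^{\pm i\alpha/2}$ and $e^{\pm i\beta/2}$ coming from the identities $1+e^{i\alpha}=2e^{i\alpha/2}\cos(\alpha/2)$ and $1+e^{i\beta}=2e^{i\beta/2}\cos(\beta/2)$, together with the $e^{i\theta_i}$ scalars introduced on the side of \add{-c}, are all pulled into a single scalar branch via Lemma \ref{lem:multiplying-global-phases} and then annihilated in conjugate pairs using Lemma \ref{lem:bicolor-0-alpha} (the diagrammatic counterpart of $e^{i\varphi}\cdot e^{-i\varphi}=1$), with help from \kt{-c} and \so{-c} to align signs and fuse adjacent spiders. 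The combined diagram then collapses to the empty diagram, which is exactly the required inverse property.

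The main obstacle will be the phase-scalar bookkeeping around the non-linear axiom \add{-c}: each invocation produces extra $e^{i\theta_i}$ factors on the side, and one must track them carefully so that they pair up with the $e^{\pm i\alpha/2}$ and $e^{\pm i\beta/2}$ terms without leaving uncancelled phase debris. This is analogous to, but more delicate than, the bookkeeping already carried out in the proofs of Lemmas \ref{lem:white-dot-to-zx-generalised-form} and \ref{lem:add-bis}, which I would use as templates. The hypothesis $\alpha\neq\pi\bmod 2\pi$ enters only to ensure $\cos(\alpha/2)\neq 0$, which is what makes both $\rz{$\alpha$}$ invertible as a scalar and the angle $\beta$ well-defined in the first place.
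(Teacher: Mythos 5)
Your plan follows the same route as the paper's proof: the key observation is the identity $\cos(\alpha/2)\cos(\beta/2)=\cos(\gamma)$ with $\gamma=\arccos{\frac{1}{2^n}}$, which is exactly the shape needed for Lemma~\ref{lem:prod-cos}, after which Corollary~\ref{cor:arccos-2^-n} absorbs the $\frac{1}{2^n}$ factor and the remaining scalar debris is cleared via Lemmas~\ref{lem:multiplying-global-phases}, \ref{lem:bicolor-0-alpha}, \ref{lem:inverse}, and spider/$K$-rules. The paper's derivation proceeds in precisely this order, so your sketch is a correct high-level account of the same proof.
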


\begin{proof}
Notice that $\beta$ is well defined if $\alpha\neq\pi\mod 2\pi$. With these values of $n$ and $\beta$, $\cos(\alpha/2)\cos(\beta/2)=\cos(\gamma)$ with $\gamma = \arccos{\frac{1}{2^n}}$. Then:
\def\fig{gn-alpha-inverse-no-triangle-proof}
\begin{align*}

\eq{\so{-c}\\\kt{-c}\\\ref{lem:multiplying-global-phases}\\\ref{lem:inverse}}
\eq{\bo{-c}\\\ref{lem:k1}\\\ref{lem:inverse}}\\
\eq{\ref{lem:prod-cos}}
\eq{\ref{lem:inverse}\\\ref{lem:2-is-sqrt2-squared}\\\so{-c}\\\sth{-c}}
\eq{\bo{-c}\\\ref{lem:inverse}}\input{./figures/\fig/\fig_05.tikz}\\
\eq{\ref{cor:arccos-2^-n}}\input{./figures/\fig/\fig_06.tikz}
\eq{\ref{lem:inverse}}\input{./figures/\fig/\fig_07.tikz}
\end{align*}
\end{proof}

\subsection{Proof of Proposition \ref{prop:X-is-homomorphism}}
\label{prf:X-is-homomorphism}
Since we have built the set of rules $\zx_c$ upon the one in \cite{JPV} which is complete for Clifford+T, we basically just need to prove the result for the ZW-rules in which a parameter (different from $\pm 1$) appears: $1c$, $3b$, $4a$, $4b$ and $6c$. Notice that the rule $0c$ is obvious.\\
$\bullet$ $1c$:
\def\fig{rule-1c-proof}
\begin{align*}

~~\mapsto~~
\eq{\ref{lem:prod-cos}}
\end{align*}
where:
\begin{align*}
n_k=\max\left(0,\left\lceil \log_2(\rho_k)\right\rceil\right)\qquad&\qquad
n=n_1+n_2\\
\beta_k = \arccos{\frac{\rho_k}{2^{n_k}}}\qquad&\qquad
\beta = \arccos{\frac{\rho}{2^n}}\\
\gamma_k = \arccos{\frac{1}{2^{n_k}}}\qquad&\qquad
\gamma = \arccos{\frac{1}{2^n}}
\end{align*}
Notice that $\left\lceil \log_2(\rho_1\rho_2)\right\rceil=\left\lceil \log_2(\rho_1)+\log_2(\rho_2)\right\rceil \leq \left\lceil \log_2(\rho_1)\right\rceil+\left\lceil \log_2(\rho_2)\right\rceil$, so the result might not be precisely the one given by $\left[
\InputIfFileExists{white-dot-rho1-rho2.tikz}{}{\input{./figures/white-dot-rho1-rho2.tikz}}
\right]_X$, but it can be patched thanks to lemma \ref{lem:white-dot-to-zx-generalised-form}.\\
$\bullet$ $3b$: corollary \ref{cor:distribution}.\\
$\bullet$ $4a$: suppose $\rho_1\geq\rho_2$, then using lemma \ref{lem:white-dot-to-zx-generalised-form} to have the same $n$ on both sides:
\def\fig{rule-4a-proof}
\begin{align*}

~~\mapsto~~
\eq{\ref{cor:arccos-2^-n}}\\
\eq{\ref{cor:distribution}\\\ref{cor:arccos-2^-n}}
\eq{\ref{lem:add-bis}}
\end{align*}
with 
\begin{align*}
\beta_k &= \arccos{\frac{\rho_k}{2^n}}\qquad
\gamma = \arccos{\frac{1}{2^n}}\\
\theta_3 &=\arg(\rho_1e^{i\theta_1}+\rho_2e^{i\theta_2})\\
\lambda &= \arccos{e^{i\theta_1-\theta_3}\cos{\beta_1}+e^{i\theta_2-\theta_3}\cos{\beta_2}}\\
 &= \arccos{\frac{\rho_1e^{i\theta_1}+\rho_2e^{i\theta_2}}{e^{i\theta_3}2^n}}
\end{align*}
which is what $\left[
\begin{tikzpicture}
	\begin{pgfonlayer}{nodelayer}
		\node [anchor=west, style=none] (0) at (0.25, 0.5) {\footnotesize $\rho_1e^{i\theta_1}+\rho_2e^{i\theta_2}$};
		\node [style=white dot] (1) at (0, 0.5) {};
		\node [style=none] (2) at (0, -0.5) {};
	\end{pgfonlayer}
	\begin{pgfonlayer}{edgelayer}
		\draw (1) to (2.center);
	\end{pgfonlayer}
\end{tikzpicture}}
\right]_X$ gives.\\
$\bullet$ $4b$:
\def\fig{rule-4b-proof}
\begin{align*}

~~\mapsto~~
\eq{\bo{-c}\\\supp{-c}\\\ref{lem:2-is-sqrt2-squared}\\\ref{lem:bicolor-0-alpha}\\\ref{lem:inverse}}
\eq{\st{-c}\\\so{-c}\\\bo{-c}\\\ref{lem:bicolor-0-alpha}}
~~\mapsfrom~~
\end{align*}
$\bullet$ $6c$: First, using corollary \ref{cor:arccos-2^-n}, for $\gamma = \arccos{\frac{1}{2^n}}$,
\def\fig{lemma-for-rule-6c}
\begin{align*}

\eq{\ref{lem:inverse}\\\ref{lem:2-is-sqrt2-squared}\\\so{-c}\\\sth{-c}}
\eq{\so{-c}\\\bo{-c}\\\ref{lem:inverse}}
\eq{\ref{cor:arccos-2^-n}}
\end{align*}
then, with:
\begin{align*}
n &=\max\left(0,\left\lceil \log_2(\rho)\right\rceil\right)\\
\beta &= \arccos{\frac{\rho}{2^{n}}}\\
\gamma &= \arccos{\frac{1}{2^{n}}}
\end{align*}
\def\fig{rule-6c-proof-2}
\begin{align*}

~~\mapsto~~
\eq{\ref{lem:inverse}\\\bo{-c}\\\st{-c}\\\so{-c}}
\eq{\ref{lem:2-is-sqrt2-squared}\\\ref{lem:inverse}}
~~\mapsfrom~~
\end{align*}
This is enough to show that rule $6c$ stands, because the case $r=1$ has already been treated to show the completeness of $\zx$ for Clifford+T.
\qed

\end{document}